\documentclass[twoside, 12pt]{article}
\usepackage{pgfplots}
\usepackage{tikz}
\usepackage{subcaption}
\usepackage[a4paper,top=3.5cm,bottom=3cm,left=2.5cm,right=2cm,bindingoffset=0mm]{geometry}

\usepackage{aditya}
\tikzset{
    >=latex,
    pil/.style={
            draw,
      <-, 
      decorate,
      decoration={snake,,amplitude=.02cm, pre length=.2cm,post length=.2cm,}
              }}

\hypersetup{colorlinks=true,linkcolor=red, citecolor=blue, urlcolor=red}

\title{\bf \sc \textcolor{darkblue}{Reputational Spillovers}}

\author{
\begin{tabular}{ccc}
Aditya Kuvalekar
& \hspace{1cm} &
Anna Sanktjohanser\thanks{Kuvalekar: University of Essex. Email: \texttt{a.kuvalekar@essex.ac.uk}, Sanktjohanser: Toulouse School of Economics. Email: \texttt{anna.sanktjohanser@tse-fr.eu}. Kuvalekar thanks the Indian School of Business for their hospitality. This paper was written while he was visiting ISB. We thank Martino Banchio, Deepal Basak, Joyee Deb, Jack Fanning, Marina Halac, Johannes H\"orner, Elliot Lipnowski, Doron Ravid, and Alex Wolitzky.
}
\end{tabular}
}

\date{March 2026}
\begin{document}
\maketitle

\begin{abstract}
{\small \noindent}  \\
\vspace{-0.7cm}

We analyze a reputational bargaining game in which a central player negotiates simultaneously with two peripheral players. Each player is either rational or a commitment type who never concedes and insists on a fixed share, and concessions are publicly observed. The central player’s type is global, so actions in one dispute update beliefs in the other and generate reputational spillovers. The game admits a unique equilibrium, enabling a sharp comparison with the bilateral benchmark of \cite{abreu2000bargaining}. Spillovers are payoff-relevant if and only if a peripheral is uniquely the most reputable player initially. In that case, spillovers overturn the bilateral prediction that toughness pays: the central player is never strictly better off and can be strictly worse off; the strongest peripheral loses; and the weakest peripheral can benefit, especially when the center’s higher-stakes dispute is with the other peripheral.

\bigskip

{\it Keywords}: multilateral bargaining, reputation, spillovers.
\vspace{0.2cm}\\
JEL codes: C73, C78.
\end{abstract}

\begin{spacing}{1.1}
\section{Introduction}

A reputation for toughness is a strategic asset in bilateral bargaining. In the canonical reputational war of attrition of \cite{abreu2000bargaining}, the mere possibility that one party is a commitment type who never concedes improves her terms: delay is consistent with commitment, so the opponent concedes earlier. However, many important bargaining relationships are not isolated. Governments negotiate multiple disputes in parallel, large firms bargain simultaneously with suppliers and unions, and sovereign borrowers negotiate with several creditor classes at once. When concessions are publicly observed across disputes, a concession in one negotiation can immediately affect beliefs---and hence behavior---in all others. This leaves little scope for selective flexibility: yielding on one front may reveal rationality on every front. We ask whether the bilateral logic---that toughness pays---survives when a single reputation must be maintained across multiple negotiations.

To study these questions, we consider a three-player environment in which a central player bargains simultaneously with two peripheral opponents. Each bilateral negotiation follows a continuous-time war-of-attrition protocol: at any time a rational player may concede, concessions are publicly observed, and the non-conceding party receives a fixed share of the surplus. Each player is either rational or a commitment type who never concedes and insists on that share. The key feature is that the center has a \emph{global type}: the same underlying commitment (or lack thereof) governs her behavior in both negotiations. Consequently, a concession by the center in one dispute is informative in the other as well, so beliefs about the center must remain consistent across negotiations. This cross-negotiation learning creates a belief-consistency constraint: the center cannot be perceived as tough in one dispute while behaving flexibly in the other.

Our main finding is that this constraint can overturn the bilateral prediction that toughness benefits the player who possesses it: the center's equilibrium payoff is never higher than if the two negotiations occurred in isolation, and can, in fact, be strictly lower.
In contrast, a peripheral opponent who is \emph{weaker} in the bilateral sense can be strictly better off, and the distributional consequences are systematic: the strongest peripheral never gains---and can strictly lose---from spillovers, while the weakest peripheral never loses and may strictly gain. The weakest peripheral benefits most precisely when the center's \emph{other} negotiation is higher-stakes.

These dynamics find natural parallels in practice. Sovereign debt
restructurings are a canonical example: a sovereign debtor faces
multiple creditor classes simultaneously, and any concession to one
class immediately reveals flexibility to the others. Our model
suggests that holdout creditors---often the weakest in terms of
individual exposure---can extract outsized terms precisely because
the sovereign's higher-stakes negotiation lies elsewhere. The
sovereign, constrained by a global reputation, cannot appear tough
selectively. A similar logic arises in labor negotiations when a
firm bargains simultaneously with multiple unions. If management
concedes to one union, this is publicly observed and immediately
undermines its bargaining position with the others---it cannot
credibly claim toughness on one front after yielding on another.
Our results predict that the smaller union benefits
disproportionately from this constraint, especially when
management's more consequential negotiation is with the larger
union.

We establish these conclusions by providing a complete characterization of equilibrium behavior and payoffs. Characterizing multilateral reputational bargaining is challenging because the public state is a vector of evolving reputations, and any concession induces cross-negotiation belief updates that reshape continuation values in all unresolved disputes. As a result, equilibrium strategies typically may feature non-stationary—and potentially history-dependent—concession hazards. Despite this, we show that the simultaneous three-player environment admits a \emph{unique} equilibrium with a sharp phase structure (Figure~\ref{Figure: schematic}). Along the path where neither dispute has yet been resolved, behavior has the following form. There may first be a time-0 adjustment, implemented through an immediate concession atom. If one peripheral is initially sufficiently reputable, this is followed by an initial phase in which that peripheral remains inactive while the center bargains with the other peripheral. Once all players are active, all players concede at the same constant rate as in the canonical bilateral benchmark until a common finite terminal time at which remaining players are believed committed with probability one. The role of the initial phase is to reconcile cross-negotiation belief consistency with the incentives created by a global reputation.

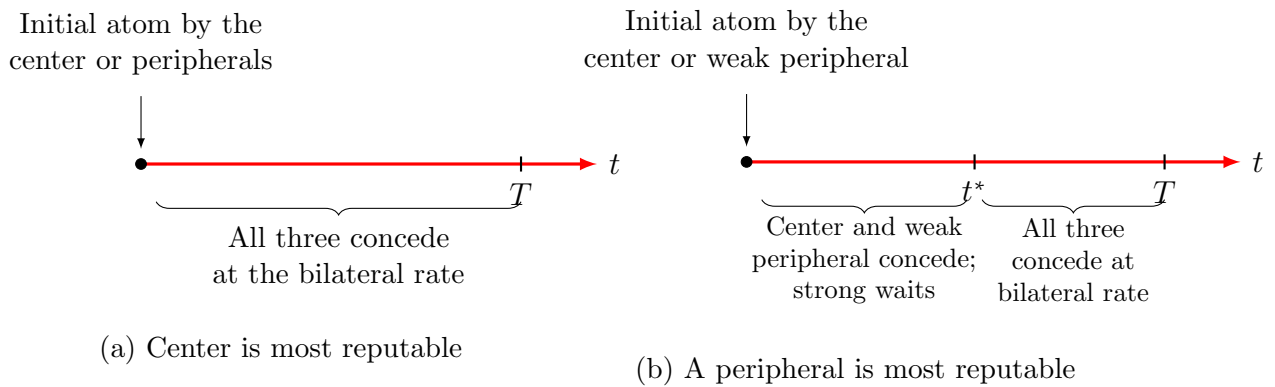
\begin{figure}[ht]
\centering
\centering\hspace{-2cm}
\begin{minipage}{0.45\textwidth}
\centering
\begin{tikzpicture}
  
  \draw[->, very thick, red] (0,0) -- (6,0) node[right, black]{$t$};
  \draw[thick] (5,0.1) -- (5,-0.1) node[below]{$T$};
  \draw[fill=black] (0,0) circle (2pt);
  \draw[decorate, decoration={brace, amplitude=5pt, mirror}]
    (0.2,-0.5) -- (5,-0.5);
  \node[align=center, font=\small] at (0, 1.6)
    {Initial atom by the\\center or peripherals};
  \draw[->, thin] (0,0.9) -- (0,0.2);
  \node[align=center, font=\small] at (2.6,-1.2)
    {All three concede\\at the bilateral rate};
\end{tikzpicture}
\vspace{-0.2cm}
\subcaption{Center is most reputable}\label{Figure: schematic center}
\end{minipage}
\begin{minipage}{0.45\textwidth}
\centering
\begin{tikzpicture}
  \draw[->, very thick, red] (0,0) -- (6.5,0) node[right, black]{$t$};
  \draw[->, very thick, white] (0,2.13) -- (6.5,2.13) node[right, white]{$t$};
  \draw[thick] (3,0.1) -- (3,-0.1) node[below]{$t^*$};
  \draw[thick] (5.5,0.1) -- (5.5,-0.1) node[below]{$T$};
  \draw[fill=black] (0,0) circle (2pt);
  \draw[decorate, decoration={brace, amplitude=5pt, mirror}]
    (0.2,-0.5) -- (2.9,-0.5);
  \draw[decorate, decoration={brace, amplitude=5pt, mirror}]
    (3.1,-0.5) -- (5.5,-0.5);
  \node[align=center, font=\small] at (0, 1.6)
    {Initial atom by the\\center or weak peripheral};
  \draw[->, thin] (0,0.9) -- (0,0.2);
  \node[align=center, font=\footnotesize] at (1.55,-1.3)
    {Center and weak\\peripheral concede;\\strong waits};
  \node[align=center, font=\footnotesize] at (4.3,-1.3)
    {All three\\concede at\\bilateral rate};
\end{tikzpicture}
\vspace{-0.2cm}
\subcaption{A peripheral is most reputable}\label{Figure: schematic peripheral}
\end{minipage}
\caption{Schematic equilibrium structure.}\label{Figure: schematic}
\end{figure}

The equilibrium structure delivers a simple and transparent condition for when spillovers are payoff relevant relative to the bilateral benchmark due to \cite{abreu2000bargaining} (henceforth AG).\footnote{As a benchmark, we compare two separate bilateral AG interactions. In the bilateral AG benchmark, the weaker player (one who is more likely to be a commitment type) concedes with an atom to the stronger player up front, and then the two players start conceding at a constant hazard rate. } Spillovers overturn the payoff rankings of AG
\emph{precisely when the player with the highest initial reputation is a peripheral}. At the beginning of the game, if the center is at least as reputable as both peripherals, the public linkage of negotiations does not distort incentives: equilibrium outcomes coincide with those from two separate bilateral reputational wars of attrition (Figure~\ref{Figure: schematic center}). By contrast, when a peripheral begins with the strongest reputation (Figure~\ref{Figure: schematic peripheral}), that strong peripheral can profitably wait at the outset. This changes the center's incentives because any concession by the center is effectively \emph{global}: it immediately resolves both disputes and reveals that the center is the flexible (rational) type. In this region, sustaining gradual concession requires a nontrivial reallocation of who concedes early and how quickly, and this reallocation drives the starkly different payoff implications relative to the AG benchmark.

The economic mechanism is intuitive. When the strongest peripheral waits initially, the center and the weaker peripheral are the only parties ``bargaining in earnest'' at the beginning. But the center's tradeoff in this early phase differs from the bilateral benchmark: conceding is more costly because it settles not one dispute but both, sacrificing the option value of holding out in the other negotiation. To keep the center willing to delay rather than concede immediately, the weaker peripheral must concede \emph{faster} early on than he would in an isolated bilateral negotiation. Once all parties concede at positive rates, however, the logic of reputational bargaining pins down local incentives in a way that forces posteriors to align and concession rates to settle down to the familiar bilateral rate. The equilibrium therefore cannot absorb the early asymmetry in concession behavior smoothly over time. Instead, it reconciles early ``catch-up'' dynamics through \emph{time-zero behavior}: relative to independent bilateral bargaining, immediate concessions become less concentrated on the weakest peripheral and can shift onto the center, even in the center's negotiation with the weaker opponent. This is the precise channel through which spillovers can reverse the standard bilateral intuition.

These forces generate sharp payoff implications. When the center is initially weakest, spillovers do not change the center's aggregate payoff relative to two separate bilateral negotiations, but they redistribute surplus across the peripherals: the strongest peripheral loses and the weaker peripheral gains. When the center's initial reputation lies between those of the two peripherals, spillovers are most consequential: the center's equilibrium payoff falls below the sum of her bilateral benchmark payoffs, the strongest peripheral is strictly worse off, and the weakest peripheral benefits. Moreover, the magnitude of the center's loss and the weakest peripheral's gain increases with the stakes of the center's negotiation with the stronger peripheral. Put differently, a peripheral can benefit from the center ``fighting a bigger battle elsewhere,'' because the center's global reputation makes any concession especially costly when another high-stakes dispute is unresolved.

Furthermore, Proposition~\ref{prop:vanishing} shows that the payoff reversal is not driven by large commitment probabilities. Even as all three commitment probabilities vanish, the center remains strictly worse off than in the bilateral benchmark, with a payoff gap that converges to a positive limit when relative reputations are held fixed.

We also show that the disadvantage of a global reputation is not an artifact of the baseline timing, information assumptions or the number of peripherals. We study three extensions. First, we consider a sequential environment in which the center bargains with one peripheral first and only subsequently bargains with the other. In this case, the prospect of the future negotiation enters the center's incentives in the first stage and can induce asymmetric concession behavior even when initial reputations are symmetric, again potentially requiring an immediate concession by the center. Second, we relax full observability of concessions and assume that the uninvolved peripheral observes only that an agreement has been reached elsewhere (and when), but not who conceded. Then an agreement itself is informative about the center's flexibility, creating a discrete belief update that can trigger an immediate concession by the center in the remaining negotiation.

Finally, we numerically study a four-player star—a center bargaining simultaneously with three peripherals. Our simulations suggest that both the phase structure and the qualitative insight that toughness can be a liability for the central player carry over. Together, these extensions underscore that the key friction is the combination of a global type and cross-negotiation learning.

In summary, in contrast to the bilateral benchmark, where toughness pays, we show that maintaining a single reputation across multiple negotiations can reverse that logic, systematically shifting surplus away from the center and toward weaker opponents.

\subsection{Related Literature}\label{Section: literature}

Our paper contributes to the literature on reputational bargaining by identifying a clean, empirically relevant environment in which reputational considerations can \emph{harm} the player whose reputation is global. Much of the literature on reputational bargaining derives predictions for the limiting case where the probability of facing a committed opponent is small. In contrast, we treat these priors as fixed primitives and characterize equilibrium behavior and payoffs at these interior priors. This focus is motivated by evidence that delay and disagreements are empirically salient in bargaining environments. Using data from eBay’s Best Offer platform, \cite{backus2020sequential} show that only about one‑third of bargaining threads end in immediate agreement and that many negotiations end in disagreement, including after non‑trivial delay. Related evidence comes from laboratory implementations of the AG protocol: \cite{embrey2015bargaining} implement the two‑stage demand‑then‑concession environment and find that subjects respond to the possibility of obstinate types, yet bargaining features more aggressive demands and longer conflicts than the benchmark, consistent with non‑negligible ``effective'' obstinacy. These patterns motivate treating commitment probabilities as fixed primitives (rather than vanishingly small), so that the spillover effects we characterize are first‑order; at the same time, we also study the vanishing-prior cases in Section~\ref{sec:vanishing} to show that reputational spillovers persist qualitatively even in the limit.

Our paper builds on the classic reputational bargaining models (\cite{abreu2000bargaining} and follow-up papers), where toughness pays in bilateral negotiations: players who may be ``hard'' types obtain better terms.\footnote{For follow-up papers, see for instance \cite{AbreuPearce2007,AbreuPearceStacchetti2015,kambe1999bargaining,AbreuSethi2003,Fanning2016,Fanning2018,Fanning2021,Wolitzky2012,Sanktjohanser2022}
 among others. See \cite{fanning2022reputational} for a survey.} We depart from this benchmark by examining a multilateral environment. This connects to the network bargaining literature \citep{manea2011bargaining,abreu2012bargaining}, which highlights how outcomes depend on network position, and to \cite{compte2002role}, who show that toughness need not always pay once players face multiple opponents with outside options (see also \cite{atakan2014bargaining}).\footnote{Reversals of the ``toughness pays'' prediction arise through different channels in other contexts, e.g., public information about commitment \citep{basak2024social} and collective bargaining with endogenous coalition structure \citep{ma2023efficiency}. Our mechanism is distinct: it works through belief spillovers across simultaneously observed negotiations.} As emphasized by \cite{fanning2022reputational}, a central difficulty in multilateral reputational bargaining is that continuation values depend on multiple reputations, so equilibrium concession rates typically evolve over time. We show that in a parsimonious multilateral environment---a central player bargaining simultaneously with two opponents---this evolution can nevertheless be characterized sharply, and it yields systematic payoff reversals relative to independent bilateral bargaining.

Our work is also related to the literature on multilateral wars of attrition. \cite{eraslan2023multilateral} study a three-player war of attrition with majority rule and show that the central player can be worse off than in bilateral bargaining; in our setting, by contrast, inefficiency arises from reputational spillovers across simultaneous negotiations. \cite{ozyurt2015bargaining} analyzes sequential bargaining between a buyer and two sellers, where frictions emerge from the outside option; we instead consider simultaneous bargaining, in which a single reputation must be maintained across negotiations. \cite{kambe2019n} examines multilateral wars of attrition with incomplete information, where equilibrium exit patterns depend on payoff asymmetries; our contribution is to show that reputational spillovers across negotiations provide a distinct channel through which central players may be disadvantaged.

The remainder of the paper is organized as follows. Section~\ref{sec:model} presents the model. Section~\ref{Section: benchmark} recalls the bilateral benchmark with independent negotiations. Section~\ref{Section: analysis and main result} characterizes the unique equilibrium and derives the payoff implications and comparative statics. Section~\ref{sec:vanishing} studies the limit of our model as the probability of commitment types goes to zero. And finally, Section~\ref{Section: Extensions} studies extensions that relax simultaneity and full observability. Proofs are collected in the Appendix.

\section{Model}\label{sec:model}

\paragraph{Players and negotiations.}
There are three players, $A$, $B$, and $C$.\footnote{For expositional clarity, we refer to $C$ as ``she'' and to $A$ and $B$
as ``he'' throughout.}
Player $C$ is simultaneously engaged in two bilateral negotiations: one with $A$ and one with $B$.
Players $A$ and $B$ do not negotiate with each other.
For $i\in\{A,B\}$, the negotiation between $i$ and $C$ yields a surplus $\pi_{iC}>0$ that is realized once either party
concedes in that negotiation. We normalize $\pi_{BC}=1$.

\paragraph{Time and actions.}
Time is continuous, $t\in[0,\infty)$, and all players discount at a common rate $\discount>0$.
At any time $t$, each rational player may either wait or concede. A concession is irreversible and publicly observed. 
If, in the negotiation between $i\in\{A,B\}$ and $C$, one party concedes to the other at time $t$, then the party receiving
the concession obtains $\alpha\pi_{iC}$ and the conceding party obtains $(1-\alpha)\pi_{iC}$, with $\alpha>1/2$, discounted by $e^{-\discount t}$.

Player $C$ may, if rational, concede to $A$, to $B$, or to both; a concession to $i$ settles the negotiation between $i$ and $C$ immediately.
(As shown below, in equilibrium any positive-time concession by $C$ is simultaneous across negotiations.)

\paragraph{Types and information.}
At time $0$, Nature draws each player's type. Player $i\in\{A,B,C\}$ is behavioral with probability $z_i(0)\in(0,1)$ and rational
with probability $1-z_i(0)$. Types are independent across players, and each player privately observes her own type.
The prior $(z_A(0),z_B(0),z_C(0))$ is common knowledge.

A behavioral type never concedes. A rational type may concede at any time.
Importantly, player $C$ has a \emph{single global type} that governs her behavior in both negotiations:
if $C$ is behavioral she never concedes in either negotiation; if $C$ is rational she may concede in either negotiation.

\paragraph{Public histories and filtration.}
A \emph{concession event} is a triple $(i,j,s)$ indicating that player $i$ conceded to counterparty $j$ at calendar time $s$.
A \emph{public history} at time $t$, denoted $h^t$, consists of $t$ together with the (finite) list of all concession
events that occurred strictly before $t$. Public histories $h^t$ record all concession events that occur strictly before calendar time $t$. If one or more concessions occur at calendar time $t$, they are publicly observed immediately
and may trigger further concessions with zero delay. We denote by $h^{t^+}$ the public history immediately
after all such time-$t$ concessions have been realized. Thus $t^+$ refers to a post-event history
node at the \emph{same} calendar time $t$ (no time elapses). Let $\mathcal H$ denote the set of public histories. Let
$(\mathcal F_t)_{t\ge 0}$ be the filtration generated by public histories. 

\paragraph{Strategies as stopping times.}
A rational player can take only one irreversible action (concede). Accordingly, a strategy for a rational player
is equivalently specified by a concession time, i.e.\ an $(\mathcal F_t)$-stopping time with values in $[0,\infty]$.

For $i\in\{A,B\}$, a strategy specifies an $(\mathcal F_t)$-stopping time (possibly random) $\tau_i\in[0,\infty]$ at which $i$ concedes to $C$.
For $C$, we \emph{do not impose simultaneity a priori}: a strategy specifies a pair of stopping times
$(\tau_C^A,\tau_C^B)\in[0,\infty]^2$, where $\tau_C^A$ (resp.\ $\tau_C^B$) is the time at which $C$ concedes to $A$ (resp.\ $B$).
Behavioral types never concede, i.e.\ $\tau_i=\infty$ for $i\in\{A,B\}$ and $\tau_C^A=\tau_C^B=\infty$.

\paragraph{Payoffs and expected utilities.}
Payoffs are additive across negotiations.
Fix a strategy profile $\sigma$ and a belief system $z$. Under $(\sigma,z)$, let
$\tau_A,\tau_B\in[0,\infty]$ be the (possibly random) times at which $A$ and $B$ concede to $C$, and let
$(\tau_C^A,\tau_C^B)\in[0,\infty]^2$ be the (possibly random) times at which $C$ concedes to $A$ and to $B$, respectively,
where $\infty$ means ``never concedes.''

For each $i\in\{A,B\}$, define the resolution time of the negotiation between $i$ and $C$ by
\[
T_{iC}:=\tau_i\wedge \tau_C^i.
\]
If $T_{iC}=\infty$, the negotiation between $i$ and $C$ is never resolved and yields payoff $0$ to both parties. If both parties concede simultaneously, we assume they split the surplus equally, so each receives $\tfrac12\pi_{iC}$ (discounted). If $T_{iC}<\infty$, then the realized (discounted) payoff to player $i$ from this negotiation is
\[
U_i^{\,iC}
:= e^{-\discount T_{iC}}\pi_{iC}\Big(
(1-\alpha)\mathbf 1\{\tau_i<\tau_C^i\}
+\alpha\mathbf 1\{\tau_C^i<\tau_i\}
+\tfrac12\mathbf 1\{\tau_i=\tau_C^i<\infty\}
\Big),
\]
and the realized payoff to $C$ from the same negotiation is
\[
U_C^{\,iC}
:= e^{-\discount T_{iC}}\pi_{iC}\Big(
\alpha\mathbf 1\{\tau_i<\tau_C^i\}
+(1-\alpha)\mathbf 1\{\tau_C^i<\tau_i\}
+\tfrac12\mathbf 1\{\tau_i=\tau_C^i<\infty\}
\Big).
\]
Total realized payoffs are
\[
U_A:=U_A^{\,AC},\qquad U_B:=U_B^{\,BC},\qquad U_C:=U_C^{\,AC}+U_C^{\,BC}.
\]

At any public history $h^t$, a rational type of player $k\in\{A,B,C\}$ evaluates strategies by expected utility:
their continuation value is the conditional expectation
\[
V_k(h^t):=E_{\sigma,z}\!\big[\,U_k \,\big|\, h^t\big].
\]

\paragraph{Equilibrium.}
The solution concept is weak Perfect Bayesian equilibrium (PBE). A weak PBE is a pair $(\sigma, z)$ consisting of a strategy
profile $\sigma$ and a belief system $z=(z_A,z_B,z_C)$ that assigns to each public history $h^t$ a posterior distribution over types such that: (i) the strategy maximizes a player's expected utility given beliefs, i.e., $\sigma$ is sequentially rational
given $z$, and (ii) $z$ is derived from $\sigma$ by Bayes' rule at every history reached with positive probability under
$\sigma$; off-path beliefs are unrestricted. Throughout, ``equilibrium'' refers to weak PBE.

\paragraph{Continuation games.}
Fix a public history $h^t$ at which exactly one negotiation remains contested, between
player $i\in\{A,B\}$ and player $C$. Let $z_i(h^t)$ and $z_C(h^t)$ denote the common posterior
probabilities at $h^t$ that $i$ and $C$ are behavioral types. Re-index time by
\emph{continuation time} $\tau:=s-t\ge 0$. Then the continuation subgame from $h^t$ is
strategically equivalent (up to this change of time origin) to the bilateral AG
reputational war of attrition between $i$ and $C$ over surplus $\pi_{iC}$, with initial
reputations $(z_i(h^t),z_C(h^t))$. Since the bilateral AG game has a unique equilibrium,
play in any PBE after $h^t$ must coincide with this unique AG equilibrium. In particular,
the AG equilibrium may feature an atom at continuation time $\tau=0$, corresponding to an
\emph{immediate} concession at $t+$.

To record the relevant objects, consider a generic bilateral AG game between players $i$ and
$j$ over surplus $\pi>0$, with initial reputations $(z_i,z_j)\in[0,1)^2$.\footnote{We interpret
expressions below by continuity at boundary cases; in particular $g_i^j(z_i,0)=1$ for $z_i>0$
and $g_i^j(0,z_j)=0$.}
Let
\[
g_i^{\,j}(z_i,z_j):=\max\left\{1-\frac{z_j}{z_i},\,0\right\}
\]
denote the equilibrium probability of an \emph{immediate concession by $j$ to $i$} at
continuation time $\tau=0$ (i.e.\ the size of $j$'s atom at $\tau=0$). We sometimes just write $g^j_i(t)$ for short, when $z_j$ and $z_i$ are understood. Let
$V^{AG}_{ij}(\pi;z_i,z_j)$ denote the equilibrium payoff to player $i$'s \emph{rational type} at
continuation time $\tau=0$ in this bilateral AG game. Then
\[
V^{AG}_{ij}(\pi;z_i,z_j)
=\pi\Big((1-\alpha)+(2\alpha-1)\,g_i^{\,j}(z_i,z_j)\Big).
\]
When $\pi$ is clear from context we write $V^{AG}_{ij}(z_i,z_j)$ for short. Finally, define
the \emph{time-0 AG benchmark payoff} to player $i$ against $j$ by
\[
v^{AG}_{ij}:=V^{AG}_{ij}(\pi_{ij};z_i(0),z_j(0)).
\]

Since continuation play after any concession is uniquely determined, equilibrium behavior is fully characterized by strategies in the no-concession subgame. We therefore summarize strategies in the initial phase -- when both negotiations are still contested -- by concession‑time distributions.

\paragraph{Strategies in the no-concession subgame.}
We focus on public histories at which both negotiations are still contested (i.e., no
concession has occurred yet). In the no-concession subgame, the public history is summarized by calendar time $t$ alone. We represent each player's
equilibrium behavior by a cumulative distribution function (cdf) of the concession time conditional on no concession before
that time. For $i\in\{A,B\}$, let $\cdf_i:[0,\infty)\to[0,1]$ denote the probability that $i$ has conceded to $C$ by time $t$,
conditional on no prior concession. For $C$, let $\cdf_C^A(t)$ (resp.\ $\cdf_C^B(t)$) denote the probability that $C$ has
conceded to $A$ (resp.\ $B$) by time $t$, conditional on no prior concession. These cdfs are nondecreasing and right-continuous,
with $\cdf_i(0-)=0$.\footnote{Throughout, we reserve the notation $(\cdot-)$
for left limits of cdfs, e.g.\ $\cdf(t-)=\lim_{s\uparrow t}F(s)$. We do not use $(\cdot+)$ to denote
right limits; all cdfs are right-continuous and we write $F(0)$ for the time-0 atom.}

Here $F_i(t)$ is unconditional over types (but conditional on no earlier concession in the public history).
Since behavioral types never concede, $F_i(t)\le 1-z_i(0)$ for all $t$.\footnote{In the equilibrium characterized below, all rational concession mass is exhausted by a
finite time $T$, so $F_i(T)=F_i(\infty)=1-z_i(0)$.}

\blemma\label{Lemma: C concedes to both once she concedes to one}
In any PBE, at any public history in which both negotiations are still contested, if $C$ concedes to one peripheral at
some calendar time $t\geq0$, then $C$ concedes to the other peripheral immediately as well (i.e.\ at calendar time $t+$).
Consequently, in the no-concession subgame it is without loss to restrict attention to strategies in which
$\cdf_C^A(t)=\cdf_C^B(t)$ for all $t\geq0$.
\elemma

\bprf
Fix a PBE $(\sigma,z)$. Consider any public history at which both negotiations are still contested and at calendar time $t\geq0$
player $C$ concedes to $A$. Since behavioral types never concede, this action reveals that $C$ is rational; hence at the
post-concession history (calendar time $t$) the posterior belief assigns probability one to $C$ being rational, i.e.\ $z_C(t+)=0$.

From time $t+$ onward, the only remaining contested negotiation is between $B$ and $C$, and the continuation game between $B$ and $C$ is exactly the
bilateral AG war of attrition with initial reputations $(z_B(t+),z_C(t+))=(z_B(t+),0)$. Because $z_B(t+)\in(0,1)$, player $C$
has strictly lower reputation than $B$. In the unique AG equilibrium, the lower-reputation player concedes at continuation time
$0$ with probability $1-\frac{z_C(t+)}{z_B(t+)}=1$. Therefore, sequential rationality in the continuation subgame implies that
$C$ concedes to $B$ immediately at calendar time $t+$.\footnote{Hence, if $C$ concedes with positive probability to $A$ at $t=0$, $B$ will not concede with positive probability to $C$ at $t=0$.}

The same argument applies if $C$ concedes to $B$ first. Hence, whenever $C$ concedes at a positive calendar time while both
negotiations are still contested, she concedes on both negotiations immediately. This justifies representing $C$'s no-concession strategy by
a single cdf $\cdf_C$ for simultaneous concessions (setting $\cdf_C(t):=\cdf_C^A(t)=\cdf_C^B(t)$ for $t\geq0$).
\eprf
Because conceding in one dispute fully reveals $C$’s type and forces immediate capitulation in the other, any concession by $C$ is effectively a concession on both pies. Henceforth we impose $\cdf_C^A(t)=\cdf_C^B(t)$ for all $t\geq0$ in the no-concession subgame and write $\cdf_C$ for this common
cdf on $[0,\infty)$. Accordingly, for $t\geq0$, $\cdf_C(t)$ denotes the probability that $C$ has conceded (and therefore, by the lemma,
has conceded to both $A$ and $B$) by time $t$, conditional on no concession strictly before $t$.

\paragraph{Regularity of equilibrium strategies.}
We restrict attention to equilibria in which, in the no-concession subgame, each $\cdf_i$
may have
an atom at $t=0$ but has no atoms on $(0,\infty)$.\footnote{One can show that the PBE we characterize is the unique PBE with finitely many atoms. A proof is available on the authors' websites.} Formally, for each relevant cdf $F$ we assume:
(i) $F$ is right-continuous and nondecreasing with $F(0-)=0$;
(ii) $F$ is absolutely continuous on $(0,\infty)$ with density $f(\cdot)=F'(\cdot)$;
(iii) $f$ is piecewise continuous on $(0,\infty)$.
Accordingly, the hazard rate $\rate(t):=\frac{f(t)}{1-F(t)}$ is well-defined a.e.\ on $\{t>0:F(t)<1\}$ and is piecewise continuous.

This regularity condition is imposed on the equilibrium profile being characterized; the PBE definition permits deviations
that do not satisfy these regularity properties.

\paragraph{Posterior dynamics on the no-concession path.}
Along public histories with no concession up to calendar time $t$, let $z_i(t)$ denote the
posterior probability that player $i$ is behavioral.
In the no-concession subgame, Bayes' rule implies for $t>0$
\[
z_i(t)=\frac{z_i(0)}{1-\cdf_i(t)}\qquad\text{whenever }\cdf_i(t)<1.
\]
At $t=0^+$, $z_i(0^+)=\frac{z_i(0)}{1-F_i(0)}$.

\section{Benchmark: Two separate AG interactions}\label{Section: benchmark}

We recall the equilibrium outcome in the AG benchmark without
reputational spillovers. In this benchmark, the two negotiations are \emph{independent}:
the history of play and the identity/timing of concessions in the negotiation between $A$
and $C$ are observed only by $A$ and $C$ (and not by $B$), and analogously for the
negotiation between $B$ and $C$. Equivalently, beliefs about $C$'s type do not spill over
across negotiations, so each negotiation is a separate bilateral reputational war of attrition
\`a la AG with a single commitment type.

Fix a bilateral negotiation over surplus $\pi_{ij}>0$ between players $i$ and $j$, with
initial reputations $z_i(0),z_j(0)\in(0,1)$. Let $F^{AG}_{ji}(t)$ denote the (unconditional)
probability that $j$ has conceded to $i$ by time $t$ in the unique AG equilibrium.
In each bilateral negotiation, at most one player concedes with positive probability at
time $0$. Player $j$ concedes with positive probability to player $i$ at time $0$ if and only if
$z_j(0)<z_i(0)$. In particular,
\begin{align}
F^{AG}_{ji}(0)=\max\left\{1-\frac{z_j(0)}{z_i(0)},\,0\right\}.
\tag{Atom:AG}\label{Equation: atom AG}
\end{align}
When $z_j(0)<z_i(0)$, we refer to $j$ as the \emph{weak} player and to $i$ as the \emph{strong}
player.

After time $0$, players concede at a constant rate that makes the opponent indifferent
between waiting and conceding. Specifically, player $i$ is indifferent if
\[
r(1-\alpha)=(2\alpha-1)\,\frac{f^{AG}_{ji}(t)}{1-F^{AG}_{ji}(t)},
\]
where $\frac{f^{AG}_{ji}(t)}{1-F^{AG}_{ji}(t)}$ is $j$'s hazard rate of conceding at time $t$.
Hence, after time $0$, both players concede at the constant AG hazard
\[
\lambda^{AG}=\frac{r(1-\alpha)}{2\alpha-1}.
\]

There is a finite terminal time $T^{AG}_{ij}$ by which the posterior probability of being
a commitment type reaches $1$ and concessions stop. Let $T^{AG}_i$ denote the time at which
player $i$ is believed committed with probability $1$ (conditional on not conceding with
positive probability at time $0$). Then
\[
T^{AG}_{ij}=\min\{T^{AG}_i,T^{AG}_j\},
\qquad
T^{AG}_i=-\frac{1}{\lambda^{AG}}\log z_i(0).
\]

Given this equilibrium play, player $i$'s expected payoff in the bilateral negotiation over
surplus $\pi_{ij}$ is
\[
v^{AG}_{ij}
= \pi_{ij}\Big((1-\alpha)+(2\alpha-1)\,F^{AG}_{ji}(0)\Big).
\]
Thus, in the bilateral benchmark all payoff-relevant asymmetry is absorbed by a single time-0 adjustment; after that, concession hazards are stationary at $\lambda^{AG}$.

In summary, the AG bilateral benchmark yields two key payoff properties:
the stronger player receives strictly more than $(1-\alpha)\pi_{ij}$, while the weaker
player receives exactly $(1-\alpha)\pi_{ij}$. We emphasize these properties because
reputational spillovers in the three-player game overturn them, leading to qualitatively
different concession behavior and, consequently, different payoffs.

\section{Analysis and Main Result}\label{Section: analysis and main result}

This section characterizes equilibrium behavior and payoffs in the public, three-player
bargaining game. The key friction is that player $C$ has a \emph{global} type, so learning
about $C$ in one dispute necessarily carries over to the other. In particular, whenever both
negotiations are still unresolved, any concession by $C$ reveals that she is rational and
therefore triggers immediate resolution of \emph{both} disputes (in equilibrium, $C$ never
concedes in only one dispute at a positive time). After the first concession, the game reduces
to a standard bilateral AG reputational war of attrition in the remaining dispute.

We therefore focus on the \emph{no-concession subgame}, i.e.\ the path on which neither dispute
has ended yet. Along this path, each player's posterior reputation $z_i(t)$ drifts upward
because not conceding is evidence of being the behavioral (commitment) type. Two basic
discipline results sharply restrict equilibrium dynamics.

First, there is an \emph{endogenous deadline}: there exists a finite time $T<\infty$ such that,
conditional on no concession up to $T$, all remaining players are believed behavioral with
probability one. Equivalently, all rational types exhaust their concession probability mass by
$T$.

Second, before $T$ the game cannot feature prolonged ``pauses.'' On any interval strictly
before $T$, at most one player can be inactive (otherwise the remaining active player can
profitably shift concession mass earlier and save discounting). Moreover, once a player starts
conceding with positive density, she does not stop before $T$ (otherwise conceding right
before a pause is strictly dominated by waiting an instant).

Together, these restrictions imply a sharp phase structure. There may be an initial phase in
which exactly one peripheral is inactive; after at most that initial phase, all three players
are active, posteriors align, and everyone concedes with the constant bilateral AG
hazard until the common deadline. All departures from two independent AG
negotiations are driven by a single question: \emph{is a peripheral initially reputable enough
to remain inactive at the outset?} If not, beliefs reconcile immediately at $t=0^+$ and the
public-linkage of negotiations is payoff-irrelevant. If instead a peripheral is initially
dominant, equilibrium begins with a two-player phase in which that peripheral waits while
$C$ bargains with the other peripheral; because any concession by $C$ is effectively global,
this initial phase distorts early concession incentives and forces an immediate ``belief
reconciliation'' through a time-zero concession.

A notable implication of the indifference conditions is piecewise stationarity: after at most one initial two-player phase (and a possible time-zero adjustment), concession hazards are constant and coincide with the bilateral AG hazard.

To state the equilibrium precisely, define for each player $i\in\{A,B,C\}$ the time at which
$i$'s posterior reaches one along the no-concession path,
\[
T_i \;:=\; \inf\{t\ge 0:\ z_i(t)=1\}, \qquad \inf\varnothing=\infty.
\]
We say that player $i$ is \emph{active} on an interval $I\subset(0,\infty)$ if $F_i$ is strictly
increasing on $I$ (equivalently $\lambda_i(t)>0$ for a.e.\ $t\in I$), and \emph{inactive} on $I$
if $F_i$ is constant on $I$ (equivalently $\lambda_i(t)=0$ for a.e.\ $t\in I$). Finally, define
the activation time
\[
t_i \;:=\; \inf\{t>0:\ F_i(t)>F_i(0)\}, \qquad \inf\varnothing=\infty.
\]

\begin{proposition}[Equilibrium structure]\label{prop:structure}
There exists a unique equilibrium. Assume without loss of generality that $z_A(0)\ge z_B(0)$.
Along the no-concession path:

\begin{enumerate}
\item[(i)]
There exists $T<\infty$ such that $T_i=T$ for all $i\in\{A,B,C\}$.

\item[(ii)]
If $z_A(0)>\max\{z_B(0),z_C(0)\}$, then $t_A\in(0,T)$ and $t_B=t_C=0$ with 

\[
\lambda_B(t)=(1+\pi_{AC})\lambda^{AG}, \qquad \lambda_C(t)=\lambda^{AG}
\quad \text{for a.e. } t\in(0,t_A).
\]
The identity and size of the time-$0$ atom (at most one of $B$ and $C$ places mass at $0$) are uniquely pinned down
by the posterior-alignment condition at the activation time:
\[
z_B(t_A) = z_C(t_A) = z_A(0).
\]
For a.e.\ $t\in(t_A,T)$, all three players concede with hazard $\lambda^{AG}$.

\item[(iii)]
If $z_A(0)\le \max\{z_B(0),z_C(0)\}$, $t_i=0$ for all $i\in\{A,B,C\}$. Any time-$0$ atoms are uniquely pinned down by immediate posterior alignment:
\[
z_A(0^+) = z_B(0^+) = z_C(0^+),
\]
and for a.e.\ $t\in(0,T)$ all three players concede with hazard $\lambda^{AG}$.
\end{enumerate}
\end{proposition}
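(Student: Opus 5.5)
The proof will first pin down equilibrium behavior on the no-concession path using only indifference conditions and Bayes' rule. Existence and uniqueness then follow by showing that the resulting system has exactly one solution. By Lemma~\ref{Lemma: C concedes to both once she concedes to one}, $C$'s behavior there is a single cdf $F_C$. After any first concession, play is the unique AG continuation, so continuation values are known functions of the posteriors.

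\textbf{Continuation values and local indifference.} A concession by $A$ at $t$ gives $C$ the payoff $\alpha\pi_{AC}+V^{AG}_{CB}(z_C(t),z_B(t))$ and gives $B$ the payoff $V^{AG}_{BC}(z_B(t),z_C(t))$; a concession by $B$ is symmetric. A concession by $C$ gives each peripheral $\alpha\pi_{iC}$. Using these, I will write each player's value of waiting on an interval and differentiate, in the standard way.
\begin{itemize}
\item For peripheral $i$, the concession value $(1-\alpha)\pi_{iC}$ is constant. When $i$ is active it must equal the value of waiting, which forces $\lambda_C=\lambda^{AG}$. One must check that the other peripheral's concession does not affect $i$'s payoff except through $z_C$; since $z_C$ does not jump when the other peripheral concedes, it indeed does not.
\item For $C$, active indifference reads
\[
r(1-\alpha)(1+\pi_{AC})=\lambda_A\big[(2\alpha-1)\pi_{AC}+(2\alpha-1)g_C^B\big]+\lambda_B\big[(2\alpha-1)+(2\alpha-1)\pi_{AC}g_C^A\big].
\]
\end{itemize}

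\textbf{Step 1 (no gaps, no atoms after $0$, common deadline).} I will adapt the AG arguments.
\begin{itemize}
\item If two players are simultaneously inactive on an interval before everyone's posterior reaches one, then the remaining player, or a player facing an inactive opponent, strictly prefers to concede earlier. This yields a contradiction.
\item Once a player starts conceding, she cannot pause before $T$. Conceding just before a pause is dominated by conceding just after it, since nothing can be lost during the pause.
\item Every $T_i$ is finite, because the hazards are bounded below once the player is active.
\item If some $T_i<T_j$, then $j$'s rational type, facing only committed opponents from $T_i$ on, would rather concede earlier. Hence all $T_i$ coincide, which gives (i).
\end{itemize}

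\textbf{Step 2 (phase structure).} On any interval where all three players are active, both peripherals' indifference conditions give $\lambda_C=\lambda^{AG}$.
\begin{itemize}
\item I claim posteriors must coincide there. All three posteriors must reach one at the same $T$. Moreover, if say $z_A(t)>z_C(t)$, then $g_C^A=0$ but $g_A^C>0$. I will combine this with $C$'s indifference and the growth law $\dot z_i=\lambda_i z_i$ to show the posteriors cannot cross and reach one at a common $T$ unless they are equal.
\item With equal posteriors, $g=0$, and $C$'s condition becomes $(2\alpha-1)(\lambda_A\pi_{AC}+\lambda_B)=r(1-\alpha)(1+\pi_{AC})$. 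Equal posteriors with a common deadline force $\lambda_A=\lambda_B=\lambda_C$, and hence everyone concedes at $\lambda^{AG}$.
\item By Step 1, an initial phase can have only one inactive player, and it must be a peripheral. If $C$ were inactive, both peripherals would face $\lambda_C=0$ and would rather concede at once. Call this peripheral $A$.
\item During such a phase $C$ has $z_C<z_A$, so $g_C^A=0$, and $C$'s indifference gives $\lambda_B=(1+\pi_{AC})\lambda^{AG}$. $B$'s indifference gives $\lambda_C=\lambda^{AG}$.
\item $A$ waits until the others' posteriors catch up with $z_A(0)$. This is the alignment condition $z_B(t_A)=z_C(t_A)=z_A(0)$. $A$ is willing to be inactive only if $z_A(0)>z_C(t)$, so that $A$ would win any continuation.
\end{itemize}

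\textbf{Step 3 (atoms and the case split).} Given the two growth rates, alignment determines a unique pair of post-atom posteriors $(z_B(0^+),z_C(0^+))$ up to one free parameter. Fixing $t_A$ by $z_C(t_A)=z_A(0)$ leaves the ratio $z_B(0^+)/z_C(0^+)=e^{-\pi_{AC}\lambda^{AG}t_A}$ prescribed.
\begin{itemize}
\item Two players cannot both place atoms at $0$. A player facing an opponent's atom strictly prefers to wait an instant, and this is consistent with the footnote to Lemma~\ref{Lemma: C concedes to both once she concedes to one}.
\item Hence exactly one of $B$ and $C$ atoms, of a size that solves the alignment equation. Monotonicity in the atom size gives uniqueness of the solution.
\item Case (ii) versus case (iii) is decided by whether $z_A(0)$ exceeds $\max\{z_B(0),z_C(0)\}$. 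If it does not, an initial phase is impossible, since the waiting peripheral would need the top reputation. Immediate alignment at $0^+$ via one or two atoms, by the lower-reputation players, then pins everything down.
\end{itemize}

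\textbf{Verification and main obstacle.} Finally, I will verify that the constructed profile is an equilibrium by checking that no player gains from conceding at off-path times, using monotone comparisons of the value of waiting. The main obstacle is Step 2's claim that posteriors must align whenever all three players are active. The continuation terms $g$ make $C$'s indifference nonlinear, and ruling out misaligned-but-active configurations requires a careful comparison argument on the ODE system together with the common-deadline property.
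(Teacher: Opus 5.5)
There is a genuine gap, and it starts in your first bullet on local indifference. It is not true that the other peripheral's concession affects $i$'s payoff only through a jump in $z_C$. When $B$ concedes at $t$, the $A$--$C$ dispute becomes a bilateral AG game with reputations $(z_A(t),z_C(t))$. Whenever $z_A(t)>z_C(t)$, the rational $C$ then concedes to $A$ immediately with probability $1-z_C(t)/z_A(t)$, and no jump in $z_C$ is needed for that atom. Hence $A$'s indifference condition (cf.\ the complementary-slackness conditions in the appendix) is
\[
\lambda_C(t)+\lambda_B(t)\max\Bigl\{1-\frac{z_C(t)}{z_A(t)},\,0\Bigr\}=\lambda^{AG},
\]
and symmetrically for $B$. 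So $\lambda_C=\lambda^{AG}$ holds only when the active peripheral is weakly below $C$. Your write-up is inconsistent on this point. You keep the analogous terms in $C$'s condition, you note in Step~2 that $g_A^C>0$ when $z_A>z_C$, and you justify $A$'s inactivity by saying ``$A$ would win any continuation.'' Under your version of the condition, $A$ would be exactly indifferent during the initial phase, so nothing would force him to wait. It is the omitted term that makes waiting strictly optimal and the phase structure unique: it gives $A$ an effective rate $\lambda^{AG}+(1+\pi_{AC})\lambda^{AG}\bigl(1-z_C(t)/z_A(0)\bigr)>\lambda^{AG}$.

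This term drives the paper's proof, and several of your steps fail without it.
\begin{itemize}
\item[(a)] You cannot rule out $C$ as the inactive player on the grounds that the peripherals ``face $\lambda_C=0$.'' With $\lambda_C=0$, both peripherals can stay indifferent if both lie above $C$ and each concedes fast enough to trigger $C$'s atom to the other. The paper needs Lemma~\ref{Lemma: not possible to have za, zb>zc} to exclude this. Its no-pause result for $C$ (Lemma~\ref{Lemma: C doesn't stop conceding}) likewise rests on the peripherals' conditions forcing $\lambda_A,\lambda_B>\lambda^{AG}$ on a $C$-plateau. Your justification ``nothing can be lost during the pause'' ignores discounting.
\item[(b)] Your common-deadline argument does not cover $T_A<T_B=T_C$. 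After $T_A$, $B$ and $C$ still face each other, not committed opponents. Lemma~\ref{Lemma: All WoAs end at the same time} closes this case by showing that $A$ strictly prefers to wait near $T_A$: there $\lambda_B=(1+\pi_{AC})\lambda^{AG}$, and every $B$-concession triggers an atom by $C$ to $A$. Your finiteness argument (``hazards bounded below'') is also unsupported, since the correct conditions allow $\lambda_C<\lambda^{AG}$ off the aligned path.
\item[(c)] Most importantly, you give no argument for the alignment step you flag as the main obstacle. The premise you would build it on, $\lambda_C=\lambda^{AG}$ whenever a peripheral is active, is false exactly in the misaligned configurations you need to exclude. The paper obtains alignment from the posterior-ordering Lemmas~\ref{Lemma: A doesn't stop conceding once above C}--\ref{Lemma: Cannot have zA = zC > z_B}, where the two peripherals' conditions contradict each other. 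For instance, if $z_A>z_C\ge z_B$ with both peripherals active, $B$'s condition gives $\lambda_C=\lambda^{AG}$, while $A$'s gives $\lambda_C=\lambda^{AG}-\lambda_B(1-z_C/z_A)<\lambda^{AG}$.
\end{itemize}
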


\begin{figure}[ht!]
\centering
\begin{minipage}{0.45\textwidth}
\centering
\begin{tikzpicture}
  \begin{axis}[
    xlabel={$t$},
    xlabel style={at={(ticklabel* cs:1)}, anchor=west, yshift=5pt},
    ylabel={Pr[behavioral]},
    ymin=0, ymax=1.12,
    xmin=0, xmax=1.3,
    samples=200,
    xtick={0.9242},
    xticklabels={$T$},
    ytick={0.1, 0.3, 0.5, 1},
    yticklabels={$z_B(0)$, $z_A(0)$, $z_C(0)$, $1$},
    axis lines=left,
    width=\textwidth,
    height=0.9\textwidth,
  ]

  \addplot[black, very thick, domain=0:0.9242]
    {0.5*exp(0.75*x)};
  \addplot[black, very thick, domain=0.9242:1.3] {1};
  \draw[dotted] (axis cs:0.9242,0) -- (axis cs:0.9242,1);
  \addplot[black, dotted, domain=0:0.9242] {1};

  \draw[->, green!50!black, thick]
    (axis cs:0,0.3)
    .. controls (axis cs:0.08,0.34) and (axis cs:0.08,0.46) ..
    (axis cs:0,0.5);

  \draw[->, blue, thick]
    (axis cs:0,0.1)
    .. controls (axis cs:0.12,0.2) and (axis cs:0.12,0.4) ..
    (axis cs:0,0.5);

  \end{axis}
\end{tikzpicture}
\vspace{-0.4cm}
\subcaption{\shortstack{$z_C > z_A > z_B$\\[2pt]($A$ and $B$ concede with atoms)}}
\end{minipage}\hfill
\begin{minipage}{0.45\textwidth}
\centering
\begin{tikzpicture}
  \begin{axis}[
    xlabel={$t$},
    xlabel style={at={(ticklabel* cs:1)}, anchor=west, yshift=5pt},
    ylabel={Pr[behavioral]},
    ymin=0, ymax=1.12,
    xmin=0, xmax=1.6,
    samples=200,
    xtick={0.4072, 1.3314},
    xticklabels={$t^*$, $T$},
    ytick={0.1, 0.2, 0.5, 1},
    yticklabels={$z_C(0)$, $z_B(0)$, $z_A(0)$, $1$},
    axis lines=left,
    width=\textwidth,
    height=0.9\textwidth,
  ]

  \addplot[blue, very thick, domain=0:0.4072] {0.2*exp(2.25*x)};
  \addplot[red, very thick, domain=0:0.4072] {0.3684*exp(0.75*x)};
  \addplot[black, very thick, domain=0.4072:1.3314]
    {0.5*exp(0.75*(x - 0.4072))};
  \addplot[black, very thick, domain=1.3314:1.6] {1};

  \draw[dotted] (axis cs:0.4072,0) -- (axis cs:0.4072,0.5);
  \draw[dotted] (axis cs:1.3314,0) -- (axis cs:1.3314,1);
  \addplot[orange, very thick, domain=0:0.4072] {0.5};
  \addplot[black, dotted, domain=0:1.3314] {1};

  \draw[->, red, thick]
    (axis cs:0,0.1)
    .. controls (axis cs:0.10,0.17) and (axis cs:0.10,0.30) ..
    (axis cs:0,0.3684);

  \end{axis}
\end{tikzpicture}
\vspace{-0.4cm}
\subcaption{\shortstack{$z_A > z_B > z_C$\\[2pt]($C$ concedes with atom)}}
\end{minipage}

\vspace{0.5cm}

\begin{minipage}{0.45\textwidth}
\centering
\begin{tikzpicture}
  \begin{axis}[
    xlabel={$t$},
    xlabel style={at={(ticklabel* cs:1)}, anchor=west, yshift=5pt},
    ylabel={Pr[behavioral]},
    ymin=0, ymax=1.12,
    xmin=0, xmax=1.9,
    samples=200,
    xtick={0.7153, 1.6395},
    xticklabels={$t^*$, $T$},
    ytick={0.1, 0.2, 0.5, 1},
    yticklabels={$z_B(0)$, $z_C(0)$, $z_A(0)$, $1$},
    axis lines=left,
    width=\textwidth,
    height=0.9\textwidth,
  ]

  \addplot[blue, very thick, domain=0:0.7153] {0.1*exp(2.25*x)};
  \addplot[red, very thick, domain=0:0.7153] {0.2925*exp(0.75*x)};
  \addplot[black, very thick, domain=0.7153:1.6395]
    {0.5*exp(0.75*(x - 0.7153))};
  \addplot[black, very thick, domain=1.6395:1.9] {1};

  \draw[dotted] (axis cs:0.7153,0) -- (axis cs:0.7153,0.5);
  \draw[dotted] (axis cs:1.6395,0) -- (axis cs:1.6395,1);
  \addplot[orange, very thick, domain=0:0.7153] {0.5};
  \addplot[black, dotted, domain=0:1.6395] {1};

  \draw[->, red, thick]
    (axis cs:0,0.2)
    .. controls (axis cs:0.08,0.22) and (axis cs:0.08,0.27) ..
    (axis cs:0,0.2925);

  \end{axis}
\end{tikzpicture}
\vspace{-0.4cm}
\subcaption{\shortstack{$z_A > z_C > z_B$\\[2pt]($C$ concedes with atom)}}
\end{minipage}\hfill
\begin{minipage}{0.45\textwidth}
\centering
\begin{tikzpicture}
  \begin{axis}[
    xlabel={$t$},
    xlabel style={at={(ticklabel* cs:1)}, anchor=west, yshift=5pt},
    ylabel={Pr[behavioral]},
    ymin=0, ymax=1.12,
    xmin=0, xmax=1.5,
    samples=200,
    xtick={0.2975, 1.2217},
    xticklabels={$t^*$, $T$},
    ytick={0.1, 0.4, 0.5, 1},
    yticklabels={$z_B(0)$, $z_C(0)$, $z_A(0)$, $1$},
    axis lines=left,
    width=\textwidth,
    height=0.9\textwidth,
  ]

  \addplot[blue, very thick, domain=0:0.2975] {0.256*exp(2.25*x)};
  \addplot[red, very thick, domain=0:0.2975] {0.4*exp(0.75*x)};
  \addplot[black, very thick, domain=0.2975:1.2217]
    {0.5*exp(0.75*(x - 0.2975))};
  \addplot[black, very thick, domain=1.2217:1.5] {1};

  \draw[dotted] (axis cs:0.2975,0) -- (axis cs:0.2975,0.5);
  \draw[dotted] (axis cs:1.2217,0) -- (axis cs:1.2217,1);
  \addplot[orange, very thick, domain=0:0.2975] {0.5};
  \addplot[black, dotted, domain=0:1.2217] {1};

  \draw[->, blue, thick]
    (axis cs:0,0.1)
    .. controls (axis cs:0.06,0.14) and (axis cs:0.06,0.22) ..
    (axis cs:0,0.256);

  \end{axis}
\end{tikzpicture}
\vspace{-0.4cm}
\subcaption{\shortstack{$z_A > z_C > z_B$\\[2pt]($B$ concedes with atom)}}
\end{minipage}
\caption{Equilibrium posterior dynamics under four configurations. }\label{Figure: unique equilibrium, four cases}
\end{figure}
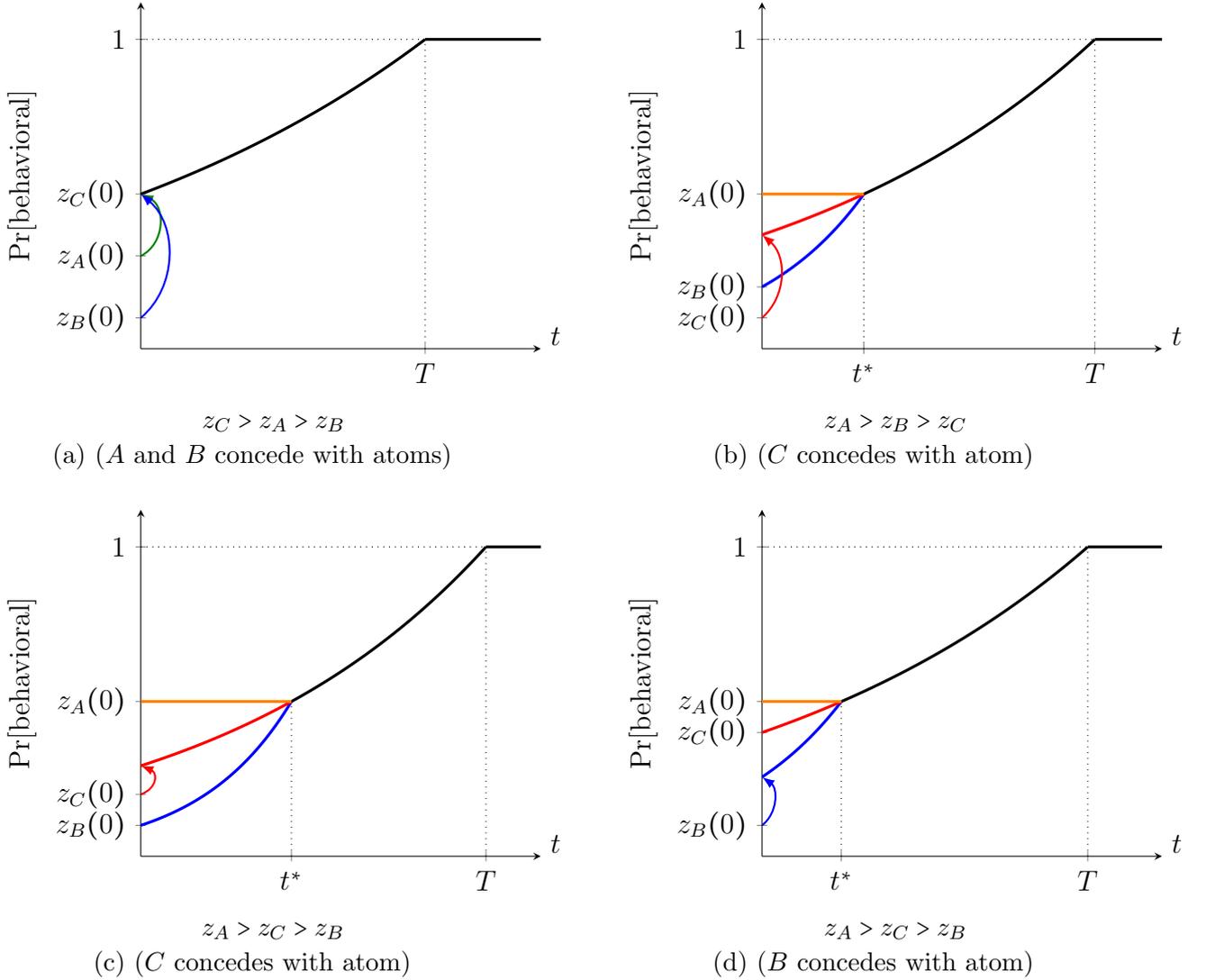

Proposition \ref{prop:structure} leaves one object to be determined: 
when a peripheral is strictly dominant, which player concedes at time $0$, and with what probability? We now show that the answer follows  from the requirement of posterior alignment at the moment the last player becomes active. A complete version of Proposition~\ref{prop:structure} is presented in Proposition~\ref{prop:full} in the appendix. 

Suppose $z_A(0)>\max\{z_B(0),z_C(0)\}$, so that $A$ is initially inactive 
while $B$ and $C$ are active on $(0,t_A)$. On that interval, Proposition 
\ref{prop:structure} implies constant hazards
\[
\lambda_C=\lambda^{AG}, 
\qquad 
\lambda_B=(1+\pi_{AC})\lambda^{AG}.
\]
Hence along the no-concession path,
\[
z_C(t)=z_C(0^+)e^{\lambda^{AG} t},
\qquad
z_B(t)=z_B(0^+)e^{(1+\pi_{AC})\lambda^{AG} t},
\qquad
z_A(t)=z_A(0).
\]
Because joint activity forces posterior alignment, the activation time $t_A$ 
must satisfy
\[
z_B(t_A)=z_C(t_A)=z_A(0).
\tag{Alignment}
\]
First consider the hypothetical case with no time-zero atoms:
$z_i(0^+)=z_i(0)$ for $i\in\{B,C\}$. 
Define the catch-up times
\[
\tilde{t}_C=\frac{1}{\lambda^{AG}}
\log\!\left(\frac{z_A(0)}{z_C(0)}\right),
\qquad
\tilde{t}_B=\frac{1}{(1+\pi_{AC})\lambda^{AG}}
\log\!\left(\frac{z_A(0)}{z_B(0)}\right).
\]

These are the times at which $C$ and $B$, respectively, would reach $z_A(0)$ 
under the initial-phase hazards. In other words, these times compare the endogenous growth speeds of reputations under the initial-phase hazard rates. If $\tilde{t}_C=\tilde{t}_B$, no time-zero atom is required: both posteriors reach 
$z_A(0)$ simultaneously. If instead $\tilde{t}_C>\tilde{t}_B$, then absent atoms $C$ would reach $z_A(0)$ strictly 
later than $B$. To satisfy the alignment condition, $C$ must therefore 
receive an upward jump in her posterior at time $0$, implemented by a 
time-zero concession atom. Conversely, if $\tilde{t}_B>\tilde{t}_C$, then $B$ is the laggard and must concede 
with positive probability at time $0$.

Consider the case $\tilde{t}_C>\tilde{t}_B$, so that $C$ must concede at $0$.
Since $C$ places mass $F_C(0)$ at time $0$, her posterior jumps to
\[
z_C(0^+)=\frac{z_C(0)}{1-F_C(0)}.
\]
The activation time is then determined by $B$'s catch-up time,
\[
t_A=\frac{1}{(1+\pi_{AC})\lambda^{AG}}
\log\!\left(\frac{z_A(0)}{z_B(0)}\right).
\]
Imposing alignment $z_C(t_A)=z_A(0)$ yields
\[
z_C(0^+)=z_A(0)
\Big(\frac{z_B(0)}{z_A(0)}\Big)^{\frac{1}{1+\pi_{AC}}}
=
z_A(0)^{\frac{\pi_{AC}}{1+\pi_{AC}}}
z_B(0)^{\frac{1}{1+\pi_{AC}}}.
\]
Hence
\[
F_C(0)
=
1-
\frac{z_C(0)}
{z_A(0)^{\frac{\pi_{AC}}{1+\pi_{AC}}}
z_B(0)^{\frac{1}{1+\pi_{AC}}}}.
\]

An analogous calculation yields the formula in the case where $B$ 
is the laggard. The time-zero atom is therefore assigned to whichever active player would 
otherwise ``arrive late'' to $z_A(0)$ under the initial hazard rates.
Because $B$ concedes at the accelerated rate 
$(1+\pi_{AC})\lambda^{AG}$, increasing $\pi_{AC}$ reduces $\tilde{t}_B$ and 
makes it more likely that $C$ is the laggard. Thus sufficiently high 
stakes in the $A$--$C$ negotiation shift the time-zero concession 
onto the center.

Let $t^*$ denote the time at which posteriors first align and all three players are active (so $t^*=t_A$ in Proposition~\ref{prop:structure}(ii), and $t^*=0$ in Proposition~\ref{prop:structure}(iii). Figure~\ref{Figure: unique equilibrium, four cases} illustrates the posterior paths described in Proposition~\ref{prop:structure}. Each panel plots posterior commitment probabilities $z_i(t)$ along the no-concession path. Vertical jumps at $t=0$ correspond to time-$0$ concession atoms, which raise the non-conceding player's posterior to $z_i(0+)=z_i(0)/(1-F_i(0))$. In panels (b)-(d), $A$ is initially inactive, while $B$ and $C$ are active with constant hazard rates $\lambda_B=(1+\pi_{AC})\lambda^{AG}$ and $\lambda_C=\lambda^{AG}$, so $z_B$ grows faster than $z_C$ until the activation time $t^*$ when posteriors align at $z_A(0)$. Panels (c) and (d) differ by which active player is the laggard: if $\tilde{t}_C>\tilde{t}_B$, then $C$ bears the time-$0$ atom; if $\tilde{t}_B>\tilde{t}_C$, then $B$ bears the time-$0$ atom. After alignment at $t^*$, all players concede at hazard $\lambda^{AG}$ until the common terminal time $T$.

Proposition~\ref{prop:structure} pins down the equilibrium path up to a single remaining
degree of freedom: the identity and size of any time--0 concession atom. This atom is needed
to reconcile the (initially asymmetric) posterior growth rates in the initial phase with the
requirement that posteriors coincide when the last player becomes active. As discussed
above, the posterior-alignment condition at the activation time uniquely selects this atom.

Once this object is fixed, equilibrium behavior is fully characterized. After at most an
initial two-player phase, all active players concede at the bilateral AG hazard
$\lambda^{AG}$ until a common terminal time. This phase structure makes transparent where
equilibrium payoffs can differ from the benchmark of two independent bilateral
\cite{abreu2000bargaining} negotiations: since the no-concession dynamics coincide with
$\lambda^{AG}$ after the initial adjustment, any payoff wedge must be generated entirely by
that initial adjustment—most notably by the time--0 atom required for posterior alignment
when a peripheral is initially dominant. We now formalize this comparison.

Before stating Proposition~\ref{Proposition: unique equilibrium payoffs}, let us define expected payoffs. Fix a strategy profile $\sigma$. For each player $i\in\{A,B,C\}$, let $v_i(\sigma)$ denote
the \emph{time-$0$} expected payoff of $i$'s \emph{rational type} under $\sigma$. We write
$v_i^*:=v_i(\sigma^*)$ for the equilibrium payoff. To connect payoffs to the indifference conditions used in the Appendix, it is convenient to define the payoff to
a peripheral from conceding at a deterministic time. Fix $i\in\{A,B\}$, let $k$ denote the other
peripheral, and consider the plan:
\begin{quote}
``If no one concedes strictly before $t$, concede to $C$ at time $t$. If $k$ concedes at some $y<t$ before $C$
concedes, then from calendar time $y$ onward play the unique AG equilibrium in the remaining
$i$--$C$ negotiation.''
\end{quote}
Let $U_i(t;\sigma_{-i})$ be the expected payoff at time $0$ from this plan, and write
$F_m(t-):=\lim_{s\uparrow t}F_m(s)$ for left limits. Then a first-event decomposition yields
\begin{align}
U_i(t;\sigma_{-i})
&=\alpha\pi_{iC}\int_{[0,t)} e^{-ry}\bigl(1-F_k(y-)\bigr)\,dF_C(y)
\;+\;\int_{[0,t)} e^{-ry}\bigl(1-F_C(y)\bigr)\,V_{iC}^{AG}\!\bigl(z_i(y),z_C(y)\bigr)dF_k(y)\,
\notag\\
&\quad\;+\;(1-\alpha)\pi_{iC}e^{-rt}\bigl(1-F_C(t)\bigr)\bigl(1-F_k(t-)\bigr),
\label{eq:Ui_plan_payoff}
\end{align}
where we recall that $V_{iC}^{AG}(z_i,z_C)$ denotes $i$'s continuation payoff in the induced bilateral AG game starting at calendar time $y$
when the remaining negotiation starts with reputations $(z_i,z_C)$.\footnote{All integrals are Lebesgue–Stieltjes, so they incorporate any atoms of $F_C$ and $F_k$ at time $t=0$.} 
The three terms in \eqref{eq:Ui_plan_payoff} correspond, respectively, to: (i) $C$ concedes before $t$ and before
$k$ concedes; (ii) $k$ concedes before $t$ and before $C$ concedes, inducing the AG continuation in $i$--$C$;
and
(iii) no one concedes before $t$, so $i$ concedes at $t$.\footnote{Under the regularity restriction we impose on the \emph{equilibrium profile},
there is no simultaneous-concession term for $t>0$.}

In the equilibrium characterized below, once a peripheral becomes active he is indifferent over the times in the
support of his concession distribution, so his equilibrium payoff, $v^*_i$, can be recovered from \eqref{eq:Ui_plan_payoff}
by evaluating $U_i(\cdot;\sigma_{-i}^*)$ at any such time (e.g.\ at $t_i:=\inf\{t\ge 0:F_i^*(t)>0\}$).

An analogous decomposition applies to $C$.\footnote{Recall that by Lemma~\ref{Lemma: C concedes to both once she concedes to one}, it is without loss
on the equilibrium path to treat $C$'s concession in the no-concession subgame as simultaneous across negotiations.} 
Define $U_C(t;\sigma_{-C})$ as the time-$0$ payoff to $C$'s rational type from the plan:
\begin{quote}
``Do not concede before $t$; if no one concedes before $t$, concede to both $A$ and $B$ at time $t$; if $A$ (resp.\ $B$)
concedes first at some $y<t$, then from $y$ onward play the unique AG equilibrium in the remaining $B$--$C$ (resp.\ $A$--$C$)
negotiation.'' \end{quote}

Recall that $V_{CB}^{AG}(z_C,z_B)$ (resp.\ $V_{CA}^{AG}(z_C,z_A)$) denotes $C$'s continuation payoff in the induced AG game, i.e., the remaining
$B$--$C$ (resp.\ $A$--$C$) negotiation when it starts with reputations $(z_C,z_B)$ (resp.\ $(z_C,z_A)$). Then
\begin{align}
U_C(t;\sigma_{-C})
&=\alpha(\pi_{AC}+1)\,F_A(0)F_B(0)
\notag\\
&\quad+\int_{[0,t)} e^{-ry}\bigl(1-F_B(y)\bigr)\,\Big(\alpha\pi_{AC}+V_{CB}^{AG}\!\bigl(z_C(y),z_B(y)\bigr)\Big)dF_A(y)
\notag\\
&\quad+\int_{[0,t)} e^{-ry}\bigl(1-F_A(y)\bigr)\,\Big(\alpha+V_{CA}^{AG}\!\bigl(z_C(y),z_A(y)\bigr)\Big)dF_B(y)
\notag\\
&\quad+(1-\alpha)(\pi_{AC}+1)\,e^{-rt}\bigl(1-F_A(t)\bigr)\bigl(1-F_B(t)\bigr).
\label{eq:UC_plan_payoff}
\end{align}

The time-$0$ concession events are partitioned as follows. The explicit
term $F_A(0)F_B(0)$ isolates the event that both peripherals concede at $0$.
Because the Lebesgue--Stieltjes integrals are taken over $[0,t)$, they also pick
up any atoms at $y=0$; however, the factor $1-F_B(y)$ (resp.\ $1-F_A(y)$) implies
that the atom of $dF_A$ at $0$ (resp.\ $dF_B$ at $0$) contributes only on the
one-sided event that $A$ concedes at $0$ while $B$ does not (resp.\ that $B$
concedes at $0$ while $A$ does not).\footnote{Thus simultaneous time-$0$ concessions are
counted only in the first term.}

We can now state Proposition~\ref{Proposition: unique equilibrium payoffs} which compares each player's equilibrium expected payoff under public negotiations---where reputational spillovers are present---with the benchmark of two independent bilateral negotiations \`a la AG from Section~\ref{Section: benchmark}.

\bprop[Equilibrium payoffs]\label{Proposition: unique equilibrium payoffs}
Assume without loss of generality that $z_A(0)\ge z_B(0)$. In the unique equilibrium:
\begin{enumerate}
\item If $z_A(0)\le \max\{z_B(0),z_C(0)\}$, then
\[
v_i^*=v^{AG}_{i,C}\ \text{for}\ i\in\{A,B\},\qquad 
v_C^*=v^{AG}_{C,A}+v^{AG}_{C,B}.
\]
\item If $z_A(0)>z_B(0)\geq z_C(0) $, then
\[
v_A^*<v^{AG}_{A,C},\qquad v_B^*>v^{AG}_{B,C},\qquad 
v_C^*=v^{AG}_{C,A}+v^{AG}_{C,B}.
\]
\item  If $z_A(0)>z_C(0)>z_B(0)$, then
\[
v_A^*<v^{AG}_{A,C},\qquad v_B^*\ge v^{AG}_{B,C},\qquad 
v_C^*<v^{AG}_{C,A}+v^{AG}_{C,B}.
\]
\end{enumerate}
\eprop

Proposition~\ref{Proposition: unique equilibrium payoffs} is the payoff counterpart of the
phase structure in Proposition~\ref{prop:structure}. When $z_A(0)\le \max\{z_B(0),z_C(0)\}$,
no peripheral is \emph{uniquely} most reputable. In this region Proposition~\ref{prop:structure}(iii)
implies immediate posterior alignment at $0^+$, and from that point onward the no-concession
dynamics coincide with the bilateral AG hazard $\lambda^{AG}$. Hence reputational
spillovers are payoff-irrelevant: each player’s expected payoff coincides with the bilateral
AG benchmark in each dispute.

Spillovers are payoff-relevant if and only if a peripheral is initially dominant,
$z_A(0)>\max\{z_B(0),z_C(0)\}$. Then Proposition~\ref{prop:structure}(ii) implies an initial
two-player phase in which the dominant peripheral $A$ optimally remains inactive while $B$
and $C$ ``bargain in earnest.'' Because $C$'s type is global, any concession by $C$ is effectively
\emph{global}---it resolves both negotiations at once and reveals $C$'s flexibility. This makes
conceding comparatively more costly for $C$ in the initial phase and forces an initial
reallocation of concession probability (including, when required by posterior alignment,
a time--0 atom pinned down by the condition that posteriors coincide when $A$ becomes active).

The payoff implications are then systematic and mirror the two subregions in the dominant-peripheral case.
If $z_A(0)>z_B(0)\ge z_C(0)$, spillovers operate purely through redistribution across the
peripherals: the dominant peripheral $A$ is strictly worse off than in the bilateral benchmark,
while the weaker peripheral $B$ is strictly better off, and the center's aggregate payoff remains
exactly at the sum of her bilateral AG payoffs. If instead $z_A(0)>z_C(0)>z_B(0)$, spillovers are
most consequential: the center is \emph{strictly} worse off than in the bilateral benchmark, the
dominant peripheral again loses, and the weakest peripheral weakly benefits. In particular, the
bilateral ``toughness pays'' logic is overturned precisely in the region where belief consistency
forces a nontrivial initial adjustment.

Figure~\ref{Figure: comparison AG} provides a numerical illustration of Proposition~\ref{Proposition: unique equilibrium payoffs} by contrasting the equilibrium posterior dynamics with those under two independent AG negotiations. The figure makes clear that the payoff wedge is generated entirely by the initial adjustment (including any time–0 atom) required for posterior alignment.

\begin{figure}[ht]
\centering
\begin{minipage}{0.45\textwidth}
\centering
\begin{tikzpicture}
  \begin{axis}[
    xlabel={$t$},
    xlabel style={at={(ticklabel* cs:1)}, anchor=west, yshift=5pt},
    ylabel={Pr[behavioral]},
    ymin=0, ymax=1.12,
    xmin=0, xmax=1.8,
    samples=200,
    xtick={0.796, 1.477},
    xticklabels={$t^*$, $T$},
    ytick={0.1, 0.2, 0.6, 1},
    yticklabels={$z_B(0)$, $z_C(0)$, $z_A(0)$, $1$},
    major tick length=4pt,
    ytick style={thick},
    axis lines=left,
    width=\textwidth,
    height=0.9\textwidth,
  ]

  \addplot[blue, very thick, domain=0:0.796] {0.1*exp(2.25*x)};

  \addplot[red, very thick, domain=0:0.796] {0.330*exp(0.75*x)};

  \addplot[black, very thick, domain=0.796:1.477]
    {0.6*exp(0.75*(x - 0.796))};

  \addplot[black, very thick, domain=1.477:1.8] {1};

  \draw[dotted] (axis cs:0.796,0) -- (axis cs:0.796,0.6);
  \draw[dotted] (axis cs:1.477,0) -- (axis cs:1.477,1);

  \addplot[black, dotted, domain=0:0.796] {0.6};
  \addplot[black, dotted, domain=0:1.477] {1};

  \draw[->, red, thick]
    (axis cs:0,0.2)
    .. controls (axis cs:0.10,0.23) and (axis cs:0.10,0.30) ..
    (axis cs:0,0.330);

  \end{axis}
\end{tikzpicture}
\vspace{-0.4cm}
\subcaption{Multilateral bargaining}\label{Figure: comparison multilateral}
\end{minipage}\hfill
\begin{minipage}{0.45\textwidth}
\centering
\begin{tikzpicture}
  \begin{axis}[
    xlabel={$t$},
    xlabel style={at={(ticklabel* cs:1)}, anchor=west, yshift=5pt},
    ylabel={Pr[behavioral]},
    ymin=0, ymax=1.12,
    xmin=0, xmax=2.6,
    samples=200,
    xtick={0.681, 2.145},
    xticklabels={$T^{AC}_{AG}$, $T^{BC}_{AG}$},
    ytick={0.1, 0.2, 0.6, 1},
    yticklabels={$z_B(0)$, $z_C(0)$, $z_A(0)$, $1$},
    major tick length=4pt,
    ytick style={thick},
    axis lines=left,
    width=\textwidth,
    height=0.9\textwidth,
  ]

  \addplot[red, very thick, domain=0:0.681]
    {0.6*exp(0.75*x)};

  \addplot[red, very thick, domain=0.681:2.6] {1};

  \addplot[blue, very thick, domain=0:2.145]
    {0.2*exp(0.75*x)};

  \addplot[blue, very thick, domain=2.145:2.6] {1};

  \draw[dotted] (axis cs:0.681,0) -- (axis cs:0.681,1);
  \draw[dotted] (axis cs:2.145,0) -- (axis cs:2.145,1);

  \addplot[black, dotted, domain=0:2.145] {1};

  \draw[->, red, thick]
    (axis cs:0,0.2)
    .. controls (axis cs:0.12,0.30) and (axis cs:0.12,0.50) ..
    (axis cs:0,0.6);

  \draw[->, blue, thick]
    (axis cs:0,0.1)
    .. controls (axis cs:0.10,0.12) and (axis cs:0.10,0.18) ..
    (axis cs:0,0.2);

  \end{axis}
\end{tikzpicture}
\vspace{-0.4cm}
\subcaption{Bilateral bargaining (AG benchmark)}\label{Figure: comparison bilateral}
\end{minipage}
\caption{Comparison of concession behavior: multilateral bargaining (left) vs.\ bilateral bargaining (right).}\label{Figure: comparison AG}
\end{figure}
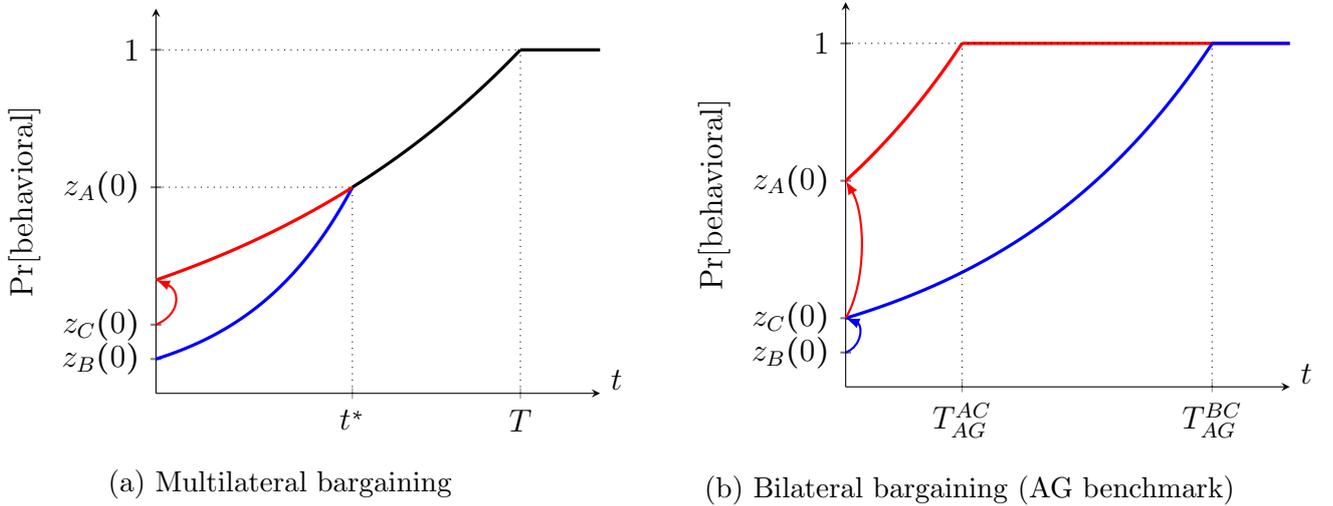

Specifically, Figure~\ref{Figure: comparison AG} illustrates the equilibrium concession dynamics with reputational spillovers and contrasts it to the bilateral AG benchmark.
In the multilateral game (left panel), player~C (in red) concedes with positive probability to both A and B at $t=0$, producing an upward jump in $C$’s posterior reputation at time $0$ (to $z_C(0+)$) to maintain consistent posterior beliefs across the two negotiations. Thereafter, the initial two-player phase begins: $B$ concedes at a hazard rate of $2\rate^{AG}$ (in blue) and $C$ concedes at a rate of $\rate^{AG}$. Given these rates of concession and the initial atom by $C$, they reach $A$'s prior $z_A(0)$ at time $t^*=t_A$. Once $A$'s prior is reached, all three players concede at a rate $\rate^{AG}$ up to time $T$, where all three players are believed to be committed with probability $1$. 

The right panel considers the corresponding bilateral benchmarks. In the bilateral negotiation between $A$ and $C$, $C$ is initially the weaker player ($z_C(0)<z_A(0)$) and therefore concedes with an atom to $A$ at time $0$, raising the common posterior to $z_A(0)$. Thereafter, both players concede at a rate $\rate^{AG}$ until $T_{AC}^{AG}$.
In the bilateral negotiation between $B$ and $C$, $B$ is initially the weaker player ($z_B(0)<z_C(0)$) and therefore concedes with an atom to $C$ at time $0$, raising the common posterior to $z_C(0)$. Thereafter, both players concede at a rate $\rate^{AG}$ until $T_{BC}^{AG}$. 

Two contrasts with the multilateral equilibrium are immediate. First, in the bilateral benchmark the weaker player in each negotiation is the one who concedes with positive probability at time $0$ (here $C$ in the negotiation between $A$ and $C$ and $B$ in the negotiation between $B$ and $C$), whereas in the multilateral environment $C$ concedes with an atom to both players even though $z_C(0)>z_B(0)$. Second, the multilateral equilibrium exhibits a longer common terminal time than the bilateral benchmark between $A$ and $C$ ($T>T_{AC}^{AG}$) but a shorter terminal time than the bilateral benchmark between $B$ and $C$ ($T<T_{BC}^{AG}$), reflecting the accelerated concession rate of the weak peripheral during the initial two--player phase.

Propositions~\ref{prop:structure} and \ref{Proposition: unique equilibrium payoffs} imply that any departure from the benchmark of two independent AG negotiations is generated entirely by the initial adjustment required for posterior alignment—after alignment, concession hazards coincide with the bilateral AG rate and the subsequent dynamics are identical. In the dominant-peripheral region, this adjustment takes the form of a single time‑0 atom borne by the player who would otherwise ‘arrive late’ to the dominant peripheral’s reputation under the initial-phase hazards. The closed-form atom pinned down by the alignment condition therefore delivers immediate comparative statics in the stakes parameter $\pi_{AC}$. If $B$ is the bearer of the atom, $C$'s ``reputation rent'' from $B$'s immediate concession shrinks as $\pi_{AC}$ increases. If $C$ is the bearer of the atom, the spillover component of the peripherals' equilibrium payoffs is increasing in $\pi_{AC}$: a higher $\pi_{AC}$ raises the probability that $C$ concedes immediately at time $0$, which benefits both peripherals. 

\section{A vanishing-prior limit with fixed relative reputations}
\label{sec:vanishing}

A natural question is whether the payoff reversal in Proposition~\ref{Proposition: unique equilibrium payoffs}
is an artifact of treating commitment probabilities as fixed primitives or whether it
survives when those probabilities become small. To address this, we study a
vanishing-prior limit along sequences for which all three priors converge to zero while
relative reputations remain fixed.

Throughout this section, we set $\pi_{AC}=\pi_{BC}=1$ and focus on the region
$z_A(0)>z_C(0)>z_B(0)$, where spillovers are most consequential
(Proposition~\ref{Proposition: unique equilibrium payoffs}, case~3). Fix constants $\kappa_A,\kappa_B>1$ and,
for each sufficiently small $\varepsilon>0$, define
\[
z_B^\varepsilon(0)=\varepsilon,\qquad
z_C^\varepsilon(0)=\kappa_B\varepsilon,\qquad
z_A^\varepsilon(0)=\kappa_A\kappa_B\varepsilon.
\]
Thus,
\[
\frac{z_A^\varepsilon(0)}{z_C^\varepsilon(0)}=\kappa_A,
\qquad
\frac{z_C^\varepsilon(0)}{z_B^\varepsilon(0)}=\kappa_B
\]
for every $\varepsilon$, while all three commitment probabilities vanish as
$\varepsilon\downarrow0$. Let $v_i^*(\varepsilon)$ denote player $i$'s equilibrium payoff
in the multilateral game under these priors, and let $v_{ij}^{AG}(\varepsilon)$ denote
player $i$'s payoff in the corresponding bilateral AG benchmark against $j$.

\begin{proposition}[Vanishing-prior limit]
\label{prop:vanishing}
As $\varepsilon\downarrow0$,
\begin{align*}
v_{CA}^{AG}(\varepsilon)+v_{CB}^{AG}(\varepsilon)-v_C^*(\varepsilon)
&\longrightarrow
(2\alpha-1)\frac{\min\{\kappa_A,\kappa_B\}-1}{\kappa_B}>0, \\
v_{AC}^{AG}(\varepsilon)
&> v_A^*(\varepsilon)\qquad \text{for every }\varepsilon>0,\\
v_B^*(\varepsilon)
&\ge v_{BC}^{AG}(\varepsilon)\qquad \text{for every }\varepsilon>0,
\end{align*}
with strict inequality in the last display if and only if $\kappa_A>\kappa_B$.
\end{proposition}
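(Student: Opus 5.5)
The plan is to reduce everything to the closed-form time-$0$ atoms from the discussion after Proposition~\ref{prop:structure}, specialised to $\pi_{AC}=1$. The key observation is that in the region $z_A(0)>z_C(0)>z_B(0)$ all equilibrium objects that matter for payoffs depend only on the ratios $\kappa_A,\kappa_B$, and not on $\varepsilon$. The gap in the first display should therefore be constant in $\varepsilon$, so the limit is trivial once that constant is computed. The only role of ``$\varepsilon$ small'' is to keep $z_A^\varepsilon(0)=\kappa_A\kappa_B\varepsilon<1$, so that the priors are admissible and Propositions~\ref{prop:structure}(ii) and~\ref{Proposition: unique equilibrium payoffs}(3) apply.

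\emph{Step 1 (who bears the atom).} With $\pi_{AC}=1$ we have $\lambda_B=2\lambda^{AG}$ on $(0,t_A)$. The catch-up times are
\[
\tilde t_C=\tfrac{1}{\lambda^{AG}}\log\kappa_A, \qquad \tilde t_B=\tfrac{1}{2\lambda^{AG}}\log(\kappa_A\kappa_B).
\]
Hence $\tilde t_C>\tilde t_B$ if and only if $\kappa_A>\kappa_B$. If $\kappa_A>\kappa_B$, then $C$ bears the atom and the alignment formula gives $F_C(0)=1-\sqrt{\kappa_B/\kappa_A}$. If $\kappa_B>\kappa_A$, then $B$ bears the atom. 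In that case $t_A=\tilde t_C$, so $z_B(0^+)=z_A(0)/\kappa_A^2=\varepsilon\kappa_B/\kappa_A$, which gives $F_B(0)=1-\kappa_A/\kappa_B$. If $\kappa_A=\kappa_B$, there is no atom. In every case $F_A(0)=0$, since $A$ is inactive at the start.

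\emph{Step 2 (equilibrium payoffs via indifference).} Since $B$ and $C$ are active from time $0$, each is indifferent over its support. I would therefore evaluate $v^*_B=\lim_{t\downarrow0}U_B(t;\sigma^*_{-B})$ and $v^*_C=\lim_{t\downarrow0}U_C(t;\sigma^*_{-C})$ using \eqref{eq:Ui_plan_payoff} and \eqref{eq:UC_plan_payoff}. In the limit only the time-$0$ atoms survive.

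For $B$ this yields $v_B^*=(1-\alpha)+(2\alpha-1)F_C(0)$. The benchmark value is $v^{AG}_{BC}=1-\alpha$, because $z_B<z_C$. Hence $v^*_B\ge v^{AG}_{BC}$, with strict inequality if and only if $F_C(0)>0$, i.e.\ if and only if $\kappa_A>\kappa_B$.

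For $C$ there are two cases.
\begin{itemize}
\item If $C$ has the atom, conceding at $0$ is in her support and $B$ has no atom, so $v_C^*=2(1-\alpha)$.
\item If $B$ has the atom, a $B$-concession at $0$ leaves the $A$--$C$ game at reputations $(z_C(0),z_A(0))$ with $z_C(0)<z_A(0)$, so $V^{AG}_{CA}=1-\alpha$. This gives $v_C^*=2(1-\alpha)+(2\alpha-1)F_B(0)$.
\end{itemize}

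The benchmark is $v^{AG}_{CA}+v^{AG}_{CB}=2(1-\alpha)+(2\alpha-1)(1-1/\kappa_B)$. The gap is then $(2\alpha-1)(\kappa_B-1)/\kappa_B$ when $\kappa_A\ge\kappa_B$, and $(2\alpha-1)(\kappa_A-1)/\kappa_B$ when $\kappa_B>\kappa_A$. Both cases equal $(2\alpha-1)(\min\{\kappa_A,\kappa_B\}-1)/\kappa_B$, which is positive because $\kappa_A,\kappa_B>1$.

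\emph{Step 3.} The strict inequality $v^{AG}_{AC}>v^*_A$ for every $\varepsilon$ follows directly from Proposition~\ref{Proposition: unique equilibrium payoffs}(3).

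The main obstacle I expect is Step 2: justifying rigorously that the equilibrium value equals the right limit of the plan payoffs at $t=0$. This needs three things:
\begin{itemize}
\item right-continuity of $F_C$ and $F_B$, together with the absence of atoms on $(0,\infty)$;
\item the fact that indifference holds at the infimum of the support;
\item correct attribution of the continuation value after a time-$0$ concession, namely that the non-conceding party's posterior is still the prior.
\end{itemize}
I would handle these by appealing to the indifference conditions established for Proposition~\ref{prop:full} and dominated convergence in the Lebesgue--Stieltjes integrals.
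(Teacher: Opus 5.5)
Your proposal is correct and follows essentially the same route as the paper's proof. The paper likewise identifies the atom-bearer by comparing $\kappa_A$ with $\kappa_B$; it groups $\kappa_A\le\kappa_B$ with a possibly zero atom for $B$, which also covers your no-atom case. It then reads off $v_B^*$ and $v_C^*$ from indifference and the time-$0$ atoms, observes that the resulting gap $(2\alpha-1)(\min\{\kappa_A,\kappa_B\}-1)/\kappa_B$ does not depend on $\varepsilon$, and takes $v_A^*<v_{AC}^{AG}$ directly from Proposition~\ref{Proposition: unique equilibrium payoffs}(3).
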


The limit in Proposition~\ref{prop:vanishing} is directional: it depends on the fixed
ratio pair $(\kappa_A,\kappa_B)$. When $\kappa_A>\kappa_B$, player $C$ is the laggard
under the initial-phase hazards and therefore bears the time-zero atom, which benefits
both peripherals. When $\kappa_A\le \kappa_B$, player $B$ is the laggard and concedes
at time zero instead. Even in that case, the center remains strictly worse off than under
two independent bilateral AG negotiations, because the time-zero adjustment is smaller
than the bilateral concession that $B$ would make to $C$ in isolation.

\section{Discussion and robustness}\label{Section: Extensions}

This section discusses some extensions that relax key modeling assumptions of the baseline such as the the simultaneity of negotiations, full observability of concessions, or that there are only two peripherals, and illustrates how the mechanism identified above operates in these environments.

\paragraph{Sequential negotiations:} A maintained assumption of the baseline model is that both negotiations are simultaneous. In Section~\ref{Sec: sequential} of the Online Appendix, we study a sequential environment in which $A$ and $C$ bargain first, and the negotiation between $B$ and $C$ begins only once the first dispute is resolved. Proposition~\ref{prop:SequentialUnique} shows that the equilibrium characterization carries over: there is a unique equilibrium with a common terminal time and piecewise-constant hazard rates, and the time-$0$ atom
(if any) is uniquely pinned down by a boundary condition on posteriors. The
key finding is that the central player's disadvantage is not an artifact of
simultaneity. In fact, the sequential environment sharpens it: even when all three players have \emph{identical} initial reputations, player $C$ may be forced to concede with a positive atom at time $0$, and is strictly worse off than in the bilateral benchmark. The reason is that the prospect of a future negotiation with $B$ enters $C$'s indifference condition in the first stage. This raises the effective cost to $C$ of holding firm against $A$. This is because conceding to $A$ is now less costly since it triggers the second-stage game in which $C$ retains her reputation. As a result, $A$ concedes at a strictly higher rate than $\lambda^{AG}$, his posterior grows faster than $C$'s, and belief alignment at the terminal time can require an upfront concession by $C$ even when no asymmetry in initial reputations would have compelled one in the bilateral benchmark.

\paragraph{Partial observability.}
A second assumption we relax is full observability of concessions. In
Section~\ref{Section: partial} of the Online Appendix, we study the case in
which the uninvolved peripheral observes only the timing of an agreement, not
the identity of the conceding party. An agreement is then informative about
$C$'s type regardless of who conceded: since commitment types never concede,
observing a settlement causes the uninvolved peripheral to revise downward his
belief that $C$ is behavioral. This downward jump can make $C$ the perceived
weaker party in the remaining negotiation and force an immediate concession by
$C$ the instant the outside agreement is observed. Proposition~\ref{prop:partial} shows that this force operates even when all three players have identical initial reputations---a case where the baseline features no atoms and all players concedesymmetrically at $\lambda^{AG}$. Under partial observability, $C$ is forced to concede with a strictly positive atom at $t=0$ and at a rate strictly below $\lambda^{AG}$ thereafter, while $A$ and $B$ continue to concede at $\lambda^{AG}$ and earn strictly more than in the bilateral benchmark. As a consequence, even if $C$ is the strongest player initially, she may concede with an atom at time $0$ yielding a strictly lower payoff than the bilateral benchmark. Thus, partial observability amplifies rather than attenuates the central player's disadvantage.

\paragraph{Beyond the three-player star:} A natural question is whether the mechanism identified in the three-player star extends to richer star networks. Section~\ref{sec:online-four-player} in the Online Appendix derives the corresponding local indifference conditions for an equal-pie four-player star. Relative to the three-player case, these conditions depend on continuation values from the three-player equilibrium, so the implied hazard system is a system of ODEs leading to time-varying hazard rates.

Section~\ref{sec:online-four-player} also reports a numerical solution for a representative parameterization. In that example, the equilibrium exhibits a sequential activation pattern and preserves the same qualitative force as in the three-player model: the weakest peripheral benefits from spillovers, while the center is worse off relative to the bilateral benchmark. More broadly, the structure of the four-player system suggests that the methods developed here can be extended to obtain a fuller analytical characterization of the $n$-peripheral star. 

Together, these extensions reinforce the main message of the paper. Reputational spillovers arise from the combination of a global type and belief consistency across negotiations. Relaxing simultaneity, full observability, or adding more peripheral players does not eliminate these forces and can continue to make the central player worse off relative to the bilateral benchmark.

\end{spacing}
\appendix

\section{Appendix: Proofs}\label{Section: appendix}

\subsection{Notation and Preliminaries.}

Throughout the appendix we analyze the \emph{no-concession subgame}, i.e., the subgame in which
both negotiations between $A$ and $C$ and between $B$ and $C$ are still contested. In this subgame the public history is
summarized by calendar time $t\ge 0$.

\paragraph{Concession-time distributions.}
Fix an equilibrium profile in the no-concession subgame and let
$F=(F_A,F_B,F_C)$ denote the induced cdfs of concession times.
For each player $i\in N:=\{A,B,C\}$, $F_i(t)$ is the probability (under the equilibrium profile)
that player $i$ has conceded by time $t$ \emph{conditional on no earlier concession}. Since behavioral types never concede, $F_i(t)\le 1-z_i(0)$ for all $t$, and in the equilibrium
characterized below rational types concede with probability one along the no-concession path, so
$F_i(\infty)=1-z_i(0)$.

For $t>0$ write $F_i(t-):=\lim_{s\uparrow t}F_i(s)$.

\paragraph{Posteriors, hazard rates, and ``survival''.}
Let $z_i(t)$ denote the posterior probability that $i$ is behavioral conditional on
no concession up to time $t$. Bayes' rule implies that whenever $F_i(t)<1$,
\begin{equation}\label{eq:posterior_def_appendix}
z_i(t)=\frac{z_i(0)}{1-F_i(t)}.
\end{equation}
On $(0,\infty)$ let $f_i(t):=F_i'(t)$ denote the (a.e.) density and define the hazard rate
\[
\rate_i(t):=\frac{f_i(t)}{1-F_i(t)}\qquad (t>0,\ F_i(t)<1).
\]

Using \eqref{eq:posterior_def_appendix}, we will repeatedly use the identity
\begin{equation}\label{eq:posterior_growth_appendix}
z_i(t)=z_i(0+)\;\exp\;\;\!\Bigl(\int_0^t \rate_i(u)\,du\Bigr),
\qquad z_i(0+):=\frac{z_i(0)}{1-F_i(0)}.
\end{equation}

Since $\cdf_i\cd$ is continuous on $\real_+$, so is $z_i(t)$. Equipped with this, the players' indifference conditions are: 
\begin{align}
    \icA \tag{IC:A},\label{Equation:IC-A}\\
    \icB \tag{IC:B},\label{Equation:IC-B}\\
    \icC \tag{IC:C}, \label{Equation:IC-C}
\end{align}
where we recall that $g_i(\cdot)$ (and resp. $g_C^j$)  is probability of an immediate concession by player $i$ (resp. $C$) in the induced bilateral continuation if the other peripheral concedes now.

Moreover, the complementary-slackness implications are:
\begin{align*}
\rate_C(t)+\rate_B(t)g_C^A(t)>\rate^{AG} &\implies \rate_A(t)=0,\\
\rate_C(t)+\rate_A(t)g_C^B(t)>\rate^{AG} &\implies \rate_B(t)=0,\\
\rate_A(t)\big(\pi_{AC}+g_B(t)\big)+\rate_B(t)\big(1+\pi_{AC}g_A(t)\big)>(1+\pi_{AC})\rate^{AG}
&\implies \rate_C(t)=0.
\end{align*}

For $0\le s\le t$ define the survival function
\[
\Lambda_i(t,s):=\exp\;\;\!\Bigl(-\int_s^t \lambda_i(u)\,du\Bigr),\qquad \Rate_i(t):=\Lambda_i(t,0).
\]

\paragraph{Activation and terminal times.}
Let
\[
t_i:=\inf\{t>0:\ F_i(t)>F_i(0)\}
\]
denote the first time at which player $i$ concedes with positive density (i.e., after any atom at $t=0$),
and let
\[
T_i:=\inf\{t\ge 0:\ z_i(t)=1\}
\]
denote the time at which $i$ is believed behavioral with probability one on the no-concession path.

\bigskip

\paragraph{Proof roadmap.}
The proof of Proposition~\ref{prop:full} proceeds through lemmas that
progressively pin down feasible posterior dynamics:
(i) all players exhaust their concession mass at a common finite time $T$;
(ii) on any interval strictly before $T$, at most one player can be inactive (have constant $F_i$);
(iii) whenever $F_i$ is strictly increasing on an interval, player $i$'s local indifference holds a.e.\
on that interval;
(iv) once a player starts conceding, they do not stop before $T$;
(v) a collection of posterior-ordering lemmas rules out all strict orderings of $(z_A,z_B,z_C)$
after all three players are active, forcing $z_A=z_B=z_C$ on that region; and
(vi) the time-$0$ atom(s) are pinned down by requiring posterior alignment at the moment
the last player becomes active.

\subsection{Proposition~\ref{prop:full} and its proof}\label{Appendix: Proposition of unique equilibrium}

We now state and prove a complete characterization of equilibrium behavior in the no-concession
subgame, summarized in Proposition~\ref{prop:full}. For expositional clarity,
we maintain the labeling convention from the main text and assume $z_A(0)\ge z_B(0)$.

A key structural feature of the equilibrium is that whenever the most reputable peripheral initially
strictly dominates the other two players (i.e.\ $z_A(0)>\max\{z_B(0),z_C(0)\}$), that peripheral optimally remains \emph{inactive}
at the outset. Intuitively, as long as $A$ is more reputable than $B$ and $C$, $A$ can wait without jeopardizing
his bargaining position, while the other two players must adjust: because any concession by $C$ resolves
\emph{both} negotiations, the local incentives of the remaining active peripheral are distorted and he must
concede at a higher hazard rate than in a stand-alone AG interaction.

Formally, in this region equilibrium begins with an \emph{initial two-player phase} on an interval $(0,t_A)$
during which $A$ does not concede, while $B$ and $C$ concede with constant hazard rates
\[
\rate_B=(1+\pi_{AC})\rate^{AG}\quad\text{and}\quad \rate_C=\rate^{AG},
\qquad\text{where}\qquad
\rate^{AG}=\frac{r(1-\alpha)}{2\alpha-1}.
\]
The phase ends at the \emph{activation time} $t_A$, defined as the first time at which the last inactive player
($A$) becomes willing to concede with positive density. At that moment all three players must be jointly
active; by the posterior-ordering lemmas proved below, joint activity forces \emph{posterior alignment}.
Hence the identity and size of any time-$0$ atom(s) are pinned down by the requirement that the posteriors
of the two active players reach $z_A(0)$ \emph{simultaneously} at the activation time:
\[
z_B(t_A)=z_C(t_A)=z_A(0).
\]
Equivalently, the time-$0$ atom is chosen so that, under the initial-phase hazard rates, $B$ and $C$ ``catch up''
to $A$'s prior reputation at the same calendar time.

To express this condition, define the ``catch-up times absent atoms''
\[
\tilde{t}_C:=\frac{1}{\rate^{AG}}\log\;\;\!\Bigl(\frac{z_A(0)}{z_C(0)}\Bigr),
\qquad
\tilde{t}_B:=\frac{1}{(1+\pi_{AC})\rate^{AG}}\log\;\;\!\Bigl(\frac{z_A(0)}{z_B(0)}\Bigr).
\]
Comparing $\tilde{t}_C$ and $\tilde{t}_B$ determines which player (if any) must place an atom at $t=0$ to ensure
posterior alignment at activation: if $\tilde{t}_C\ge \tilde{t}_B$, $C$ requires a (possibly zero) time-$0$ atom;
if $\tilde{t}_B>\tilde{t}_C$, $B$ requires a time-$0$ atom; and if $\tilde{t}_B=\tilde{t}_C$, no atom is required.
After $t_A$, all three players concede with the common constant hazard $\rate^{AG}$ until the common
terminal time at which posteriors reach $1$

Proposition~\ref{prop:full} below presents a more complete version of Proposition~\ref{prop:structure} that describes the equilibrium dynamics along with initial atoms.

\medskip
\renewcommand{\theHproposition}{A.\arabic{proposition}}
\setcounter{proposition}{0}
\renewcommand{\theproposition}{A.\arabic{proposition}}
\begin{proposition}[Full version of Proposition 1]\label{prop:full}
There is a unique equilibrium.
Assume, without loss of generality, $z_A(0)\ge z_B(0)$. The equilibrium concession behavior on the no-concession path is:

\begin{enumerate}
\item There exists a finite time $T<\infty$ such that $T_i=T$ for all $i\in\{A,B,C\}$.
\item Once all three players concede with positive density, their hazards coincide and are constant:
\[
\rate_A(t)=\rate_B(t)=\rate_C(t)=\rate^{AG}\quad\text{for a.e.\ such }t.
\]

\item If $z_C(0)\ge z_A(0)$ (the center is initially at least as reputable as both peripherals), then $F_C(0)=0$ and the peripherals place time-$0$ mass so that posteriors align immediately:
\[
F_i(0)=\max\left\{1-\frac{z_i(0)}{z_C(0)},\,0\right\}\quad (i\in\{A,B\}),
\qquad\text{hence}\qquad
z_A(0+)=z_B(0+)=z_C(0).
\]
All three players start conceding at time $0$ with hazard rate $\rate^{AG}$.

\item If $z_A(0)=z_B(0)>z_C(0)$, then $F_C(0)=1-\frac{z_C(0)}{z_A(0)}$ and $F_A(0)=F_B(0)=0$, so that
$z_C(0+)=z_A(0)=z_B(0)$. All three players then concede from time $0$ with hazard rate $\rate^{AG}$.

\item If $z_A(0)=z_B(0)=z_C(0)$,
then $F_A(0)=F_B(0)=F_C(0)=0$ and all three players concede from time $0$ with hazard rate $\rate^{AG}$.

\item If $z_A(0)>z_B(0)\ge z_C(0)$,
then $F_A(0)=F_B(0)=0$ and $C$ concedes at $t=0$ with probability
\[
F_C(0)=1-\frac{z_C(0)}{z_A(0)^{\frac{\pi_{AC}}{1+\pi_{AC}}}\,z_B(0)^{\frac{1}{1+\pi_{AC}}}}.
\]
Moreover, there is an initial phase $(0,t_A)$ in which $A$ is inactive and $(B,C)$ are active with
hazard rates $(1+\pi_{AC})\rate^{AG}$ and $\rate^{AG}$, respectively, until their posteriors reach $z_A(0)$ at $t_A=\tilde{t}_B$;
afterwards all three concede with hazard $\rate^{AG}$.

\item If $z_A(0)>z_C(0)>z_B(0)$, then $A$ is initially inactive and $(B,C)$ are active in an initial phase as in item~6.
At most one of $B$ and $C$ has a time-$0$ atom, determined by comparing $\tilde{t}_C$ and $\tilde{t}_B$:
\begin{enumerate}[(a)]
\item If $\tilde{t}_C>\tilde{t}_B$, then $C$ has the atom and $B$ does not:
\[
F_C(0)=1-\frac{z_C(0)}{z_A(0)^{\frac{\pi_{AC}}{1+\pi_{AC}}}\,z_B(0)^{\frac{1}{1+\pi_{AC}}}},
\qquad
F_A(0)=F_B(0)=0.
\]
\item If $\tilde{t}_B>\tilde{t}_C$, then $B$ has the atom and $C$ does not:
\[
\cdf_B(0) = 1-\frac{z_B(0)\,z_A(0)^{\pi_{AC}}}{z_C(0)^{1+\pi_{AC}}},
\qquad
F_A(0)=F_C(0)=0.
\]
\end{enumerate}
In either subcase, posteriors align when $A$ becomes active at $t_A=\min\{\tilde{t}_B,\tilde{t}_C\}$, and thereafter all three concede with hazard rate $\rate^{AG}$.
\end{enumerate}
\end{proposition}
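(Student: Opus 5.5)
The plan follows the roadmap announced before the statement. The first step establishes item~1 (a common finite terminal time). Suppose some $T_i<T_j$. After $T_i$, player $j$ knows every remaining opponent of concern is behavioral, so $j$ strictly prefers conceding immediately to waiting. If $j$ is $C$, Lemma~\ref{Lemma: C concedes to both once she concedes to one} makes this a concession on both pies. Hence $j$'s remaining rational mass must be exhausted at $T_i$, contradicting the no-atom regularity on $(0,\infty)$. Finiteness follows because, if all posteriors stayed below one forever, a rational player would strictly prefer to concede at the start of any long enough interval over which the others' hazards integrate to little; this is the usual AG argument applied to the survival functions $\Lambda_i$.

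The second step is a no-pause lemma. If two players were inactive on an interval $(a,b)\subset(0,T)$, the third would strictly prefer to shift mass from $(a,b)$ to $a$, saving discounting. Hence at most one player is inactive. Moreover, a player who is active and then pauses would rather wait out the pause than concede just before it, so once active a player stays active until $T$.

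The third step writes the local indifference conditions (IC:A)--(IC:C) wherever $F_i$ is strictly increasing, together with the complementary-slackness inequalities. These come from differentiating \eqref{eq:Ui_plan_payoff} and \eqref{eq:UC_plan_payoff} in $t$.

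The central step, and the one I expect to be the main obstacle, is the posterior-ordering argument: on any interval where all three are active, $z_A=z_B=z_C$. The indifference conditions involve the continuation atoms $g_B(t)$, $g_A(t)$, $g_C^A(t)$, $g_C^B(t)$, which switch on or off according to the ordering of $(z_A,z_B,z_C)$. I would treat all six strict orderings and the weak ones separately. In each case I substitute the corresponding $g$'s (each is either $0$ or $1-z_\cdot/z_\cdot$) into the three linear equations in $(\lambda_A,\lambda_B,\lambda_C)$. I would then show that the implied hazards make the ordering collapse before $T$, because the lower posterior grows faster and catches up, or else violate $T_i=T$. With equal posteriors all $g$'s vanish, and the system reduces to $\lambda_C=\lambda^{AG}$, $\lambda_A=\lambda_B=\lambda^{AG}$ (item~2). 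To rule out orderings that persist on a whole interval, the first thing I would try is a comparison of the ODEs $\dot z_i=\lambda_i z_i$ near $T$, where all posteriors must converge to one.

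Finally, I would characterize the initial phase and the atoms. If $z_A(0)>\max\{z_B(0),z_C(0)\}$, then $A$ cannot be active at $0^+$, since alignment would fail. So $B$ and $C$ are active, and with $\lambda_A=0$ and $g_C^A=0$ the conditions give $\lambda_C=\lambda^{AG}$ and $\lambda_B=(1+\pi_{AC})\lambda^{AG}$. Only the player who is weaker at $0^+$ may carry an atom (the standard AG argument, since two simultaneous atoms are not mutually optimal). Alignment at $t_A$ then yields the closed forms in items~6--7, by the computation in the text together with its $B$-analogue $z_B(0+)=z_C(0)^{1+\pi_{AC}}/z_A(0)^{\pi_{AC}}$. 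Items~3--5 follow from immediate alignment. Uniqueness holds because every step forced the profile; existence is checked by verifying that the constructed profile satisfies all the ICs and that no player gains by deviating to a time outside the support.
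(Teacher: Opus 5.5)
\textbf{Main gap: the common terminal time (item~1).} The premise ``after $T_i$, player $j$ knows every remaining opponent of concern is behavioral'' holds only when $i=C$, or when both peripherals reach one before $C$. Read that way, your argument disposes of every configuration except $T_A<T_B=T_C$ (or its mirror image), and that is exactly the case that needs work. After $T_A$, $B$'s only opponent is $C$, who may still be rational. $C$ still faces a possibly rational $B$, and any concession by her is global. So nobody strictly prefers to concede, and no atom is forced at $T_A$.

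What rules this case out is the spillover itself. After $T_A$, the indifference conditions of $B$ and $C$ give $\lambda_C=\lambda^{AG}$ and $\lambda_B=(1+\pi_{AC})\lambda^{AG}$. Since $z_A=1>z_C$, every concession by $B$ triggers an immediate AG concession by $C$ to $A$ with probability $g_C^A>0$. Hence just after $T_A$, player $A$ is conceded to at effective rate $\lambda^{AG}+(1+\pi_{AC})\lambda^{AG}g_C^A>\lambda^{AG}$. Waiting an instant therefore strictly beats conceding near $T_A$, which contradicts $A$ conceding with positive probability arbitrarily close to $T_A$. Your later steps (convergence to one at a common $T$, alignment, catch-up to $z_A(0)$) rely on item~1, so this argument has to be supplied.

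\textbf{Further steps weaker than what you later use.}

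First, ``once active, a player stays active'' does not follow from waiting out the pause. While a peripheral $i$ pauses, the other two keep conceding. The other peripheral $j$'s indifference forces $\lambda_C=\lambda^{AG}$, so $i$'s effective arrival rate is $\lambda_C+\lambda_j g_C^i$. This equals $\lambda^{AG}$ whenever $z_i\le z_C$, so pausing is not dominated. The paper proves no-stopping unconditionally only for $C$: during her pause, both peripherals' indifference forces $\lambda_A,\lambda_B>\lambda^{AG}$. For a peripheral it proves no-stopping only when $z_i>z_C$.

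Second, your ordering lemma covers intervals where all three are active, but your last step needs ordering facts when only two are active. To exclude $A$ being active at $0^+$ with $B$ paused, or $C$ being paused at $0^+$, you need statements such as ``$z_A>z_C\ge z_B$ implies $A$ has not yet conceded'' and ``$\min\{z_A,z_B\}\le z_C$,'' valid regardless of who is active. Alignment should also be proved for all $t$ after every player has conceded with positive probability; this also absorbs later pauses.

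Third, ``only the weaker player may carry an atom'' is false here. In item~7(a), $C$ is stronger than $B$ both at $0$ and at $0^+$, yet she bears the atom. The bilateral argument yields only that at most one of $B$ and $C$ has an atom. Alignment then places it on the laggard in catch-up time.

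Relatedly, in the initial phase it is $g_A$, not $g_C^A$, that vanishes in IC:C and delivers $\lambda_B=(1+\pi_{AC})\lambda^{AG}$. In fact $g_C^A>0$ throughout that phase, which is what keeps $A$ strictly willing to wait.
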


\blemma\label{Lemma: All WoAs end at the same time} If $\cdf$ is an equilibrium, then $T_i = T < \infty$ for all $i\in\players$.\elemma

\bprf 
The proof proceeds in three steps. We first show that if $T_i<\infty$ for some $i\in\{A,B,C\}$, then $T_j<\infty$, for all $j\neq i$.  We then show that if $T_i<\infty$ for all $i$, then $T_i=T$ for all $i$. Finally, we show that $T_i<\infty$ for all $i$. \bigskip

Suppose first $T_C<\infty$. Then, for any $t>T_C$, $A$ and $B$ strictly prefer to concede immediately. Hence, \begin{equation}T_C\geq\max\{T_A,T_B\}.\label{eq:TC_geq}\end{equation} Suppose next $T_i<\infty$ for some $i\in\{A,B\}$. If $i$ conceded by $T_i$, the continuation reduces to the bilateral AG interaction between $j$ and $C$, and hence, $T_j=T_C<\infty$. By definition of $T_i$, $i$ never concedes after $T_i$ on the ``no-concession path.'' Then at any time $t>T_i$, the continuation reduces to the bilateral AG interaction between $j$ and $C$, but with the feature that a concession by $C$ terminates both negotiations. Hence, local indifference conditions in that continuation therefore imply that for any $t\in(T_A,T]$, the players' instantaneous rates of concession are given by:
\[
\rate_j(t)=
\begin{cases}
(1+\pi_{AC})\,\rate^{AG} & \text{if } j=B, \\[6pt]
\dfrac{1+\pi_{AC}}{\pi_{AC}}\,\rate^{AG} & \text{if } j=A,
\end{cases}
\;\;
\text{and}\;\;\rate_C(t)=\rate^{AG}.\]
Hence, $T_j=T_C<\infty.$

We next show that if $T_i<\infty$ for all $i$, terminal times must coincide. Suppose, toward a contradiction, that $T_i\neq T_j$ for some $i\neq j$. By \eqref{eq:TC_geq},
the only possible strict inequality is that one peripheral reaches $1$ earlier than the other two.
Thus, without loss, assume
\[
T_A < T_B = T_C=:T.
\]
Fix any $\Delta>0$ sufficiently small such that $T_A+\Delta<T$. By definition of $T_A$ as the \emph{first}
time $z_A(t)=1$, player $A$ concedes with positive probability on every interval $(t,T_A]$ with $t<T_A$.
In particular, $A$ must be willing to concede arbitrarily close to $T_A$.

We now show that, conditional on reaching $T_A$ with no concession, $A$ strictly prefers to wait an additional
$\Delta$ rather than concede at $T_A$, contradicting the preceding paragraph.

Conditional on no concession up to $T_A$, player $A$'s concession at $T_A$ yields payoff $(1-\alpha)\pi_{AC}$. Consider instead the deviation: do not concede on $(T_A,T_A+\Delta]$ and
(if still no concession has occurred by $T_A+\Delta$) concede at $T_A+\Delta$. On $(T_A,T)$, player $A$ is believed
behavioral with probability one (since $z_A=1$), hence $A$ never concedes on-path after $T_A$; the continuation
reduces to the two-player interaction between $B$ and $C$, but with the key feature that a concession by $C$
terminates \emph{both} negotiations. The local indifference conditions in that continuation therefore imply
that on $(T_A,T)$ the hazard rates are constant and satisfy
\[
\rate_C(t)=\rate^{AG}\quad\text{and}\quad \rate_B(t)=(1+\pi_{AC})\rate^{AG}
\qquad\text{for a.e.\ }t\in(T_A,T).
\]
Moreover, since $z_A(T_A)=1$, the AG continuation following a concession by $B$ at time $t\in(T_A,T)$ features an
immediate concession by $C$ to $A$ with strictly positive probability $g_C^A(t)>0$. By continuity of posteriors
(and of the AG atom as a function of posteriors), there exists $\underline g>0$ such that
\[
g_C^A(t)\ge \underline g\quad\text{for all }t\in[T_A,T_A+\Delta].
\]

Let $\mathcal E$ be the event that within $(T_A,T_A+\Delta]$ either (a) $C$ concedes, or (b) $B$ concedes and, in
the induced continuation, $C$ concedes to $A$ immediately with probability at least $\underline g$.
Under the hazard rate representation, conditional on no concession up to $T_A$, the probability that $C$ concedes in
$(T_A,T_A+\Delta]$ is
\[
\int_{T_A}^{T_A+\Delta}\rate_B(s,T_A)\rate_C(s,T_A)\,\rate_C(s)\,ds,
\]
and the probability that $B$ concedes in $(T_A,T_A+\Delta]$ is
\[
\int_{T_A}^{T_A+\Delta}\rate_B(s,T_A)\rate_C(s,T_A)\,\rate_B(s)\,ds.
\]
Using $\rate_i(s,T_A)\ge 1-\int_{T_A}^s\rate_i(u)\,du$ and constancy of $\rate_B,\rate_C$ on this interval,
we obtain the lower bound
\[
\Pr(\mathcal E)\ \ge\ \Delta\big(\rate_C + \underline g\,\rate_B\big) - o(\Delta),
\]
where $o(\Delta)/\Delta\to 0$ as $\Delta\downarrow 0$.

Under the deviation, conditional on reaching $T_A$,
\begin{itemize}
\item on $\mathcal E$, player $A$ is conceded to and obtains payoff at least $\alpha\pi_{AC}$;
\item on $\mathcal E^c$, $A$ concedes at $T_A+\Delta$ and obtains $(1-\alpha)\pi_{AC}$.
\end{itemize}
Discounting only \emph{reduces} payoffs under the deviation relative to evaluating them at $T_A$, so a lower bound
on the deviation payoff is obtained by discounting \emph{all} payoffs by $e^{-r\Delta}$.
Hence, conditional on reaching $T_A$, the deviation payoff is at least
\[
e^{-r\Delta}\pi_{AC}\Big((1-\alpha)\Pr(\mathcal E^c)+\alpha\Pr(\mathcal E)\Big)
= e^{-r\Delta}\pi_{AC}\Big((1-\alpha)+(2\alpha-1)\Pr(\mathcal E)\Big).
\]
Subtracting the immediate-concession payoff $(1-\alpha)\pi_{AC}$ and using $e^{-r\Delta}\ge 1-r\Delta$ yields
\[
\text{gain from waiting }\ge \pi_{AC}\Big((2\alpha-1)\Pr(\mathcal E) - r\Delta(1-\alpha)\Big).
\]
Using the lower bound $\Pr(\mathcal E)\ge \Delta(\rate_C+\underline g\,\rate_B) - o(\Delta)$ and
$\rate_C=\rate^{AG}$ gives
\[
(2\alpha-1)\Pr(\mathcal E) - r\Delta(1-\alpha)
\ \ge\ \Delta(2\alpha-1)\underline g\,\rate_B - o(\Delta)
\ =\ \Delta\underline g(2\alpha-1)(1+\pi_{AC})\rate^{AG} - o(\Delta).
\]
For $\Delta>0$ sufficiently small, the right-hand side is strictly positive. Hence $A$ strictly prefers to wait,
contradicting that $A$ concedes with positive probability arbitrarily close to $T_A$. Therefore, if at least one terminal time is finite, the terminal times
must coincide.

Suppose finally that $T_A=T_B=T_C=\infty$. Fix $\varepsilon\in(0,z_C(0))$ and define
\[
t_2:=\inf\{t\ge 0:\ F_C(t)\ge 1-z_C(0)-2\varepsilon\},\qquad
t_1:=\inf\{t\ge 0:\ F_C(t)\ge 1-z_C(0)-\varepsilon\}.
\]
Continuity of $F_C$ implies $t_2<t_1<\infty$. By construction,
\[
F_C(t_1)-F_C(t_2)\le \varepsilon.
\]
Consider player $A$'s continuation problem at time $t_2$ conditional on no concession up to $t_2$.
If $A$ concedes immediately at $t_2$, he obtains $(1-\alpha)\pi_{AC}$ (evaluated at $t_2$).
If instead $A$ commits to wait until $t_1$ and then (if necessary) concede, then the most favorable outcomes for $A$
over $(t_2,t_1]$ arise from being conceded to by $C$; this can happen with (conditional) probability at most
$\varepsilon$ by the preceding inequality. Therefore $A$'s continuation payoff from waiting until $t_1$ is bounded above by
\[
\varepsilon\alpha\pi_{AC} + e^{-r(t_1-t_2)}\Big(\varepsilon\alpha\pi_{AC}+(1-\varepsilon)(1-\alpha)\pi_{AC}\Big).
\]
For $\varepsilon>0$ sufficiently small, this upper bound is strictly less than $(1-\alpha)\pi_{AC}$.
Thus $A$ strictly prefers conceding at $t_2$ to waiting until $t_1$, contradicting equilibrium.
Hence $T<\infty$.
\eprf

\blemma\label{Lemma: at most one CDF constant} Let $0<t'<t<T$. If $F_i(t)=F_i(t')$ for some $i\in N$, then for each $j\in N\setminus\{i\}$,
\[
F_j(t)>F_j(t').
\]
Equivalently, on any interval strictly before $T$, at most one player can be inactive. \elemma

\bprf

Fix $0<t'<t<T$ and suppose $F_i(t)=F_i(t')$. 

Suppose there exists $j\neq i$ with $F_j(t)=F_j(t')$ and suppose the remaining player $k\in N\setminus\{i,j\}$
assigns positive concession probability on $(t',t]$. Consider $k$'s strategy restricted to histories in which no concession occurs before $t'$.
Holding fixed all play outside $(t',t]$, define a deviation for player $k$ that shifts \emph{all} concession probability
that $k$ assigns to times in $(t',t]$ to an atom at time $t'$. Because players $i$ and $j$ are inactive on $(t',t]$,
this deviation does not alter the probability distribution over \emph{who concedes first} conditional on reaching $t'$;
it only (weakly) reduces the calendar time of agreement. Since payoffs are discounted in calendar time and the
division conditional on who concedes is unchanged, the deviation strictly increases $k$'s expected payoff whenever
$k$ concedes with positive probability on $(t',t]$. This contradicts optimality.

If all three $F_A,F_B,F_C$ were constant on $[t',t]$, then since $t<T$ there exists some $t''>t$ at which some player
concedes with positive probability. The same time-shifting argument (moving that concession probability to an atom
at $t'$) yields a strict improvement, again contradicting equilibrium.

Therefore, if $F_i$ is constant on $[t',t]$, both remaining players must strictly increase on that interval.

\eprf

\blemma\label{Lemma: IC hold if CDFs increasing}Suppose that $\cdf_i\cd$ is strictly increasing on $[t_1,t_2]$. Then, the indifference condition for player $i$ holds at a.e. $t \in (t_1,t_2)$. \elemma
\bprf

\noindent\textbf{Case 1: $i=A$.}

Define 
\begin{align*}\begin{split}
&S_A(s;t_1)
:= \rate_B(s,t_1)\,\rate_C(s,t_1)
 = \exp\!\left(-\int_{t_1}^{s}\big(\rate_B(u)+\rate_C(u)\big)\,du\right),\\
&R_A(s)
:= \rate_C(s)\,\alpha
   + \rate_B(s)\Big(g_C^A(s)\alpha + (1-g_C^A(s))(1-\alpha)\Big),
   \end{split}
\end{align*}

where we can interpret $S_A(s;t_1)$ as the joint survival from $t_1$ to $s$ (i.e., no concession by $B$ or $C$ on $(t_1,s]$ and $R_A(s)$ as the instantaneous expected “reward rate” (in units of $\pi_{AC}$) from the first event at time $s$. Fix $t_1$ and define $u_A(t)$ as $A$'s continuation payoff at time $t_1$ from conceding at
exactly time $t\in[t_1,t_2]$, conditional on no concession strictly before $t$:

\begin{align*}
\frac{u_A(t)}{\pi_{AC}}
&= \int_{t_1}^{t} \exp\!\big(-r(s-t_1)\big)\, S_A(s;t_1)\, R_A(s)\, ds
   \;+\; \exp\!\big(-r(t-t_1)\big)\, S_A(t;t_1)\,(1-\alpha),
\qquad t\in[t_1,t_2].
\end{align*}

By piecewise continuity of $\rate_B(s),\rate_C(s)$ and boundedness of the integrands, $u_A(\cdot)$ is continuous on
$[t_1,t_2]$.

Let $M:=\arg\max_{w\in[t_1,t_2]}u_A(w)$ be the set of maximizers. If $A$ assigns positive concession density at some
$t\in(t_1,t_2)$, then $t\in M$ by optimality. Since $F_A$ is strictly increasing on $[t_1,t_2]$, its support is dense
in $(t_1,t_2)$, hence $M$ is dense in $(t_1,t_2)$. A continuous function that attains its maximum on a dense set
must be constant; therefore $u_A(t)$ is constant on $[t_1,t_2]$.

Since $u_A$ is absolutely continuous (indeed differentiable a.e.) with derivative obtained by the fundamental theorem
of calculus, differentiating $u_A(t)$ a.e.\ on $(t_1,t_2)$ and setting $u_A'(t)=0$ yields \eqref{Equation:IC-A} a.e.\ on $(t_1,t_2)$.

$(t_1,t_2)$.

\noindent\textbf{Case 2: $i=C$.}
Fix $t_1$. For each $t\in[t_1,t_2]$, let $u_C(t)$ denote $C$'s continuation payoff
\emph{evaluated at time $t_1$ and conditional on no concession up to $t_1$},
when $C$ follows the strategy: “concede at time $t$ if no player concedes before $t$.”

Define similarly

\begin{align*}
S_C(s;t_1)
&:= \rate_A(s,t_1)\,\rate_B(s,t_1)
 = \exp\!\left(-\int_{t_1}^{s}\big(\rate_A(u)+\rate_B(u)\big)\,du\right),\\
V_C^{A}(s)
&:= \alpha\,\pi_{AC}
   \;+\; \Big(g_B(s)\alpha + (1-g_B(s))(1-\alpha)\Big),\\
V_C^{B}(s)
&:= \alpha
   \;+\; \pi_{AC}\Big(g_A(s)\alpha + (1-g_A(s))(1-\alpha)\Big),\\
R_C(s)
&:= \rate_A(s)\,V_C^{A}(s) + \rate_B(s)\,V_C^{B}(s).
\end{align*}

 We can then write $C$'s continuation payoff from following this strategy (to concede at $t$ if nobody else has by then) as:
\begin{align*}
u_C(t)
&= \int_{t_1}^{t} e^{-r(s-t_1)}\, S_C(s;t_1)\, R_C(s)\, ds
   \;+\; e^{-r(t-t_1)}\, S_C(t;t_1)\,(1+\pi_{AC})(1-\alpha),
\qquad t\in[t_1,t_2].
\end{align*}

The same density-of-support argument implies $u_C(\cdot)$ is constant on $[t_1,t_2]$, hence $u_C'(t)=0$ a.e.\ and
therefore \eqref{Equation:IC-C} holds a.e.\ on $(t_1,t_2)$.

\eprf 

\blemma\label{Lemma: C doesn't stop conceding} If $\cdf_C(t) > 0$, then $\cdf_C\cd$ is strictly increasing on $[t,T]$. Therefore, \eqref{Equation:IC-C} holds for a.e. $s \in [t, T]$. 
\elemma

\bprf
 Assume $F_C(t)>0$ for some $t<T$ and suppose, toward a contradiction, that $F_C$ is not strictly increasing on $[t,T]$.
Then there exist $t<t'<T$ such that $F_C(t)=F_C(t')>0$.

Let
\[
\tau:=\inf\{s\le t:\ F_C(s)=F_C(t')\}
\]
be the left endpoint of the first plateau at level $F_C(t')$. Suppose for now $\tau>0$. By construction, $F_C$ is constant on $[\tau,t']$ and
strictly increasing on $(\tau-\varepsilon,\tau)$ for every sufficiently small $\varepsilon>0$ (by continuity).

\medskip
\noindent\textbf{Step 1: on $(\tau,t')$, both peripherals are active and satisfy their ICs.}
Because $F_C$ is constant on $[\tau,t']$, Lemma~\ref{Lemma: at most one CDF constant} implies that both $F_A$ and $F_B$ are
strictly increasing on $[\tau,t']$. Hence, by Lemma~\ref{Lemma: IC hold if CDFs increasing}, \eqref{Equation:IC-A} and \eqref{Equation:IC-B}
hold for a.e.\ $s\in(\tau,t')$. Since $\rate_C(s)=0$ a.e.\ on $(\tau,t')$, these equalities imply
\[
\rate_B(s)g_C^A(s)=\rate^{AG}\quad\text{and}\quad \rate_A(s)g_C^B(s)=\rate^{AG}
\qquad\text{for a.e.\ }s\in(\tau,t').
\]
In particular, $g_C^A(s),g_C^B(s)>0$, and therefore, $\rate_A(s),\rate_B(s)> \rate^{AG}$ a.e.\ on $(\tau,t')$.

\medskip
\noindent\textbf{Step 2: $C$ strictly prefers not to concede at $\tau$, contradicting that $C$ concedes before $\tau$.}

Let $V_C(\tau)$ denote $C$'s equilibrium continuation value at time $\tau$
conditional on no concession up to $\tau$.
Note that if $C$ concedes immediately at $\tau$, her payoff equals
\[
u_C(\tau,\tau)=(1+\pi_{AC})(1-\alpha).
\]
We show that $V_C(\tau)>(1+\pi_{AC})(1-\alpha)$, implying that $C$ strictly prefers to wait at $\tau$.

Consider waiting for a small $\varepsilon>0$. Over $(\tau,\tau+\varepsilon]$, either $A$ concedes, or $B$ concedes,
or neither concedes. Using only the \emph{minimal} payoff $C$ can guarantee in each event gives a lower bound:
\begin{itemize}
\item if $A$ concedes in $(\tau,\tau+\varepsilon]$, then $C$ obtains at least $\alpha\pi_{AC}$ in the negotiation between $A$ and $C$ and at least
$(1-\alpha)$ in the negotiation between $B$ and $C$;
\item if $B$ concedes in $(\tau,\tau+\varepsilon]$, then $C$ obtains at least $\alpha$ in the negotiation between $B$ and $C$ and at least
$(1-\alpha)\pi_{AC}$ in the negotiation between $A$ and $C$;
\item if no concession occurs in $(\tau,\tau+\varepsilon]$, then at $\tau+\varepsilon$ player $C$ can still secure
$(1+\pi_{AC})(1-\alpha)$ by conceding immediately.
\end{itemize}
Discounting over $\varepsilon$ satisfies $e^{-r\varepsilon}\ge 1-r\varepsilon$. Therefore,
\begin{align*}
V_C(\tau)
&\ge \Big(\int_{\tau}^{\tau+\varepsilon}\rate_A(u)\,du\Big)\,(\alpha\pi_{AC}+1-\alpha)
 + \Big(\int_{\tau}^{\tau+\varepsilon}\rate_B(u)\,du\Big)\,(\alpha+\pi_{AC}(1-\alpha))\\
&\quad + (1-r\varepsilon)\Big(1-\int_{\tau}^{\tau+\varepsilon}\rate_A(u)\,du-\int_{\tau}^{\tau+\varepsilon}\rate_B(u)\,du\Big)
(1+\pi_{AC})(1-\alpha).
\end{align*}
Since $\rate_A(\cd),\rate_B(\cd) > \rate^{AG}$ for a.e.\ $s \in (\tau,\tau+\varepsilon]$, expanding the right-hand side and writing $P := (1+\pi_{AC})(1-\alpha)$ gives
\begin{align*}
V_C(\tau) &\geq P + (2\alpha-1)\!\left(\pi_{AC}\!\int_\tau^{\tau+\varepsilon}\!\rate_A(u)\,du + \int_\tau^{\tau+\varepsilon}\!\rate_B(u)\,du\right) - r\varepsilon\, P + O(\varepsilon^2)\\
&> P + (2\alpha-1)(1+\pi_{AC})\rate^{AG}\varepsilon - r\varepsilon\, P + O(\varepsilon^2) = P + O(\varepsilon^2),
\end{align*}
where the last equality uses $(2\alpha-1)\rate^{AG} = r(1-\alpha)$. For $\varepsilon>0$ sufficiently small, $V_C(\tau) > P = (1+\pi_{AC})(1-\alpha)$.

But $(1+\pi_{AC})(1-\alpha)$ is precisely the payoff $C$ obtains by conceding at $\tau$.
Therefore, $C$ strictly prefers to wait rather than concede at $\tau$. By continuity of payoffs in time,
$C$ also strictly prefers to wait rather than concede at any time in a left-neighborhood of $\tau$, contradicting that
$F_C$ is strictly increasing just before $\tau$.

This contradiction shows that no plateau can occur after $F_C$ becomes positive. Hence $F_C$ is strictly increasing on
$[t,T]$. 

Suppose $\tau=0$. Then $F_C(0)=F_C(t')>0$ and $F_C$ is constant on $(0,t']$, so $\rate_C(t)=0$ for a.e.\ $t\in(0,t']$.
Fix $\varepsilon\in(0,t']$ and set $t_0=\varepsilon/2$.
Since $F_C$ is constant on $[t_0,\varepsilon]$, Lemma~\ref{Lemma: at most one CDF constant} implies
$F_A(\varepsilon)>F_A(t_0)$ and $F_B(\varepsilon)>F_B(t_0)$, hence $\rate_A,\rate_B>0$ a.e.\ on $(0,\varepsilon)$.
By Lemma~\ref{Lemma: IC hold if CDFs increasing}, \eqref{Equation:IC-A} and \eqref{Equation:IC-B} hold a.e.\ on $(t_0,\varepsilon)$; with $\rate_C=0$ this implies
$\rate_B g_C^A=\rate_A g_C^B=\rate^{AG}$ and therefore $\rate_A,\rate_B>\rate^{AG}$ a.e.\ on $(t_0,\varepsilon)$.

Consider $C$'s deviation at time $0$: wait until time $\varepsilon$ and (if still unresolved) concede at $\varepsilon$.
Let $S_C(s;0):=\rate_A(s,0)\rate_B(s,0)$. The deviation payoff is bounded below by
\[
\int_0^\varepsilon e^{-rs}S_C(s;0)\Big[
\rate_A(s)(\alpha\pi_{AC}+1-\alpha)+\rate_B(s)(\alpha+\pi_{AC}(1-\alpha))
\Big]ds
+e^{-r\varepsilon}S_C(\varepsilon;0)(1+\pi_{AC})(1-\alpha),
\]
which exceeds $(1+\pi_{AC})(1-\alpha)$ for $\varepsilon>0$ sufficiently small since $\rate_A,\rate_B>\rate^{AG}$.
Thus $C$ strictly prefers waiting to conceding at time $0$, contradicting $F_C(0)>0$.

The final claim follows from Lemma~\ref{Lemma: IC hold if CDFs increasing}.

\eprf

\blemma\label{Lemma: A doesn't stop conceding once above C} 
Fix $i\in\{A,B\}$. If for some $t<T$,
\[
z_i(t)>z_i(0)\quad\text{and}\quad z_i(t)>z_C(t),
\]
then for every $t'>t$ such that $z_i(t')\ge z_C(t')$, we have $F_i(t')>F_i(t)$.
Equivalently, $\cdf_i$ cannot be constant on any interval $[t,t']\subset(0,T)$ with $z_i(t')\ge z_C(t')$.
\elemma

\bprf 

We prove the claim for $i=A$ (the case $i=B$ is symmetric). Assume $z_A(t)>z_A(0)$ and $z_A(t)>z_C(t)$.

 Suppose, toward a contradiction, that there exists $t'>t$ such that

\[
F_A(t')=F_A(t)
\qquad\text{and}\qquad
z_A(t')\ge z_C(t').
\]
Since $F_A$ is constant on $[t,t']$, we have $z_A(s)=z_A(t)$ for all $s\in[t,t']$.
Since $z_C(\cdot)$ is weakly increasing, $z_C(s)\le z_C(t')$ for all $s\in[t,t']$.
Therefore,
\begin{equation}
z_A(s)=z_A(t')\ge z_C(t')\ge z_C(s)
\qquad\text{for all }s\in[t,t'].
\tag{$\dagger$}\label{Equation: dagger}
\end{equation}

Let
\[
\tau:=\inf\{s\le t:\ F_A(s)=F_A(t)\}.
\]
Then $F_A$ is constant on $[\tau,t']$. Moreover, since $z_A(t)>z_A(0)$, we have $F_A(t)>0$ and (by continuity at
positive times) $\tau\in(0,t]$. Hence $F_A$ is not constant on any left-neighborhood of $\tau$.

Because $F_A$ is constant on $[\tau,t']$, Lemma~\ref{Lemma: at most one CDF constant} implies that both $F_B$ and $F_C$ are strictly increasing on
$[\tau,t']$. Hence $\rate_B,\rate_C>0$ a.e.\ on $(\tau,t')$, and by Lemma~\ref{Lemma: IC hold if CDFs increasing}, \eqref{Equation:IC-B} and
\eqref{Equation:IC-C} hold a.e.\ on $(\tau,t')$. Since $F_A$ is constant on $(\tau,t')$, we have $\rate_A=0$
a.e.\ on $(\tau,t')$, so \eqref{Equation:IC-B} yields
\begin{equation}
\rate_C(u)=\rate^{AG}\qquad\text{for a.e.\ }u\in(\tau,t').
\tag{$\ddagger$}\label{Equation: dagger2}
\end{equation}

Next, by ($\dagger$) applied on $[\tau,t]$, we have $z_A(u)>z_C(u)$ for all $u\in[\tau,t]$ (since $z_A(t)>z_C(t)$),
so in the AG continuation following $B$'s concession at time $u$ we have
\[
\atom_C^A(u)>0
\quad\text{and}\quad
g_A(u)=0
\qquad\text{for all }u\in[\tau,t].
\]
Using $\rate_A=0$ and $g_A=0$ in \eqref{Equation:IC-C} therefore implies
\[
\rate_B(u)=(1+\pi_{AC})\rate^{AG}
\qquad\text{for a.e.\ }u\in(\tau,t).
\tag{$\star$}
\]

By continuity of posteriors on $(0,T)$ and the fact that $g_C^A(u)=\max\{1-z_C(u)/z_A(u),0\}$ in the AG continuation,
there exist $\varepsilon>0$ and $\underline g>0$ such that $\tau+\varepsilon\le t$ and
\[
g_C^A(u)\ge \underline g\qquad\text{for all }u\in[\tau,\tau+\varepsilon].
\tag{$\star\star$}
\]

Now, let us first suppose that $\tau > 0$. Since $F_A$ is not constant on any left-neighborhood of $\tau$, there exists a sequence $s_n\uparrow\tau$
such that $\rate_A(s_n)>0$ (at points of continuity). Consider such an $s=s_n$ close enough to $\tau$.
Using ($\ddagger$), ($*$), and ($**$), the same $\varepsilon$-deviation argument as in Lemma~\ref{Lemma: All WoAs end at the same time}
implies that, conditional on reaching time $s$ with no concession, player $A$ strictly prefers to wait until
$s+\varepsilon$ (and concede then if necessary) rather than concede at time $s$:
over $(s,s+\varepsilon]$, the effective rate at which $A$ is conceded to is at least
$\rate_C+\rate_B\,\underline g>\rate^{AG}$, which yields a strictly positive first-order gain that dominates
the $O(\varepsilon)$ discounting loss. This contradicts $\rate_A(s)>0$.

Now suppose that $\tau = 0$. Therefore, $z_A(0+) > z_A(0)$. Thus, $\cdf_A(0) > 0$, i.e., there is a time $0$ atom from $A$. Following the previous argument, the effective rate at which $A$ is conceded to on $(0,\epsilon)$ is at least $\rate_C+\rate_B\, \underline g > \rate^{AG}$, which yields a strict incentive for $A$ to wait at $0$. Therefore, there cannot be an atom at $0$ by $A$ contradicting $z_A(t) > z_A(0)$. 

Therefore no such $t'>t$ can exist, and $F_A(t')>F_A(t)$ for all $t'>t$ with $z_A(t')\ge z_C(t')$.

\eprf

\blemma\label{Lemma: if zA > zB,zC then, A must not have conceded}
Let $i,j\in\{A,B\}$ with $i\neq j$. If for some $u>0$,
\[
z_i(u)>z_C(u)\ge z_j(u),
\]
then $z_i(u)=z_i(0)$.
\elemma

\bprf 
Without loss of generality, take $(i,j)=(A,B)$. Suppose $z_A(u)>z_C(u)\ge z_B(u)$ for some $u>0$.
Assume toward a contradiction that $z_A(u)>z_A(0)$.

Define
\[
\tau:=\sup\Big\{s\in[u,T):\ 1>z_A(t)>z_C(t)\ge z_B(t)\ \text{for all }t\in[u,s]\Big\}.
\]
By Lemma~\ref{Lemma: A doesn't stop conceding once above C}, $F_A$ is strictly increasing on $[u,\tau]$, hence
\eqref{Equation:IC-A} holds a.e.\ on $(u,\tau)$ by Lemma~\ref{Lemma: IC hold if CDFs increasing}.

We consider two cases for $\rate_B$ on $(u,\tau)$.

\medskip
\noindent\textbf{Case 1: $\rate_B>0$ on a set of positive measure in $(u,\tau)$.}
Since $\rate_B$ is piecewise continuous, there exists a point $s\in(u,\tau)$ at which $\rate_B$ is continuous
and $\rate_B(s)>0$, hence $\rate_B>0$ on some open interval $(s_1,s_2)\subset(u,\tau)$.
On $(s_1,s_2)$, $F_B$ is strictly increasing, so \eqref{Equation:IC-B} holds a.e.\ there.

On $(u,\tau)$ we have $z_B\le z_C$, hence the AG atom $g_C^B(\cdot)=0$ throughout, while $z_A>z_C$ implies
$g_C^A(\cdot)>0$ throughout. Therefore, on $(s_1,s_2)$, \eqref{Equation:IC-B} implies $\rate_C=\rate^{AG}$ a.e.,
whereas \eqref{Equation:IC-A} implies $\rate_C=\rate^{AG}-\rate_B g_C^A<\rate^{AG}$ a.e., a contradiction.

\medskip
\noindent\textbf{Case 2: $\rate_B=0$ a.e.\ on $(u,\tau)$.}
Then $F_B$ is constant on $(u,\tau)$, so Lemma~\ref{Lemma: at most one CDF constant} implies that both $F_A$ and $F_C$
are strictly increasing on $(u,\tau)$. Hence \eqref{Equation:IC-A} and \eqref{Equation:IC-C} hold a.e.\ on $(u,\tau)$.
Since $z_A>z_C$ on $(u,\tau)$, we have $g_A(\cdot)=0$ there, and thus \eqref{Equation:IC-C} together with $\rate_B=0$
implies $\rate_A=\frac{(1+\pi_{AC})}{\piac}\rate^{AG}$ a.e.\ on $(u,\tau)$, while \eqref{Equation:IC-A} implies
$\rate_C=\rate^{AG}$ a.e.\ on $(u,\tau)$.

Therefore $z_A$ grows strictly faster than $z_C$ on $(u,\tau)$, so the strict inequality $z_A>z_C$ persists at $\tau$
and (by continuity) beyond $\tau$ unless $z_A(\tau)=1$. If $z_A(\tau)=1$, then $T_A< T$ contradicting
Lemma~\ref{Lemma: All WoAs end at the same time}. If $z_A(\tau)<1$, then the strict inequality persists on a right-neighborhood of $\tau$,
contradicting the definition of $\tau$ as a supremum. Either way yields a contradiction.

Thus $z_A(u)>z_A(0)$ is impossible, so $z_A(u)=z_A(0)$.
\eprf

\blemma\label{Lemma: not possible to have za, zb>zc} For all $t>0$,
\[
\min\{z_A(t),z_B(t)\}\le z_C(t).
\]\elemma

\bprf 

Suppose, toward a contradiction, that $\min\{z_A(t),z_B(t)\}>z_C(t)$ for some $t>0$.
Define
\[
\tau:=\sup\Big\{s\ge t:\ 1>z_A(u),\ z_B(u)>z_C(u)\ \text{for all }u\in[t,s]\Big\}.
\]
On $[t,\tau]$, both peripherals are strictly more reputable than $C$, hence in the AG continuation after $C$ concedes
we have $g_A(u)=g_B(u)=0$ for all $u\in[t,\tau]$.

We claim $F_C$ cannot be constant on $(t,\tau)$. If $F_C$ were constant on $(t,\tau)$, then $z_C$ would be constant there,
while $z_A$ and $z_B$ are weakly increasing. In particular, the strict inequality $z_A,z_B>z_C$ would persist at $\tau$.
If $\max\{z_A(\tau),z_B(\tau)\}=1$, then Lemma~2 is violated. If instead
$1>z_A(\tau),z_B(\tau)>z_C(\tau)$, continuity implies the same strict inequality holds on a right-neighborhood of $\tau$,
contradicting the definition of $\tau$. Hence $F_C$ is not constant on $(t,\tau)$.

Therefore there exists $s\in(t,\tau)$ with $F_C(s)>F_C(t)$. By Lemma~\ref{Lemma: C doesn't stop conceding}, $F_C$ is strictly increasing on
$[s,T]$, so $\rate_C>0$ a.e.\ on $(s,\tau)$ and \eqref{Equation:IC-C} holds a.e.\ on $(s,\tau)$ by Lemma~\ref{Lemma: IC hold if CDFs increasing}.
Since $g_A=g_B=0$ on $(s,\tau)$, \eqref{Equation:IC-C} simplifies to
\[
\pi_{AC} \rate_A(u)+\rate_B(u)=(1+\pi_{AC})\rate^{AG}\qquad\text{for a.e.\ }u\in(s,\tau).
\]

By Lemma~\ref{Lemma: at most one CDF constant}, at least one of $F_A,F_B$ is strictly increasing on $(s,\tau)$.
We now show each possible activity pattern leads to a contradiction.

\medskip
\noindent\textbf{(i) Exactly one peripheral is active on $(s,\tau)$.}
Without loss, suppose $F_A$ is strictly increasing and $F_B$ is constant on $(s,\tau)$.
Then $\rate_B=0$ a.e.\ on $(s,\tau)$, so the simplified condition yields $\rate_A=(1+\pi_{AC})\rate^{AG}$ a.e.
Moreover, since $F_A$ is strictly increasing, \eqref{Equation:IC-A} holds a.e.\ on $(s,\tau)$, implying
$\rate_C=\rate^{AG}-\rate_B g_C^A=\rate^{AG}$ a.e.\ on $(s,\tau)$.
Hence $z_A$ grows strictly faster than $z_C$ on $(s,\tau)$, implying $z_A(\tau)>z_C(\tau)=z_B(\tau)$ and $z_A(\tau)>z_A(0)$.
This contradicts Lemma~\ref{Lemma: if zA > zB,zC then, A must not have conceded}.

\medskip
\noindent\textbf{(ii) Both peripherals are active on $(s,\tau)$.}
Then \eqref{Equation:IC-A} and \eqref{Equation:IC-B} hold a.e.\ on $(s,\tau)$, and because $z_A,z_B>z_C$ on $(s,\tau)$ we have
$g_C^A,g_C^B>0$ there. Hence \eqref{Equation:IC-A} and \eqref{Equation:IC-B} imply $\rate_C<\rate^{AG}$ a.e.\ on $(s,\tau)$,
while the simplified condition fixes $\pi_{AC}\rate_A+\rate_B=(1+\pi_{AC})\rate^{AG}$. Therefore at least one of
$\rate_A,\rate_B$ exceeds $\rate^{AG}$ on a set of positive measure, implying at least one of $z_A,z_B$ grows strictly
faster than $z_C$ on $(s,\tau)$. At $\tau$, maximality of $\tau$ forces $\min\{z_A(\tau),z_B(\tau)\}=z_C(\tau)$ while the
other peripheral strictly exceeds $z_C(\tau)$, and both peripherals have strictly increased from their priors on $(s,\tau)$.
This contradicts Lemma~\ref{Lemma: if zA > zB,zC then, A must not have conceded} (applied to the peripheral that strictly exceeds $z_C$ at $\tau$).

Both cases are impossible, so $\min\{z_A(t),z_B(t)\}>z_C(t)$ cannot occur.

\eprf

\blemma\label{Lemma: Not possible to have zC > zA,zB} For all $t>0$,
\[
z_C(t)\le \max\{z_A(t),z_B(t)\}.
\]
\elemma
\bprf
Suppose, toward a contradiction, that $z_C(t)>\max\{z_A(t),z_B(t)\}$ for some $t>0$.

\medskip
\noindent\textbf{Case 1: $z_C(t)=z_C(0)$.}
Then $\cdf_C$ is constant on $(0,t)$ and hence $\rate_C(s)=0$ for a.e.\ $s\in(0,t)$.
Since posteriors are weakly increasing in $s$, we have
\[
z_C(s)>\max\{z_A(s),z_B(s)\}\qquad\text{for all }s\in(0,t].
\]
By Lemma~\ref{Lemma: at most one CDF constant}, both $\cdf_A$ and $\cdf_B$ are strictly increasing on $(0,t)$.
Therefore, by Lemma~\ref{Lemma: IC hold if CDFs increasing}, \eqref{Equation:IC-A} and \eqref{Equation:IC-B}
hold for a.e.\ $s\in(0,t)$.

Fix any such $s\in(0,t)$. Since $z_C(s)>z_A(s)$ and $z_C(s)>z_B(s)$, player $C$ is the strong player in the induced
AG continuation following either peripheral's concession at time $s$. Hence
\[
\atom_C^A(s)=\atom_C^B(s)=0.
\]
Substituting into \eqref{Equation:IC-A} (or \eqref{Equation:IC-B}) yields $\rate_C(s)=\rate^{AG}$ for a.e.\ $s\in(0,t)$,
contradicting $\rate_C(s)=0$ there.

\medskip
\noindent\textbf{Case 2: $z_C(t)>z_C(0)$.}
Define
\[
\tau:=\sup\Big\{s\ge t:\ 1>z_C(u)>\max\{z_A(u),z_B(u)\}\ \text{for all }u\in[t,s]\Big\}.
\]
Then $\tau>t$ and, by continuity of posteriors on $(0,T)$,
\[
z_C(u)>\max\{z_A(u),z_B(u)\}\qquad\text{for all }u\in(t,\tau).
\]
Since $z_C(t)>z_C(0)$, we have $\cdf_C(t)>0$, and Lemma~\ref{Lemma: C doesn't stop conceding} implies that $\cdf_C$
is strictly increasing on $(t,\tau)$. Hence, by Lemma~\ref{Lemma: IC hold if CDFs increasing}, \eqref{Equation:IC-C}
holds for a.e.\ $u\in(t,\tau)$.

\smallskip
\noindent\emph{Step 1: at least one peripheral accumulates strictly less than $\rate^{AG}(\tau-t)$.}
Fix $u\in(t,\tau)$ such that \eqref{Equation:IC-C} holds. Since $z_C(u)>z_A(u)$ and $z_C(u)>z_B(u)$, in the AG continuation
induced by $C$'s concession at time $u$, both peripherals are weak against $C$, so
\[
\atom_A(u)>0\quad\text{and}\quad \atom_B(u)>0.
\]
Thus \eqref{Equation:IC-C} reads
\[
\rate_A(u)\big(\piac+\atom_B(u)\big)+\rate_B(u)\big(1+\piac\atom_A(u)\big)=(1+\piac)\rate^{AG},
\]
with $\piac+\atom_B(u)>\piac$ and $1+\piac\atom_A(u)>1$. Therefore, whenever $(\rate_A(u),\rate_B(u))\neq (0,0)$,
we have the strict inequality
\[
\piac\,\rate_A(u)+\rate_B(u) < (1+\piac)\rate^{AG}.
\]
By Lemma~\ref{Lemma: at most one CDF constant}, $\cdf_A$ and $\cdf_B$ cannot both be constant on any subinterval of $(t,\tau)$,
and since hazard rates are piecewise continuous, this implies $(\rate_A(u),\rate_B(u))\neq(0,0)$ for a.e.\ $u\in(t,\tau)$.
Integrating over $(t,\tau)$ yields
\[
\piac\int_t^\tau \rate_A(u)\,du+\int_t^\tau \rate_B(u)\,du < (1+\piac)\rate^{AG}(\tau-t),
\]
and hence
\[
\min\Big\{\int_t^\tau \rate_A(u)\,du,\ \int_t^\tau \rate_B(u)\,du\Big\}<\rate^{AG}(\tau-t).
\tag{$\dagger$}
\]

\smallskip
\noindent\emph{Step 2: $\rate_C=\rate^{AG}$ a.e.\ on $(t,\tau)$ and $z_C(\tau)>\min\{z_A(\tau),z_B(\tau)\}$.}
On $(t,\tau)$ we have $z_C>\max\{z_A,z_B\}$, so as in Case~1,
\[
\atom_C^A(u)=\atom_C^B(u)=0\qquad\text{for all }u\in(t,\tau).
\]
Therefore, whenever $\rate_A(u)>0$, \eqref{Equation:IC-A} implies $\rate_C(u)=\rate^{AG}$; whenever $\rate_B(u)>0$,
\eqref{Equation:IC-B} implies $\rate_C(u)=\rate^{AG}$. Again using Lemma~\ref{Lemma: at most one CDF constant} and
piecewise continuity, at least one of $\rate_A,\rate_B$ is positive a.e.\ on $(t,\tau)$, hence
\[
\rate_C(u)=\rate^{AG}\qquad\text{for a.e.\ }u\in(t,\tau).
\]
Thus $\int_t^\tau \rate_C(u)\,du=\rate^{AG}(\tau-t)$. Combining with $(\dagger)$ and the fact $z_C(t)>z_A(t),z_B(t)$
implies
\[
z_C(\tau)>\min\{z_A(\tau),z_B(\tau)\}.
\tag{$\ddagger$}
\]

\smallskip
\noindent\emph{Step 3: contradiction at $\tau$ and beyond.}
By definition of $\tau$ and continuity, either (i) $z_C(\tau)=1$, or (ii) $z_C(\tau)=\max\{z_A(\tau),z_B(\tau)\}$.
Case (i) is impossible because all posteriors reach $1$ only at the common terminal time $T$.
Hence we are in case (ii). Together with ($\ddagger$), this implies (without loss of generality)
\[
z_C(\tau)=z_A(\tau)>z_B(\tau).
\tag{$\star$}
\]
Since $z_C(u)>z_A(u)$ for all $u\in[t,\tau)$ while $z_C(\tau)=z_A(\tau)$, it follows that $z_A(\tau)>z_A(0)$.

We claim that
\[
z_C(u)\ge \max\{z_A(u),z_B(u)\}\qquad\text{for all }u\in[\tau,T).
\tag{$\star\star$}
\]
If $z_A(u)>z_C(u)$ for some $u\ge\tau$, then Lemma~\ref{Lemma: not possible to have za, zb>zc} implies $z_B(u)\le z_C(u)$,
so $z_A(u)>z_C(u)\ge z_B(u)$, and Lemma~\ref{Lemma: if zA > zB,zC then, A must not have conceded} forces $z_A(u)=z_A(0)$,
contradicting $z_A(u)\ge z_A(\tau)>z_A(0)$. If instead $z_B(u)>z_C(u)$ for some $u\ge\tau$, then similarly
$z_B(u)>z_C(u)\ge z_A(u)$ and Lemma~\ref{Lemma: if zA > zB,zC then, A must not have conceded} forces $z_B(u)=z_B(0)$,
but $z_B$ is nondecreasing and $z_B(\tau)<z_C(\tau)\le z_C(u)$, a contradiction. This proves ($**$).

On $(\tau,T)$ we therefore have $\atom_C^A=\atom_C^B=0$ and, by the same argument as in Step~2,
$\rate_C(u)=\rate^{AG}$ for a.e.\ $u\in(\tau,T)$. Hence $\int_\tau^T \rate_C(u)\,du=\rate^{AG}(T-\tau)$.

Since $z_A(\tau)=z_C(\tau)$ and $z_A(T)=z_C(T)=1$, we must have
\[
\int_\tau^T \rate_A(u)\,du=\int_\tau^T \rate_C(u)\,du=\rate^{AG}(T-\tau).
\]
Since $z_B(\tau)<z_C(\tau)$ but $z_B(T)=z_C(T)=1$, we must have
\[
\int_\tau^T \rate_B(u)\,du>\int_\tau^T \rate_C(u)\,du=\rate^{AG}(T-\tau).
\]
Finally, \eqref{Equation:IC-C} holds for a.e.\ $u\in(\tau,T)$ by Lemma~\ref{Lemma: C doesn't stop conceding}
and Lemma~\ref{Lemma: IC hold if CDFs increasing}. Integrating \eqref{Equation:IC-C} over $(\tau,T)$ yields
\[
\int_\tau^T\Big(\rate_A(u)(\piac+\atom_B(u))+\rate_B(u)(1+\piac\atom_A(u))\Big)\,du
=(1+\piac)\rate^{AG}(T-\tau).
\]
But $\atom_A,\atom_B\ge 0$, so the left-hand side is at least
\[
\piac\int_\tau^T \rate_A(u)\,du+\int_\tau^T \rate_B(u)\,du
>(1+\piac)\rate^{AG}(T-\tau),
\]
a contradiction. This completes the proof.
\eprf

\blemma\label{Lemma: Cannot have zA = zC > z_B}If $\cdf$ is an equilibrium, then $\nexists$ $t > 0$ such that $z_A(t) = z_C(t) > z_B(t)$. \elemma
\bprf 
Suppose, toward a contradiction, that $z_A(t)=z_C(t)>z_B(t)$ for some $t>0$.
Fix $t'>t$ sufficiently close to $t$.

\smallskip
\noindent\textbf{Step 1: none of $F_A,F_B,F_C$ can be constant on $(t,t')$.}

\begin{itemize}
\item[-] If $F_C$ were constant on $(t,t')$, then by Lemma~\ref{Lemma: at most one CDF constant}
both $F_A$ and $F_B$ would be strictly
increasing there, hence for some $s\in(t,t')$ we would have $z_A(s)>z_C(s)>z_B(s)$, contradicting Lemma~\ref{Lemma: if zA > zB,zC then, A must not have conceded}.
\item[-] If $F_A$ were constant on $(t,t')$, then $F_B$ and $F_C$ would be strictly increasing there, implying
$z_C(s)>\max\{z_A(s),z_B(s)\}$ for some $s\in(t,t')$, contradicting Lemma~\ref{Lemma: Not possible to have zC > zA,zB}.
\item[-] If $F_B$ were constant on $(t,t')$, then $F_A$ and $F_C$ would be strictly increasing there, so \eqref{Equation:IC-A} and
\eqref{Equation:IC-C} would hold a.e.\ on $(t,t')$. This forces $\rate_C=\rate^{AG}$ a.e.\ while $\rate_A>\rate^{AG}$
a.e., implying $z_A(s)>z_C(s)$ for some $s\in(t,t')$, contradicting Lemma~\ref{Lemma: if zA > zB,zC then, A must not have conceded}.
\end{itemize}

Hence all three $F_A,F_B,F_C$ are strictly increasing on $(t,t')$.

\smallskip
\noindent\textbf{Step 2: all three indifference conditions hold locally and imply a contradiction.}
Because all three players are active on $(t,t')$, Lemma~\ref{Lemma: IC hold if CDFs increasing} implies that
\eqref{Equation:IC-A}--\eqref{Equation:IC-C} hold a.e.\ on $(t,t')$.

Define
\[
\tau:=\sup\Big\{s\ge t:\ z_A(u)\ge z_C(u)>z_B(u)\ \text{for all }u\in[t,s]\Big\},
\]

By Lemma~\ref{Lemma: if zA > zB,zC then, A must not have conceded}, the strict inequality $z_A>z_C$ cannot occur at any time $u>t$
once $A$ is active, hence we must have $z_A(u)=z_C(u)$ for all $u\in(t,\tau)$.

Then $g_C^A(u)=g_C^B(u)$ on $(t,\tau)$, and \eqref{Equation:IC-A}--\eqref{Equation:IC-B} imply that $\rate_C=\rate^{AG}$
a.e.\ on $(t,\tau)$. Further, since $z_A(u) = z_C(u)$ for a.e. $u \in (t,\tau) \implies \rate_A(u) = \rate^{AG}$ for a.e. $u \in (t,\tau)$.  Plugging into \eqref{Equation:IC-C} yields $\rate_B<\rate^{AG}$ a.e.\ on $(t,\tau)$ since $\atom_B(\cd) > 0$ on this interval.
Therefore, for $u>t$ close enough to $t$, we have $z_A(u)=z_C(u)>z_B(u)$ and $z_B(u)$ is growing strictly more slowly,
which implies that $z_A$ and $z_C$ reach $1$ strictly before $z_B$ can, contradicting Lemma~\ref{Lemma: All WoAs end at the same time}.

\eprf

\blemma\label{Lemma: equilibrium has zA= zB=zC once they start conceding} Once all three players have conceded with positive probability, posteriors coincide:
if $z_i(t)>z_i(0)$ for all $i\in N$, then
\[
z_A(t)=z_B(t)=z_C(t).
\]
 \elemma

\bprf 
Fix $t>0$ such that $z_i(t)>z_i(0)$ for all $i\in N$. If the three posteriors are not equal at $t$, then (up to swapping
labels $A$ and $B$) one of the following orderings must hold:
\begin{align*}\begin{split}
&z_A(t)>z_C(t)\ge z_B(t),\\
&z_A(t)\ge z_B(t)>z_C(t),\\
&z_C(t)>\max\{z_A(t),z_B(t)\},\\
&z_A(t)=z_C(t)>z_B(t).
\end{split}
\end{align*}
These are ruled out by Lemmas~\ref{Lemma: if zA > zB,zC then, A must not have conceded} --~\ref{Lemma: Cannot have zA = zC > z_B}. Therefore $z_A(t)=z_B(t)=z_C(t)$.

\eprf 

\bprf[Proof of Proposition~\ref{prop:full}]

\medskip
\noindent\textbf{Step 1: Player $C$ is active immediately after $0$.}
We claim $t_C=0$.
Suppose instead that $t_C>0$, so that $F_C$ is constant on $(0,t_C)$ and hence
$\rate_C(t)=0$ for a.e.\ $t\in(0,t_C)$.
By Lemma~\ref{Lemma: at most one CDF constant}, at most one player's cdf can be constant on an interval
before $T$, so both $F_A$ and $F_B$ must be strictly increasing on $(0,t_C)$, i.e.\ $\rate_A(t),\rate_B(t)>0$
for a.e.\ $t\in(0,t_C)$.
Then Lemma~\ref{Lemma: IC hold if CDFs increasing} implies that the local indifference conditions for $A$ and $B$
hold a.e.\ on $(0,t_C)$.
Since $\rate_C=0$ there, those indifference conditions force $g_C^{A}(t)>0$ and $g_C^{B}(t)>0$ a.e.\ on $(0,t_C)$,
hence $z_A(t)>z_C(t)$ and $z_B(t)>z_C(t)$ on a set of positive measure.
This contradicts Lemma~\ref{Lemma: not possible to have za, zb>zc}.
Therefore $t_C=0$.

\medskip
\noindent\textbf{Step 2: Once all three players are active, posteriors (and hazard rates) coincide.}
Let $t^*:=\max\{t_A,t_B,t_C\}$ be the time at which the last player becomes active.
By Lemma~\ref{Lemma: equilibrium has zA= zB=zC once they start conceding}, we have
\[
z_A(t)=z_B(t)=z_C(t)\qquad\text{for all }t\in[t^*,T).
\]
Plugging this equality into the local indifference conditions, together with \[z_i(t)=z_i(t^*)\exp\left(\int_{t^*}^t \lambda_i(s)ds\right),\] implies that all three players' hazard rates coincide
a.e.\ on $(t^*,T)$, and hence must equal the AG hazard rate
\[
\rate_A(t)=\rate_B(t)=\rate_C(t)=\rate^{AG}:=\frac{r(1-\alpha)}{2\alpha-1}
\qquad\text{for a.e.\ }t\in(t^*,T).
\]

\medskip

We now pin down the \emph{time-0 atoms} and the \emph{identity of the initially inactive player}.

\begin{enumerate}
\item[\text{Case 1:}] \textbf{$z_C(0)>\max\{z_A(0),z_B(0)\}$.}
Since $t_C=0$ (Step 1), $C$ concedes with positive density immediately after $0$.
If neither $A$ nor $B$ concedes with an atom at $0$, then for all sufficiently small $t>0$ we have
$z_C(t)>z_A(t)$ and $z_C(t)>z_B(t)$, contradicting Lemma~\ref{Lemma: Not possible to have zC > zA,zB}.
Thus at least one of $A,B$ must have an atom at $0$.

We claim both must.
Suppose wlog that only $A$ has an atom at $0$, so $F_A(0)>0$ and $F_B(0)=0$.
Then $z_A(0+)\ge z_C(0)>z_B(0)$.
If $z_A(0+)>z_C(0)$, then by continuity of posteriors we would have
$z_A(t)>z_C(t)>z_B(t)$ for all small $t>0$, contradicting Lemma~\ref{Lemma: if zA > zB,zC then, A must not have conceded}.
If instead $z_A(0+)=z_C(0)>z_B(0)$, then for small $t>0$ we obtain the forbidden configuration
$z_A(t)=z_C(t)>z_B(t)$, contradicting Lemma~\ref{Lemma: Cannot have zA = zC > z_B}.
Hence it is impossible that only one of $A$ and $B$ has an atom at $0$.

Therefore both $A$ and $B$ concede with atoms at $0$, and those atoms are uniquely pinned down by
\[
z_A(0+)=z_B(0+)=z_C(0)
\quad\Longleftrightarrow\quad
\frac{z_A(0)}{1-F_A(0)}=\frac{z_B(0)}{1-F_B(0)}=z_C(0).
\]
After time $0$, all three are active immediately and hence concede at rate $\rate^{AG}$ until $T$ (Step 2).

\item[\text{Case 2:}] \textbf{$z_A(0)=z_B(0)>z_C(0)$.}
By Step 1, $C$ is active immediately after $0$.
If $F_C(0)=0$, then $z_C(t)=z_C(0)<z_A(0)=z_B(0)$ at $t=0$, and since $A,B$ are at least as reputable as $C$,
for all small $t>0$ we obtain $\min\{z_A(t),z_B(t)\}>z_C(t)$, contradicting Lemma~\ref{Lemma: not possible to have za, zb>zc}.
Hence $F_C(0)>0$.

Because $A$ and $B$ start with equal posteriors and Lemma~\ref{Lemma: Not possible to have zC > zA,zB}
rules out $z_C(0+)>z_A(0)$, the time-0 atom of $C$ must satisfy
\[
z_C(0+)=z_A(0)=z_B(0),
\qquad\text{i.e.}\qquad
F_C(0)=1-\frac{z_C(0)}{z_A(0)}.
\]
No atom by $A$ or $B$ is compatible with equilibrium.
After time $0$, all three are active and concede at rate $\rate^{AG}$ until $T$ (Step 2).

\item[\text{Case 3:}] \textbf{$z_A(0)=z_B(0)=z_C(0)$.}
Any time-0 atom by some player would create unequal posteriors at $0+$, which is impossible by the ordering lemmas.
Hence $F_A(0)=F_B(0)=F_C(0)=0$ and all three start conceding immediately at the common hazard rate $\rate^{AG}$.

\item[\text{Case 4:}] \textbf{$z_A(0)>z_B(0)\ge z_C(0)$.}
We claim that $A$ is the unique initially inactive player:
\[
\rate_A(t)=0\ \text{ on }[0,t_A)\quad\text{and}\quad \rate_B(t),\rate_C(t)>0\ \text{ on }(0,t_A),
\]
for some $t_A>0$.
Indeed, by Step 1 we have $t_C=0$.
If $A$ were active on an interval immediately after $0$, then for small $t>0$ we would have
$z_A(t)>z_C(t)$ and $z_A(t)>z_B(t)$, contradicting Lemma~\ref{Lemma: if zA > zB,zC then, A must not have conceded}.
Thus $A$ must be inactive initially; and by Lemma~\ref{Lemma: at most one CDF constant}, both $B$ and $C$ must be active.

On the initial interval $(0,t_A)$, we therefore have $\rate_A=0$ and $\rate_B,\rate_C>0$, so by
Lemma~\ref{Lemma: IC hold if CDFs increasing} the indifference conditions for $B$ and $C$ hold a.e.\ on $(0,t_A)$.
Since $\rate_A=0$, these conditions uniquely pin down the hazard rates on $(0,t_A)$ as
\[
\rate_C(t)=\rate^{AG}
\qquad\text{and}\qquad
\rate_B(t)=(1+\pi_{AC})\rate^{AG}
\qquad\text{for a.e.\ }t\in(0,t_A).
\]
By Step 2, the moment $A$ becomes active we must have $z_A=z_B=z_C$.
Since $A$ is inactive up to $t_A$, we have $z_A(t)=z_A(0)$ for $t<t_A$, so the alignment condition at $t_A$ is
\[
z_B(t_A)=z_C(t_A)=z_A(0).
\]
Using the hazard rates above, this system uniquely determines $t_A$ and $z_C(0+)$ and hence $F_C(0)$.
Equivalently,
\[
z_A(0)=z_B(0)\exp\big((1+\pi_{AC})\rate^{AG}t_A\big)
      =z_C(0+)\exp\big(\rate^{AG}t_A\big),
\quad\text{and}\quad
z_C(0+)=\frac{z_C(0)}{1-F_C(0)},
\]
which yields
\[
F_C(0)=1-\frac{z_C(0)}{z_A(0)^{\frac{\pi_{AC}}{1+\pi_{AC}}}\,z_B(0)^{\frac{1}{1+\pi_{AC}}}}.
\]
After $t_A$, all three are active and concede at hazard rate $\rate^{AG}$ until $T$.

\item[\text{Case 5:}] \textbf{$z_A(0)>z_C(0)>z_B(0)$.}
As in Case~4, $A$ is initially inactive while $B$ and $C$ are active on an initial interval $(0,t_A)$, and the same
indifference argument implies
\[
\rate_C(t)=\rate^{AG},\qquad \rate_B(t)=(1+\pi_{AC})\rate^{AG}\qquad\text{for a.e.\ }t\in(0,t_A),
\]
until posteriors align at $t_A$.

Let $\tilde{t}_C$ be the time it would take $C$ to reach $z_A(0)$ under hazard rate $\rate^{AG}$ absent any atom,
and let $\tilde{t}_B$ be the analogous time for $B$ under hazard rate $(1+\pi_{AC})\rate^{AG}$ absent any atom:
\[
\tilde{t}_C:= -\frac{1}{\rate^{AG}}\log\!\Big(\frac{z_C(0)}{z_A(0)}\Big),
\qquad
\tilde{t}_B:= -\frac{1}{(1+\pi_{AC})\rate^{AG}}\log\!\Big(\frac{z_B(0)}{z_A(0)}\Big).
\]
If $\tilde{t}_C>\tilde{t}_B$, then (absent atoms) $C$ would reach $z_A(0)$ strictly later than $B$.
Since only one opponent can concede with an atom at time $0$ in the negotiation between $B$ and $C$, $C$ must be the one who
has an atom at $0$ to speed up her posterior (and $B$ must have none). Hence,
\[
F_C(0)=1-\frac{z_C(0)}{z_A(0)^{\frac{\pi_{AC}}{1+\pi_{AC}}}\,z_B(0)^{\frac{1}{1+\pi_{AC}}}},
\qquad F_B(0)=0.
\]
If instead $\tilde{t}_B>\tilde{t}_C$, then $B$ must be the one who has an atom at $0$ (and $C$ must have none), yielding
\[
F_B(0)=1-\frac{z_B(0)\,z_A(0)^{\pi_{AC}}}{z_C(0)^{1+\pi_{AC}}},
\qquad F_C(0)=0.
\]
After $t_A$, all three are active and concede at hazard rate $\rate^{AG}$ until $T$.

\end{enumerate}

\medskip
\noindent\textbf{Uniqueness.}
In each case, the identity of the initially inactive player is pinned down by the posterior-ordering lemmas
together with Lemma~\ref{Lemma: at most one CDF constant}.
Given the set of active players on any interval, the local indifference conditions uniquely determine the hazards
on that interval. Finally, the size (and identity) of the time-0 atom(s) is uniquely pinned down by the requirement
that posteriors coincide at the time $t^*$ when the last player becomes active (Step 2), together with the fact that on
any contested negotiation at most one player can place an atom at time~$0$.
This proves that the equilibrium described in the proposition is the unique equilibrium.

\eprf 

\subsection{Proof of Proposition~\ref{Proposition: unique equilibrium payoffs}}\label{Appendix: Proposition of equilibrium payoffs}
\bprf
Fix the unique equilibrium $\sigma^*$ characterized in Proposition~\ref{prop:full}
(and let $F^*$ denote the induced cdfs in the no-concession subgame).
For each player $k\in\{A,B,C\}$, let $v_k^*:=v_k(\sigma^*)$ denote the time-0 expected payoff
of $k$'s rational type.

\paragraph{Step 0 (Payoff representations and a useful limit).}
Recall $U_i(t;\sigma_{-i})$ for $i\in\{A,B\}$ and
$U_C(t;\sigma_{-C})$ given in equations \eqref{eq:Ui_plan_payoff}--\eqref{eq:UC_plan_payoff}
in the main text. In particular, in equilibrium we may compute
\[
v_i^* = U_i(t;\sigma^*_{-i}) \quad\text{for any } t \text{ in the support of } dF_i^* \text{ on }(0,\infty),
\]
and analogously for $C$, by Lemma~\ref{Lemma: IC hold if CDFs increasing}.

We will repeatedly use the following observation: if player $k$ is active immediately after
$0$ (i.e.\ $F_k^*$ is strictly increasing on $(0,\varepsilon)$ for some $\varepsilon>0$), then
\[
v_k^*=\lim_{t\downarrow 0} U_k(t;\sigma_{-k}^*).
\]
This limit is obtained from \eqref{eq:Ui_plan_payoff}--\eqref{eq:UC_plan_payoff} by retaining
the time-$0$ contributions. Which terms matter depends on the case: for a peripheral, a
time-$0$ atom by the other peripheral may contribute through the induced AG continuation
term, while for $C$ the limit combines the time-$0$ contributions from both negotiations.
Accordingly, in the case-by-case comparisons below we evaluate the $t\downarrow 0$ limit
directly from \eqref{eq:Ui_plan_payoff}--\eqref{eq:UC_plan_payoff}.

\paragraph{Step 1 (Benchmark payoffs in the bilateral AG game).}
In the bilateral AG benchmark over surplus $\pi_{ij}$, the unique equilibrium features a
time-0 atom
\[
F^{AG}_{ji}(0) \;=\; \max\Big\{1-\frac{z_j(0)}{z_i(0)},\,0\Big\}.
\]
Player $i$'s (time-0) equilibrium payoff in that bilateral game is therefore
\begin{equation}\label{eq:AG_payoff_formula}
v^{AG}_{ij}(z_i(0),z_j(0))
\;=\;
\pi_{ij}\Big((1-\alpha)+(2\alpha-1)\,F^{AG}_{ji}(0)\Big).
\end{equation}
We write $v^{AG}_{iC}:=v^{AG}_{iC}(z_i(0),z_C(0))$ and similarly for $v^{AG}_{Ci}$.

\medskip
We now prove each payoff comparison in Proposition~\ref{Proposition: unique equilibrium payoffs}
case by case, using the equilibrium characterization in Proposition~\ref{prop:full}.

\begin{enumerate}
\item[\text{Case 1:}] \textbf{$z_C(0)\ge \max\{z_A(0),z_B(0)\}$.}
By Proposition~\ref{prop:full}, $C$ has no time-0 atom and each peripheral
$i\in\{A,B\}$ (if weaker than $C$) concedes with an atom at $0$ that raises $z_i(0+)$ to $z_C(0)$.
Equivalently,
\[
F_i^*(0) = \max\Big\{1-\frac{z_i(0)}{z_C(0)},\,0\Big\} = F^{AG}_{iC}(0),
\qquad F_C^*(0)=0.
\]
After time $0$, all posteriors are aligned and all players concede at the common AG hazard
$\rate^{AG}= \frac{r(1-\alpha)}{2\alpha-1}$, exactly as in the bilateral benchmark in each negotiation.
Since continuation play after any concession is the unique AG continuation, the induced outcome
distribution in each negotiation coincides with the AG benchmark, yielding
\[
v_i^* = v^{AG}_{iC}\ \text{ for }i\in\{A,B\},
\qquad
v_C^* = v^{AG}_{CA}+v^{AG}_{CB}.
\]
\item[Case~2:] \textbf{$z_C(0)<z_B(0)<z_A(0)$.}
Proposition~\ref{prop:full} implies: $A$ is initially inactive, $B$ and $C$
are active immediately after $0$, and $C$ concedes with a time-0 atom $F_C^*(0)\in(0,1)$,
while $F_A^*(0)=F_B^*(0)=0$.

\smallskip
\underline{$C$'s payoff.}
Since $F_C^*(0)>0$, time $0$ is in the support of $C$'s concession distribution. Sequential rationality
therefore implies that $C$ is indifferent between conceding at $0$ and following $\sigma^*$, hence
\[
v_C^* = (1-\alpha)(\pi_{AC}+1).
\]
In the AG benchmark, $C$ is the weaker party in both negotiations (since $z_C(0)<z_A(0)$ and $z_C(0)<z_B(0)$),
so $v^{AG}_{C,A}=(1-\alpha)\pi_{AC}$ and $v^{AG}_{C,B}=(1-\alpha)\cdot 1$, implying
$v_C^* = v^{AG}_{C,A}+v^{AG}_{C,B}$.

\smallskip
\underline{$B$'s payoff.}
Player $B$ is active immediately after $0$, so $v_B^*=\lim_{t\downarrow 0}U_B(t;\sigma^*_{-B})$.
Because $F_A^*(0)=0$ and $A$ is initially inactive, the only time-0 atom relevant for this limit is
$C$'s atom $F_C^*(0)$, yielding
\[
v_B^*
= \pi_{BC}\Big(\alpha F_C^*(0) + (1-\alpha)(1-F_C^*(0))\Big)
= \pi_{BC}\Big((1-\alpha)+(2\alpha-1)F_C^*(0)\Big).
\]
In the bilateral AG benchmark between $B$ and $C$, $C$ is the weak party, so the AG atom against $B$ is
$F^{AG}_{CB}(0)=1-\frac{z_C(0)}{z_B(0)}$ and
\[
v^{AG}_{B,C}=\pi_{BC}\Big((1-\alpha)+(2\alpha-1)F^{AG}_{CB}(0)\Big).
\]
Finally, Proposition~\ref{prop:full} gives
\[
F_C^*(0)=1-\frac{z_C(0)}{z_A(0)^{\frac{\pi_{AC}}{1+\pi_{AC}}}z_B(0)^{\frac{1}{1+\pi_{AC}}}}
\quad\Rightarrow\quad
F_C^*(0) > 1-\frac{z_C(0)}{z_B(0)} = F^{AG}_{CB}(0),
\]
because $z_A(0)^{\frac{\pi_{AC}}{1+\pi_{AC}}}z_B(0)^{\frac{1}{1+\pi_{AC}}} > z_B(0)$ when $z_A(0)>z_B(0)$.
Therefore $v_B^*>v^{AG}_{BC}$.

\smallskip
\underline{$A$'s payoff.}
In the AG benchmark between $A$ and $C$, $C$ concedes with a time-0 atom of size
$F^{AG}_{CA}(0)=1-\frac{z_C(0)}{z_A(0)}$, which raises $A$'s payoff above $(1-\alpha)\pi_{AC}$
via \eqref{eq:AG_payoff_formula}.
In the three-player equilibrium, $A$ is initially inactive and posteriors align only at the strictly
positive time $t_A>0$ at which $A$ becomes active. Hence part of the probability mass of
``$C$ concedes before $A$ concedes'' that is realized at calendar time $0$ in the bilateral benchmark
is shifted to calendar times in $(0,t_A]$ in the three-player equilibrium.
Because $\alpha>(1-\alpha)$ and discounting is strict, shifting any positive probability mass of
receiving a concession from time $0$ to a strictly positive time strictly lowers the time-0
expected payoff. Therefore $v_A^*<v^{AG}_{A,C}$.

Combining the three comparisons yields the claim in Case~2.
\item[Case~3:] \textbf{$z_B(0)<z_C(0)<z_A(0)$.}
Proposition~\ref{prop:full} implies $A$ is initially inactive and $B$ and $C$
are active immediately after $0$. Moreover, exactly one of the following holds:
either (a) $C$ concedes with a time-0 atom $F_C^*(0)>0$ and $F_B^*(0)=0$,
or (b) $B$ concedes with a time-0 atom $F_B^*(0)>0$ and $F_C^*(0)=0$.

\smallskip
\underline{$A$'s payoff.}
In both subcases, $C$ is weaker than $A$ initially and the bilateral benchmark for $A$ and $C$
features a strictly positive time-0 atom from $C$ to $A$ of size $1-\frac{z_C(0)}{z_A(0)}$.
In the three-player equilibrium, $A$ remains inactive for an initial interval and posteriors align
only at a strictly positive time $t_A>0$. Hence, relative to the bilateral benchmark, some
probability mass of being conceded to by $C$ is shifted away from time $0$ to later calendar
times, which strictly reduces the discounted value. Therefore $v_A^*<v^{AG}_{AC}$.

\smallskip
\underline{$B$'s payoff.}
If (a) $F_C^*(0)>0$, then $B$ is active immediately after $0$ and the limit argument from Case~2 gives
\[
v_B^*=\pi_{BC}\Big((1-\alpha)+(2\alpha-1)F_C^*(0)\Big)>(1-\alpha)\pi_{BC}.
\]
Since $z_B(0)<z_C(0)$, $B$ is the weak player in the bilateral benchmark on $BC$, so $v^{AG}_{B,C}=(1-\alpha)\pi_{BC}$.
Hence $v_B^*>v^{AG}_{B,C}$.
If instead (b) $F_B^*(0)>0$, then time $0$ is in the support of $B$'s concession distribution, so
sequential rationality implies $v_B^*=(1-\alpha)\pi_{BC}=v^{AG}_{B,C}$.
Thus, in either subcase, $v_B^*\ge v^{AG}_{B,C}$.

\smallskip
\underline{$C$'s payoff.}
In the bilateral benchmark, $C$ is weak against $A$ but strong against $B$, so
\[
v^{AG}_{C,A}+v^{AG}_{C,B}
= (1-\alpha)(\pi_{AC}+1) + (2\alpha-1)\Big(1-\frac{z_B(0)}{z_C(0)}\Big),
\]
where the last (strictly positive) term is the strong-player premium in the negotiation between $B$ and $C$.

If (a) $F_C^*(0)>0$, then $C$ concedes at time $0$ with positive probability, hence
$v_C^*=(1-\alpha)(\pi_{AC}+1)$ as in Case~2. This is strictly smaller than
$v^{AG}_{C,A}+v^{AG}_{C,B}$ because $z_B(0)<z_C(0)$ implies $1-\frac{z_B(0)}{z_C(0)}>0$.

If (b) $F_B^*(0)>0$ and $F_C^*(0)=0$, then $C$ is active immediately after $0$ and we may evaluate
$v_C^*=\lim_{t\downarrow 0}U_C(t;\sigma^*_{-C})$.
In this limit, the only time-0 atom is $B$'s concession at $0$, and the continuation between $A$ and $C$
following $B$'s concession is the bilateral AG game between $A$ and $C$ starting from
$(z_A(0),z_C(0))$, in which $C$ is the weak player and hence earns $(1-\alpha)\pi_{AC}$.
Therefore,
\[
v_C^* = (1-\alpha)(\pi_{AC}+1) + (2\alpha-1)F_B^*(0).
\]
Moreover, Proposition~\ref{prop:full} yields

\[
F_B(0)=
1-\frac{z_B(0)\,z_A(0)^{\pi_{AC}}}{z_C(0)^{1+\pi_{AC}}}\;<\;
1-\frac{z_B(0)}{z_C(0)}
= F^{AG}_{BC}(0),
\]

since $z_A(0)^{\pi_{AC}}>z_C(0)^{\pi_{AC}}$ when $z_A(0)>z_C(0)$.
Hence $v_C^*<v^{AG}_{CA}+v^{AG}_{CB}$ also in subcase (b).

This completes the proof of Case~3 and hence of Proposition~\ref{Proposition: unique equilibrium payoffs}.

\end{enumerate}

\eprf 
\subsection*{A.4. Proof of Proposition~\ref{prop:vanishing}}

\begin{proof}
Set $\pi_{AC}=\pi_{BC}=1$ and write
\[
z_B^\varepsilon(0)=\varepsilon,\qquad
z_C^\varepsilon(0)=\kappa_B\varepsilon,\qquad
z_A^\varepsilon(0)=\kappa_A\kappa_B\varepsilon,
\]
with $\kappa_A,\kappa_B>1$ fixed. To lighten notation, suppress the dependence on
$\varepsilon$.

In the bilateral AG benchmark,
\[
F_{CA}^{AG}(0)=1-\frac{z_C(0)}{z_A(0)}=1-\frac{1}{\kappa_A},
\qquad
F_{BC}^{AG}(0)=1-\frac{z_B(0)}{z_C(0)}=1-\frac{1}{\kappa_B}.
\]
Hence
\[
v_{CA}^{AG}=1-\alpha,\qquad
v_{CB}^{AG}=\alpha-\frac{2\alpha-1}{\kappa_B},\qquad
v_{AC}^{AG}=\alpha-\frac{2\alpha-1}{\kappa_A},\qquad
v_{BC}^{AG}=1-\alpha.
\]

We now consider the two cases identified in Proposition~A.1.

\medskip
\noindent\emph{Case 1: $\kappa_A>\kappa_B$.}
By Proposition~A.1, player $C$ concedes at $t=0$ with atom
\[
F_C^*(0)=1-\sqrt{\frac{\kappa_B}{\kappa_A}},
\qquad
F_B^*(0)=0.
\]
Since $C$ is indifferent at $t=0$,
\[
v_C^*=2(1-\alpha).
\]
Therefore
\[
v_{CA}^{AG}+v_{CB}^{AG}-v_C^*
=(1-\alpha)+\left(\alpha-\frac{2\alpha-1}{\kappa_B}\right)-2(1-\alpha)
=(2\alpha-1)\left(1-\frac{1}{\kappa_B}\right).
\]
Moreover, $B$ is active immediately after $0$, so
\[
v_B^*
=\alpha F_C^*(0)+(1-\alpha)\bigl(1-F_C^*(0)\bigr)
=\alpha-(2\alpha-1)\sqrt{\frac{\kappa_B}{\kappa_A}}
>1-\alpha
= v_{BC}^{AG}.
\]
Finally, part~(ii) follows directly from Proposition~\ref{Proposition: unique equilibrium payoffs}: for every prior
vector satisfying $z_A(0)>z_C(0)>z_B(0)$, we have
\[
v_A^*<v_{AC}^{AG}.
\]

\medskip
\noindent\emph{Case 2: $\kappa_A\le \kappa_B$.}
By Proposition~A.1, player $B$ concedes at $t=0$ with atom
\[
F_B^*(0)=1-\frac{\kappa_A}{\kappa_B},
\qquad
F_C^*(0)=0.
\]
Because $B$ is indifferent at $t=0$,
\[
v_B^*=1-\alpha = v_{BC}^{AG}.
\]
When $B$ concedes at $t=0$, player $C$ receives $\alpha$ from the $B$--$C$ negotiation
immediately and $(1-\alpha)$ from the continuation against $A$. When $B$ does not concede
at $t=0$, player $C$ earns $2(1-\alpha)$ by indifference. Hence
\[
v_C^*
=2(1-\alpha)+(2\alpha-1)\left(1-\frac{\kappa_A}{\kappa_B}\right).
\]
Therefore
\[
v_{CA}^{AG}+v_{CB}^{AG}-v_C^*
=(2\alpha-1)\left(1-\frac{1}{\kappa_B}\right)
-(2\alpha-1)\left(1-\frac{\kappa_A}{\kappa_B}\right)
=(2\alpha-1)\frac{\kappa_A-1}{\kappa_B}.
\]
Again, part~(ii) is immediate from Proposition~\ref{Proposition: unique equilibrium payoffs}.

Combining the two cases yields
\[
v_{CA}^{AG}+v_{CB}^{AG}-v_C^*
=(2\alpha-1)\frac{\min\{\kappa_A,\kappa_B\}-1}{\kappa_B}>0,
\]
and
\[
v_B^*\ge v_{BC}^{AG},
\]
with strict inequality if and only if $\kappa_A>\kappa_B$. Since the expressions above do
not depend on $\varepsilon$, the same formulas describe the limit as $\varepsilon\downarrow0$.
\end{proof}

\bibliography{ref-bargaining}
\newpage 

\section*{Online Appendix}
\setcounter{page}{1}

\section{Sequential Negotiations}\label{Sec: sequential}

In this section, we relax the assumption of simultaneous negotiations. To this end, consider an environment in which players $A$ and $C$ bargain first over surplus $\pi_{AC}>0$ according to the same continuous-time war-of-attrition protocol as in the baseline, and only after the first-stage dispute ends do players $B$ and $C$ bargain over surplus $\pi_{BC}=1$. All concessions and agreement times are publicly observed. Types are as in the baseline: each player $i\in\{A,B,C\}$ is behavioral with prior $z_i(0)\in(0,1)$ (never concedes) and rational otherwise. Player $C$ has a single global type. Histories and strategies are adapted in the obvious way. Payoffs are additive across negotiations and discounted (in calendar time).

Let $t$ denote the calendar time at which the first-stage dispute (between $A$ and $C$) ends. If $A$ concedes at $t$, then $C$'s type is not revealed; the second-stage game between $B$ and $C$ is a bilateral AG reputational war of attrition with initial reputations $(z_B(0),z_C(t))$, where $z_C(t)$ is the posterior that $C$ is behavioral after observing no concession by $C$ up to time $t$. If instead $C$ concedes to $A$ at $t$, then $C$ is revealed rational at $t^+$, and hence in the second-stage AG continuation with initial reputations $(z_B(0),0)$ the (unique) equilibrium has $C$ conceding immediately with probability $1$.

Formally, define (as before) $V^{AG}_{CB}(z_C,z_B)$ as $C$'s time-$0$ equilibrium payoff in a bilateral AG game over surplus $1$ against $B$ with reputations $(z_C,z_B)$:
\[
V^{AG}_{CB}(z_C,z_B)
= (1-\alpha) + (2\alpha-1)\max\left\{1-\frac{z_B}{z_C},\,0\right\}.
\]
(Equivalently, $B$ concedes at the start of stage 2 with probability
$\max\{1-z_B/z_C,0\}$.)

As Proposition~\ref{prop:SequentialUnique} shows, there exists a unique equilibrium in this environment. The first-stage bargaining between $A$ and $C$ features a common terminal time $T$ at which posteriors reach $1$. However, the presence of the second-stage negotiation fundamentally alters first-stage incentives. In particular, the central player's continuation value from a future negotiation between $B$ and $C$ enters the local indifference condition of the first-stage war of attrition. Specifically, player $C$ is indifferent at time $t$ if
\[r (\pi_{AC}+1)(1-\alpha)=\left(\alpha \pi_{AC}+V^{AG}_{CB}(z_C(t),z_B(0))-\left(\pi_{AC}+1\right)(1-\alpha)\right)\frac{f_{A}(t)}{1-F_A(t)}.\]

As a result, equilibrium concession hazard rates in the first stage are asymmetric even when initial reputations are symmetric. Specifically, for $t\in(0,T)$,
\begin{equation}\rate_C(t)=\rate^{AG},\;\; \text{and} \;\;
\rate_A(t)=\frac{(\pi_{AC}+1)}{\left(\pi_{AC}+\max\left\{1-\frac{z_B(0)}{z_C(t)},0\right\}\right)}\lambda^{AG}\geq \rate^{AG}.
\label{eq:seqhaz}\end{equation}

Because $A$ concedes at a higher rate, $A$'s posterior reputation rises faster than $C$'s. To reconcile this with the common-terminal-time requirement, equilibrium may require an atom of concession by $C$ at time $0$, even if $z_A(0)<z_C(0)$.

Two implications follow. First, even when initial reputations are symmetric, the central player is strictly worse off than in the bilateral benchmark, while $A$ is strictly better off. Second, this disadvantage may persist when $C$ is stronger than $A$: the faster posterior growth of $A$ may still force an initial adjustment by $C$. Once the first-stage negotiation ends, the second stage bargaining between $B$ and $C$ reduces to a standard bilateral war of attrition with beliefs updated from the first-stage history. In particular, if $A$ concedes, the continuation game between $B$ and $C$ is exactly the bilateral benchmark; if $C$ concedes, she is revealed to be rational and concedes immediately in the second stage. Let us now take a closer look at each stage.

\paragraph{Stage 2 (benchmark continuation).}
Given any stage~2 start time $\tau\ge 0$, the continuation game between $B$ and $C$ is the
bilateral AG reputational war of attrition with priors $(z_B(0),\widehat z_C)$, where
$\widehat z_C$ is the public posterior on $C$ at the start of stage~2. Since $\pi_{BC}=1$, we denote
by $V^{AG}_{CB}(\widehat z_C,z_B(0))$ the (current-value) equilibrium payoff of a \emph{rational}
$C$ in this bilateral game. Using the benchmark characterization recalled in Section~\ref{Section: benchmark},
\begin{equation}\label{eq:VCB_AG}
V^{AG}_{CB}(z_C,z_B)
= (1-\alpha) + (2\alpha-1)\,g(z_C;z_B),
\qquad
g(z_C;z_B):=\max\Bigl\{1-\frac{z_B}{z_C},\,0\Bigr\}.
\end{equation}
Here $g(z_C;z_B)$ is exactly the equilibrium time-$0$ concession probability of $B$ in the
bilateral $B$--$C$ game when $z_C>z_B$ (and equals $0$ when $z_C\le z_B$).

\paragraph{Stage 1 (payoffs as functions of the stage-2 continuation).}
Fix any PBE. Consider stage~1 at time $t$ along the stage-1 no-concession history.
If $A$ concedes at time $t$, then the stage~1 negotiation ends with $C$ receiving $\alpha\pi_{AC}$
and $A$ receiving $(1-\alpha)\pi_{AC}$, and then stage~2 begins immediately with public posterior
$\widehat z_C = z_C(t)$ (because $C$ has not conceded up to $t$).
Thus the \emph{current-value} payoff to a rational $C$ at time $t$ from an $A$-concession equals
\begin{equation}\label{eq:WC_Aconcedes}
W_C(t):=\alpha\pi_{AC} + V^{AG}_{CB}(z_C(t),z_B(0)).
\end{equation}
If instead $C$ concedes at time $t$ in stage~1, then she is revealed rational; in stage~2 she
concedes immediately to $B$ (a rational $C$ strictly prefers immediate concession against a
known-rational type than any delay, and $B$ never concedes against a player known to concede
immediately). Therefore the current-value payoff to a rational $C$ from conceding at time $t$
in stage~1 is
\begin{equation}\label{eq:WC_Cconcedes}
(1-\alpha)\pi_{AC} + (1-\alpha)\cdot 1 = (1-\alpha)(\pi_{AC}+1).
\end{equation}

\setcounter{proposition}{0}
\renewcommand{\theproposition}{B.\arabic{proposition}}
\begin{proposition}[Sequential negotiations: unique equilibrium]\label{prop:SequentialUnique}
There exists a unique PBE of the sequential game. In this PBE:

\begin{enumerate}
\item (\textbf{Stage 2}) If stage~1 ends at time $\tau$ because $A$ concedes, then stage~2 is the
bilateral AG equilibrium between $B$ and $C$ with initial posteriors
$(z_B(0),z_C(\tau))$.
If stage~1 ends at time $\tau$ because $C$ concedes, then $C$ is revealed rational and concedes
immediately to $B$ at the start of stage~2.

\item (\textbf{Stage 1 hazards}) Along the stage-1 no-concession history, there is a finite terminal
time $T\in(0,\infty)$ such that $z_A(T)=z_C(T)=1$.
On $(0,T)$ both players concede with positive density a.e., and their equilibrium hazards satisfy
\begin{equation}\label{eq:stage1_hazards}
\lambda_C(t)=\lambda^{AG}:=\frac{r(1-\alpha)}{2\alpha-1},
\qquad
\lambda_A(t)=\lambda^{AG}\cdot\frac{\pi_{AC}+1}{\pi_{AC}+g(z_C(t);z_B(0))}
\quad\text{for a.e.\ }t\in(0,T),
\end{equation}
where $g(\cdot\,;\cdot)$ is defined in \eqref{eq:VCB_AG}.

Equivalently, if $z_C(t)\le z_B(0)$ then $\lambda_A(t)=\lambda^{AG}\cdot\frac{\pi_{AC}+1}{\pi_{AC}}$,
while if $z_C(t)>z_B(0)$ then
$\lambda_A(t)=\lambda^{AG}\cdot\frac{\pi_{AC}+1}{\pi_{AC}+1-z_B(0)/z_C(t)}$.

\item (\textbf{Time-0 atom and terminal time})
At most one player concedes with positive probability at time $0$ in stage~1.
Let $z_i(0+):=z_i(0)/(1-F_i(0))$ denote the posteriors after any time-$0$ atom in stage~1.
The terminal time is
\begin{equation}\label{eq:T_seq}
T=-\frac{1}{\lambda^{AG}}\log z_C(0+).
\end{equation}

Define the threshold $\bar z_A$ by
\[
\bar z_A :=
\begin{cases}
\displaystyle
\frac{(\pi_{AC}+1)\,z_C(0)-z_B(0)}{\pi_{AC}+1-z_B(0)}, & \text{if } z_C(0)\ge z_B(0),\\[1.25em]
\displaystyle
\frac{\pi_{AC}\,z_B(0)}{\pi_{AC}+1-z_B(0)}
\Bigl(\frac{z_C(0)}{z_B(0)}\Bigr)^{\frac{\pi_{AC}+1}{\pi_{AC}}},
& \text{if } z_C(0)< z_B(0).
\end{cases}
\]

\begin{enumerate}[(i)]
\item If $z_A(0)<\bar z_A$, then $A$ bears the unique time-$0$ atom and $C$ does not:
\[
F_A(0)=1-\frac{z_A(0)}{\bar z_A},
\qquad
F_C(0)=0,
\qquad
z_C(0+)=z_C(0),
\qquad
z_A(0+)=\bar z_A.
\]

\item If $z_A(0)=\bar z_A$, then there is no time-$0$ atom:
\[
F_A(0)=F_C(0)=0,
\qquad
z_A(0+)=z_A(0),
\qquad
z_C(0+)=z_C(0).
\]

\item If $z_A(0)>\bar z_A$, then $C$ bears the unique time-$0$ atom and $A$ does not:
\[
F_A(0)=0,
\qquad
F_C(0)=1-\frac{z_C(0)}{z_C(0+)},
\qquad
z_A(0+)=z_A(0),
\]
where $z_C(0+)$ is uniquely determined as follows. Let
\[
\tilde z_A := \frac{\pi_{AC}\,z_B(0)}{\pi_{AC}+1-z_B(0)}.
\]
If $z_A(0)\ge \tilde z_A$, then $z_C(0+)\ge z_B(0)$ and
\[
z_C(0+)=\frac{z_B(0)+z_A(0)\,(\pi_{AC}+1-z_B(0))}{\pi_{AC}+1}.
\]
If $z_A(0)<\tilde z_A$, then $z_C(0+)< z_B(0)$ and
\[
z_C(0+)= z_B(0)\Biggl(\frac{z_A(0)\,(\pi_{AC}+1-z_B(0))}{\pi_{AC}\,z_B(0)}\Biggr)^{\frac{\pi_{AC}}{\pi_{AC}+1}}.
\]
\end{enumerate}
\end{enumerate}
\end{proposition}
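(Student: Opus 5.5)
The plan follows the appendix proof of Proposition~\ref{prop:full}, which is simpler here because stage~1 is a two-player war of attrition. Stage~2 comes first. After $A$ concedes at $\tau$, the continuation is the bilateral AG game with priors $(z_B(0),z_C(\tau))$, and its uniqueness follows from AG. After $C$ concedes, $z_C=0$, so the argument in Lemma~\ref{Lemma: C concedes to both once she concedes to one} gives immediate capitulation by $C$. This fixes the current-value payoffs $W_C(t)$ in \eqref{eq:WC_Aconcedes} and $(1-\alpha)(\pi_{AC}+1)$ in \eqref{eq:WC_Cconcedes}. For $A$, conceding yields $(1-\alpha)\pi_{AC}$ and being conceded to yields $\alpha\pi_{AC}$, since $A$ is absent from stage~2.

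For stage~1 I would reproduce the discipline lemmas in the two-player setting. First, a finite common terminal time $T$ with $z_A(T)=z_C(T)=1$, using the arguments of Lemma~\ref{Lemma: All WoAs end at the same time}. Second, no player can be inactive on an interval before $T$, since otherwise the other player shifts mass earlier as in Lemma~\ref{Lemma: at most one CDF constant}. Third, there are no atoms at positive times, and at most one player has an atom at $0$. Lemma~\ref{Lemma: IC hold if CDFs increasing} then gives the indifference conditions a.e. on $(0,T)$. $A$'s condition is the standard one, $r(1-\alpha)=(2\alpha-1)\lambda_C$, so $\lambda_C=\lambda^{AG}$. $C$'s condition is $r(\pi_{AC}+1)(1-\alpha)=\bigl(W_C(t)-(\pi_{AC}+1)(1-\alpha)\bigr)\lambda_A$. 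Substituting \eqref{eq:VCB_AG} gives $W_C-(\pi_{AC}+1)(1-\alpha)=(2\alpha-1)(\pi_{AC}+g)$, which yields \eqref{eq:stage1_hazards}. Because $z_C(t)=z_C(0+)e^{\lambda^{AG}t}$, requiring $z_C(T)=1$ gives \eqref{eq:T_seq}.

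The remaining step is to pin down the atom, and this is the main computation. Integrate $\lambda_A$ along the known path of $z_C$ and impose $z_A(T)=1$. Write $s=\lambda^{AG}t$ and $x=z_C=z_C(0+)e^{s}$, and let $b=z_B(0)$, $p=\pi_{AC}$. Then $d\log z_A=\frac{p+1}{p+g(x)}\,d\log x$. On the stretch $x\le b$ the integrand is the constant $\frac{p+1}{p}$. On $x>b$ it is $\frac{(p+1)x}{(p+1)x-b}$, whose antiderivative in $x$ is $\log\bigl((p+1)x-b\bigr)$. Integrating from $z_C(0+)$ to $1$ gives a strictly decreasing relation $z_A(0+)=\Phi(z_C(0+))$ with $\Phi(1)=1$. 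In the branch $z_C(0+)\ge b$ this reads $z_A(0+)=\frac{(p+1)z_C(0+)-b}{p+1-b}$. In the branch $z_C(0+)<b$ it becomes $z_A(0+)=\frac{pb}{p+1-b}\bigl(z_C(0+)/b\bigr)^{(p+1)/p}$.

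Setting $z_C(0+)=z_C(0)$ defines $\bar z_A=\Phi(z_C(0))$, which matches the two-case formula in the statement. The cases then follow from monotonicity. If $z_A(0)<\bar z_A$, then $A$ must jump up to $\bar z_A$. If $z_A(0)>\bar z_A$, then $C$ must jump, and $z_C(0+)=\Phi^{-1}(z_A(0))$. The threshold $\tilde z_A=\Phi(b)$ separates the two branches of that inversion. Uniqueness comes from three facts: at most one player has an atom, the hazards are determined a.e., and $\Phi$ is strictly monotone. Existence is checked by verifying that the constructed profile satisfies the indifference conditions and that no player gains from waiting or conceding off the support.

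The main obstacle is the regularity argument ruling out positive-time atoms and plateaus. The continuation value $W_C$ jumps in slope at the point where $z_C$ crosses $z_B(0)$, so the standard AG arguments must be checked at that kink. Beyond that, the work is bookkeeping across the branches.
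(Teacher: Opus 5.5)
Your proposal is correct and follows essentially the same route as the paper's sketch. Stage~2 is fixed by AG uniqueness. The local indifference conditions give $\lambda_C=\lambda^{AG}$ and $\lambda_A=\lambda^{AG}(\pi_{AC}+1)/(\pi_{AC}+g)$. Integrating $d\log z_A=\frac{\lambda_A}{\lambda^{AG}}\,d\log z_C$ over the two branches $z_C\le z_B(0)$ and $z_C>z_B(0)$ yields the boundary condition $z_A(T)=1$, which then selects a single time-$0$ atom. Your map $\Phi$ just packages the paper's Cases~6.1--6.2 and 7(a)--(b).

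One slip: $\Phi$ is strictly \emph{increasing}, not decreasing. Both branches increase in $z_C(0+)$ and $\Phi(1)=1$, and your own case split needs this direction, because an atom by $C$ can only raise $z_C(0+)$ above $z_C(0)$.
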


\begin{proof}[Sketch of proof.]
\textbf{Step 1 (Stage 2 is pinned down).}
Fix any public history at the start of stage~2. Conditional on this history, the continuation
game is exactly a bilateral reputational war of attrition with one commitment type for each
player. By the bilateral benchmark (AG), the equilibrium is unique and yields the
current-value payoff $V^{AG}_{CB}(\cdot,\cdot)$ for a rational $C$, given by \eqref{eq:VCB_AG}.
This proves part~1 of the proposition.

\textbf{Step 2 (No stalling in stage 1).}
Consider stage~1 and any interval $(t_0,t_1)\subset(0,\infty)$ along the stage-1 no-concession
history.
If $\lambda_A(t)=\lambda_C(t)=0$ for a.e.\ $t\in(t_0,t_1)$, then conditional on reaching $t_0$
the negotiation cannot end on $(t_0,t_1)$. Since discounting is strict, any rational player
strictly prefers conceding at $t_0$ to conceding at any later time in $(t_0,t_1]$, contradicting
sequential rationality. Hence it cannot be that both hazards are zero a.e.\ on any positive-length
interval.

\textbf{Step 3  (Local indifference for $A$  pins down $\lambda_C$).}
Fix $t\in(0,T)$ such that $A$ concedes with positive density at $t$ (equivalently, $\lambda_A(t)>0$)
and $F_A(t)<1$. Then following standard arguments, 

\[
r(1-\alpha)\pi_{AC}=(2\alpha-1)\pi_{AC}\lambda_C(t).
\]

Hence, $\lambda_C(t)=\lambda^{AG}$ for a.e.\ $t\in(0,T)$.

\textbf{Step 4 (Local indifference for $C$ pins down $\lambda_A(t)$).}
Fix $t\in(0,T)$ such that $C$ concedes with positive density at $t$ (so $\lambda_C(t)>0$) and
$F_C(t)<1$. Indifference of $C$ at time $t$ implies
\begin{equation}\label{eq:C_indiff_raw}
r(1-\alpha)(\pi_{AC}+1)=\lambda_A(t)\bigl(W_C(t)-(1-\alpha)(\pi_{AC}+1)\bigr).
\end{equation}
Substituting $W_C(t)=\alpha\pi_{AC}+V^{AG}_{CB}(z_C(t),z_B(0))$ from \eqref{eq:WC_Aconcedes}
and using \eqref{eq:VCB_AG}, and $\lambda^{AG}=\frac{r(1-\alpha)}{2\alpha-1}$, and plugging into
 \eqref{eq:C_indiff_raw} yields
\[
\lambda_A(t)
=\frac{r(1-\alpha)(\pi_{AC}+1)}{(2\alpha-1)\bigl(\pi_{AC}+g(z_C(t);z_B(0))\bigr)}
=\lambda^{AG}\cdot\frac{\pi_{AC}+1}{\pi_{AC}+g(z_C(t);z_B(0))}.
\]
This proves \eqref{eq:stage1_hazards}.

\textbf{Step 5 (Posteriors and existence of a finite terminal time).}
Since $\lambda_C(t)=\lambda^{AG}>0$ a.e.\ on $(0,T)$, Bayes' rule implies
$z_C(t)=z_C(0+)\exp(\lambda^{AG}t)$ along the stage-1 no-concession history until the
strategy support ends. Because $z_C(0+)\in(0,1)$, there is a unique finite time $T$ at which
$z_C(T)=1$, namely \eqref{eq:T_seq}. At this time, $F_C(T)=1-z_C(0)$, so a rational $C$ has
conceded in stage~1 with probability $1$ by time $T$.

Next we show that necessarily $z_A(T)=1$ as well. Suppose instead that $z_A(T)<1$.
Then conditional on no concession up to time $T$, there is positive probability that $A$ is rational.
But at time $T$, we have $z_C(T)=1$, so conditional on the same history player $C$ is behavioral
with probability one, hence will never concede. Therefore any rational $A$ who has not conceded by
$T$ strictly prefers conceding immediately at $T$ (yielding $(1-\alpha)\pi_{AC}$) to any strategy that
delays concession beyond $T$ (which can only weakly reduce his discounted payoff because it cannot
induce concession by $C$). This contradicts sequential rationality unless a rational $A$ concedes by time
$T$ with probability one, which is equivalent to $z_A(T)=1$.
Hence along the stage-1 no-concession history we must have $z_A(T)=z_C(T)=1$, proving the
terminal-time claim in part~2.

\textbf{Step 6 (Solving for the time-0 atom).}
Because $z_i(0+)=z_i(0)/(1-F_i(0))$, any time-$0$ atom by player $i$ weakly increases $z_i(0+)$.
We first note that at most one player concedes with positive probability at time $0$:
if both $A$ and $C$ placed positive mass at $0$, then conditional on being rational each would
strictly prefer shifting an $\varepsilon$-amount of mass from $0$ to a very small $\delta>0$ to avoid
the strictly worse simultaneous-concession outcome at $0$, contradicting optimality.
Hence at most one of $F_A(0),F_C(0)$ is positive.

We now compute the boundary condition $z_A(T)=1$ explicitly as a function of $(z_A(0+),z_C(0+))$.
Recall $z_C(t)=z_C(0+)\exp(\lambda^{AG}t)$.

\emph{Case 6.1: $z_C(0+)\ge z_B(0)$.}
Then for all $t\in[0,T]$ we have $z_C(t)\ge z_B(0)$, so
$g(z_C(t);z_B(0))=1-z_B(0)/z_C(t)$.
From \eqref{eq:stage1_hazards} we obtain the ODE
\[
\frac{\dot z_A(t)}{z_A(t)}=\lambda_A(t)
=\lambda^{AG}\cdot\frac{\pi_{AC}+1}{\pi_{AC}+1-z_B(0)/z_C(t)}.
\]
Using $\dot z_C(t)=\lambda^{AG}z_C(t)$ and the change of variables $z_A$ as a function of $z_C$,
one verifies that
\[
\frac{d}{dt}\log\Bigl((\pi_{AC}+1)z_C(t)-z_B(0)\Bigr)=\lambda_A(t),
\]
hence integrating from $0$ to $t$ yields
\[
z_A(t)=z_A(0+)\cdot
\frac{(\pi_{AC}+1)z_C(t)-z_B(0)}{(\pi_{AC}+1)z_C(0+)-z_B(0)}.
\]
At $t=T$ we have $z_C(T)=1$, so the boundary condition $z_A(T)=1$ becomes
\begin{equation}\label{eq:boundary_case_ge}
1=z_A(0+)\cdot\frac{\pi_{AC}+1-z_B(0)}{(\pi_{AC}+1)z_C(0+)-z_B(0)}.
\end{equation}

\emph{Case 6.2: $z_C(0+)< z_B(0)$.}
Let $t_B:=\frac{1}{\lambda^{AG}}\log\frac{z_B(0)}{z_C(0+)}\in(0,T)$ so that $z_C(t_B)=z_B(0)$.
For $t\in[0,t_B]$ we have $g(z_C(t);z_B(0))=0$, so $\lambda_A(t)=\lambda^{AG}\frac{\pi_{AC}+1}{\pi_{AC}}$
is constant and thus
\[
z_A(t_B)=z_A(0+)\exp\!\left(\lambda^{AG}\frac{\pi_{AC}+1}{\pi_{AC}}t_B\right)
=z_A(0+)\left(\frac{z_B(0)}{z_C(0+)}\right)^{\!\frac{\pi_{AC}+1}{\pi_{AC}}}.
\]
For $t\in[t_B,T]$ we have $z_C(t)\ge z_B(0)$, hence the linear formula from Case~6.1 applies
starting at $t_B$:
\[
z_A(t)=z_A(t_B)\cdot\frac{(\pi_{AC}+1)z_C(t)-z_B(0)}{(\pi_{AC}+1)z_C(t_B)-z_B(0)}
=z_A(t_B)\cdot\frac{(\pi_{AC}+1)z_C(t)-z_B(0)}{\pi_{AC}z_B(0)}.
\]
Evaluating at $t=T$ where $z_C(T)=1$, the boundary condition $z_A(T)=1$ becomes
\begin{equation}\label{eq:boundary_case_lt}
1
= z_A(0+)\Bigl(\frac{z_B(0)}{z_C(0+)}\Bigr)^{\frac{\pi_{AC}+1}{\pi_{AC}}}
\cdot\frac{\pi_{AC}+1-z_B(0)}{\pi_{AC}z_B(0)}.
\end{equation}

\textbf{Step 7 (Uniqueness and the explicit atoms).}
Clearly, there can be at most one atom at $0$.

\emph{(a) $C$ has no atom: $z_C(0+)=z_C(0)$.}
If $z_C(0)\ge z_B(0)$, then \eqref{eq:boundary_case_ge} implies the unique required value of $z_A(0+)$ is
\[
z_A(0+)=\frac{(\pi_{AC}+1)z_C(0)-z_B(0)}{\pi_{AC}+1-z_B(0)}=: \bar z_A.
\]
If $z_C(0)<z_B(0)$, then \eqref{eq:boundary_case_lt} implies the unique required value of $z_A(0+)$ is
\[
z_A(0+)=\frac{\pi_{AC}z_B(0)}{\pi_{AC}+1-z_B(0)}
\Bigl(\frac{z_C(0)}{z_B(0)}\Bigr)^{\frac{\pi_{AC}+1}{\pi_{AC}}}
=: \bar z_A.
\]
In either subcase, if $z_A(0)<\bar z_A$ then the only way to achieve $z_A(0+)=\bar z_A$ is that $A$
places a time-$0$ atom of size $F_A(0)=1-z_A(0)/\bar z_A$, while $F_C(0)=0$.
If $z_A(0)=\bar z_A$, no atom is needed.
If $z_A(0)>\bar z_A$, then $A$ cannot reduce $z_A(0+)$ (atoms only increase posteriors), so $C$ must
instead have the unique time-$0$ atom.

\emph{(b) $A$ has no atom: $z_A(0+)=z_A(0)$.}
In this case $C$ must choose $z_C(0+)\ge z_C(0)$ such that the boundary condition holds.
If we seek a solution with $z_C(0+)\ge z_B(0)$, then \eqref{eq:boundary_case_ge} uniquely yields
\[
(\pi_{AC}+1)z_C(0+)-z_B(0) = z_A(0)\,(\pi_{AC}+1-z_B(0)),
\quad\text{so}\quad
z_C(0+)=\frac{z_B(0)+z_A(0)(\pi_{AC}+1-z_B(0))}{\pi_{AC}+1}.
\]
This candidate indeed satisfies $z_C(0+)\ge z_B(0)$ if and only if
$z_A(0)\ge \tilde z_A:=\frac{\pi_{AC}z_B(0)}{\pi_{AC}+1-z_B(0)}$.
If instead we seek a solution with $z_C(0+)<z_B(0)$, then \eqref{eq:boundary_case_lt} uniquely yields
\[
z_C(0+) = z_B(0)\Biggl(\frac{z_A(0)(\pi_{AC}+1-z_B(0))}{\pi_{AC}z_B(0)}\Biggr)^{\frac{\pi_{AC}}{\pi_{AC}+1}},
\]
and this candidate satisfies $z_C(0+)<z_B(0)$ if and only if $z_A(0)<\tilde z_A$.
In either subcase $C$'s time-$0$ atom size is uniquely
$F_C(0)=1-z_C(0)/z_C(0+)$ and $F_A(0)=0$.

Combining parts (a) and (b) yields the case distinction and formulas stated in part~3.
Given $(z_A(0+),z_C(0+))$, the hazards \eqref{eq:stage1_hazards} pin down the full
stage-1 concession-time distributions (via $\dot z_i(t)=\lambda_i(t)z_i(t)$ and Bayes' rule),
and stage~2 play is uniquely pinned down by Step~1.
Therefore, the PBE is unique.
\end{proof}

\section{Partial observability of concessions}
\label{Section: partial}
 
We relax the assumption that the identity of the conceding player is publicly
observed. Suppose that when the negotiation between $i\in\{A,B\}$ and $C$ ends
at time $t$, the uninvolved peripheral $k$ observes only that an agreement was
reached at time $t$, not which party conceded. We restrict attention to the
symmetric case $z_A(0)=z_B(0)=z_C(0)=:z_0$ and $\pi_{AC}=\pi_{BC}=1$.

Until the first agreement, public histories coincide with those in the baseline, so posteriors are common along the no-concession path. After an agreement in the other negotiation, the uninvolved peripheral’s belief about $C$ need not coincide with the involved players’ beliefs. Accordingly, write $z_C^k(t)$ for peripheral $k$'s posterior probability at time $t$ that $C$ is behavioral.

\paragraph{Belief updating.}
Along the no-concession path, posteriors are common. When the $i$--$C$
negotiation ends at time $t$, peripheral $k$'s posterior about $C$ jumps to
\begin{equation}\label{eq:belief_jump_sym}
z_C^k(t^+)
\;=\;
z_C(t^-)\,\frac{\lambda_i(t)}{\lambda_i(t)+\lambda_C(t)},
\end{equation}
which is a strict downward revision whenever $\lambda_C(t)>0$. In the unique
AG continuation of the $k$--$C$ negotiation, $C$ then concedes immediately at
$t^+$ with atom
\[
g_C^k(t)
\;=\;
\max\!\left\{1-\frac{z_C^k(t^+)}{z_k(t^+)},\,0\right\}.
\]
 
\paragraph{Candidate equilibrium.}
We conjecture that $A$ and $B$ concede at rate $\lambda^{AG}$ throughout,
while $C$ concedes with a strictly positive atom $F_C(0)>0$ at $t=0$ and
then at a time-varying rate $\lambda_C(t)$ on $(0,T)$. Under this profile,
\[
z_A(t)=z_B(t)=z_0 e^{\lambda^{AG}t},
\qquad
z_C(t)=z_C(0^+)\exp\!\Bigl(\int_0^t\lambda_C(s)\,ds\Bigr),
\]
where $z_C(0^+)=z_0/(1-F_C(0))$.
 
\paragraph{$A$'s indifference condition pins down $\lambda_C$.}
Under \eqref{eq:belief_jump_sym} with $\lambda_i=\lambda^{AG}$, denote $A$'s
updated belief about $C$ upon observing the $B$--$C$ agreement at time $t$ by
\[
\hat{z}_C(t) \;:=\; z_C^A(t^+)
\;=\; z_C(t)\,\frac{\lambda^{AG}}{\lambda^{AG}+\lambda_C(t)}.
\]
The atom that $C$ makes to $A$ in the AG continuation is then
\[
g_C^A(t)
\;=\;
\max\!\left\{1-\frac{\hat{z}_C(t)}{z_A(t)},0\right\}
\;=\;
1 - \frac{z_C(t)}{z_A(t)}\cdot\frac{\lambda^{AG}}{\lambda^{AG}+\lambda_C(t)}.
\]
For $A$ to be indifferent, $\lambda_C(t)+\lambda^{AG}\cdot g_C^A(t)=\lambda^{AG}$,
which simplifies to
\begin{equation}\label{eq:quadratic}
\lambda_C(t)\bigl(\lambda^{AG}+\lambda_C(t)\bigr)
\;=\;
\bigl(\lambda^{AG}\bigr)^2\,\frac{z_C(t)}{z_A(t)}.
\end{equation}
This quadratic has unique positive root
\begin{equation}\label{eq:lam_C}
\lambda_C(t)
\;=\;
\frac{\lambda^{AG}}{2}\!\left(-1+\sqrt{1+4\,\frac{z_C(t)}{z_A(t)}}\right),
\end{equation}
and by symmetry $B$'s indifference condition yields the same expression.

\setcounter{proposition}{0}
\renewcommand{\theproposition}{C.\arabic{proposition}}
\begin{proposition}[Partial observability]\label{prop:partial}
Suppose $z_A(0)=z_B(0)=z_C(0)=z_0$, $\pi_{AC}=\pi_{BC}=1$, and concessions
are partially observable. There exists an equilibrium in which $A$ and $B$
concede at rate $\lambda^{AG}$ throughout and $C$ concedes with atom
$F_C(0)=1-1/h(0)\in(0,1)$ at $t=0$ and thereafter at rate
$\lambda_C(t)<\lambda^{AG}$ given by \eqref{eq:lam_C}, where $h(0)>1$ is
uniquely determined. Equilibrium payoffs satisfy
\[
v_A^* = v_B^*
= (1-\alpha)+(2\alpha-1)F_C(0)
> 1-\alpha = v_A^{AG},
\qquad
v_C^* = 2(1-\alpha) = v_C^{AG}.
\]
Thus $A$ and $B$ are strictly better off than in the bilateral benchmark while
$C$'s payoff is unchanged.
\end{proposition}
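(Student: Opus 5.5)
The plan is a \emph{guess-and-verify} argument. Uniqueness is not claimed, so it suffices to show that the conjectured profile, with suitable off-path beliefs, is a weak PBE. The natural state variable is the reputation ratio $h(t):=z_C(t)/z_A(t)$. Under the candidate, $z_A(t)=z_B(t)=z_0e^{\lambda^{AG}t}$, so $T$ is pinned down by the peripherals alone: $T=-\frac{1}{\lambda^{AG}}\log z_0$. Differentiating $h$ and substituting \eqref{eq:lam_C} gives the autonomous ODE
\[
\dot h=h\Bigl(\lambda_C(h)-\lambda^{AG}\Bigr),\qquad \lambda_C(h)=\tfrac{\lambda^{AG}}{2}\bigl(-1+\sqrt{1+4h}\bigr).
\]
Note that $\lambda_C(h)<\lambda^{AG}$ if and only if $h<2$, and $h=2$ is a rest point.

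\textbf{Step 1 (shooting argument for $h(0)$).} The common-terminal-time requirement $z_A(T)=z_C(T)=1$ (the analogue of Lemma~\ref{Lemma: All WoAs end at the same time}) forces $h(T)=1$. On $(1,2)$ the flow is strictly decreasing and continuous, and the time to travel from $h_0$ down to $1$ is continuous in $h_0$. That travel time tends to $0$ as $h_0\downarrow1$ and to $\infty$ as $h_0\uparrow2$, because $2$ is a rest point and the integral $\int dh/(h(\lambda^{AG}-\lambda_C(h)))$ diverges there.
\begin{itemize}
\item This gives a unique $h(0)\in(1,2)$ whose trajectory hits $1$ exactly at $T$.
\item Hence $F_C(0)=1-z_0/z_C(0^+)=1-1/h(0)\in(0,1)$.
\item Along the path $h(t)\in(1,2)$, so $\lambda_C(t)\in(0,\lambda^{AG})$, and $C$'s posterior lies strictly above the peripherals' on $[0,T)$.
\end{itemize}

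\textbf{Step 2 (incentives).}
\begin{itemize}
\item \emph{Peripherals.} By construction \eqref{eq:quadratic} is exactly $A$'s local indifference, and symmetrically $B$'s. I would verify it via the payoff representation of Lemma~\ref{Lemma: IC hold if CDFs increasing}, using that after the $B$--$C$ agreement $A$'s belief jumps to $\hat z_C(t)<z_A(t)$. This downward jump is what makes $g_C^A(t)>0$.
\item \emph{Center.} Suppose $A$ concedes at $t$. $B$ sees only an agreement and revises $z_C^B$ down to $z_C(t)\lambda^{AG}/(\lambda^{AG}+\lambda_C(t))<z_B(t)$. In the resulting bilateral continuation, $C$ is therefore the weak party, concedes with an atom, and is indifferent. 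Her value after $A$'s concession is thus $\alpha+(1-\alpha)$, and likewise after $B$'s.
\item \emph{Center's indifference.} Her indifference then reads $2r(1-\alpha)=2\lambda^{AG}(2\alpha-1)$. This holds identically, confirming that $\lambda_A=\lambda_B=\lambda^{AG}$ keeps $C$ indifferent at every $t\in[0,T]$, including the atom at $0$.
\item \emph{Time-$0$ behavior.} At $0$, $C$'s atom concedes to both peripherals simultaneously. Once revealed, she faces a peripheral she knows is stronger, so she concedes at once, as in Lemma~\ref{Lemma: C concedes to both once she concedes to one}.
\item \emph{Off-path and boundary checks.} Off-path beliefs are set pessimistically about the deviator, so non-regular deviations (atoms at $t>0$, or $C$ conceding in one negotiation only) are unprofitable. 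Peripherals gain nothing from atoms at $0$ because they are indifferent along the whole support. After $T$, all remaining players are behavioral.
\end{itemize}

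\textbf{Step 3 (payoffs).}
\begin{itemize}
\item $C$ concedes with positive probability at $0$, so $v_C^*=2(1-\alpha)$. In the symmetric AG benchmark no atom arises, so this equals $v_C^{AG}$.
\item $A$ is active immediately after $0$. Taking $t\downarrow0$ in \eqref{eq:Ui_plan_payoff}, the only surviving atom is $F_C(0)$, which gives $v_A^*=(1-\alpha)+(2\alpha-1)F_C(0)>1-\alpha=v_A^{AG}$. The same holds for $B$ by symmetry.
\end{itemize}

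The main obstacle is Step 2's treatment of post-agreement continuations under asymmetric information. After an agreement, the involved parties know who conceded but the uninvolved peripheral does not. One must argue that the remaining negotiation is still an AG game with the uninvolved peripheral's posterior as the relevant reputation of $C$, and that $C$, privately knowing her type, plays the weak side's atom. I would handle this by noting that in that continuation only the uninvolved peripheral's belief about $C$ and $C$'s belief about him matter. Both are well defined and commonly known from the public timing, so the AG uniqueness argument applies verbatim to the pair $(z_k(t^+),z_C^k(t^+))$.
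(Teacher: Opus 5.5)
Your route is the paper's own: the ratio $h=z_C/z_A$, the quadratic \eqref{eq:quadratic} as the peripherals' indifference condition, the ODE with $h(T)=1$ (your forward shooting is equivalent to the paper's backward integration), $C$'s indifference as an identity, and payoffs via indifference and the $t\downarrow 0$ limit. The gap is in the step that you, like the paper's sketch, treat as holding ``by construction''. For this profile, neither the jump $\hat z_C(t)=z_C(t)\lambda^{AG}/(\lambda^{AG}+\lambda_C(t))$ nor the resulting indifference condition is what Bayes' rule and the payoff representation deliver.

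\textbf{Your specification.} In your profile $C$ concedes to both peripherals at once, and a one-sided concession by $C$ is listed as a deviation. A $B$--$C$ agreement at $t>0$ that does not come with a concession to $A$ can then only be $B$'s concession, which has density $\lambda^{AG}$. So Bayes' rule gives $z_C^A(t^+)=z_C(t)$, not $\hat z_C(t)$. Since $h(t)>1$, this means $g_C^A(t)=\max\{1-h(t),0\}=0$, and $A$'s flow gain from waiting is $\lambda_C(t)<\lambda^{AG}$. A rational $A$ therefore strictly prefers to concede immediately. Symmetrically, after $A$ concedes, $B$'s belief stays at $z_C>z_B$, so $C$ is the strong party in the $B$--$C$ continuation. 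Her gain from a peripheral's concession is then $(2\alpha-1)(2-1/h)>2\alpha-1$, so she strictly prefers to wait. This destroys both her indifference and the time-$0$ atom. You also cannot discipline a one-sided concession by $C$ through off-path beliefs: to the uninvolved peripheral it is observationally identical to an on-path concession by the other peripheral.

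\textbf{One-sided concessions on path.} The jump formula presupposes that $C$ concedes one-sidedly on path, but allowing this does not rescue the candidate. Let $\mu$ be $C$'s hazard of conceding to $B$ alone (and, by symmetry, to $A$ alone), and $\nu$ her hazard of conceding to both; then $z_C$ grows at rate $2\mu+\nu$. With these rates, $A$:
\begin{itemize}
\item is conceded to directly at rate $\mu+\nu$;
\item sees one-sided $B$--$C$ agreements at rate $\lambda^{AG}+\mu$, not $\lambda^{AG}$;
\item updates to $z_C\lambda^{AG}/(\lambda^{AG}+\mu)$ after such an agreement;
\item when $g_C^A>0$, is indifferent if and only if
\[
\mu+\nu+(\lambda^{AG}+\mu)\Bigl(1-\frac{h\,\lambda^{AG}}{\lambda^{AG}+\mu}\Bigr)=\lambda^{AG}
\quad\Longleftrightarrow\quad 2\mu+\nu=h\,\lambda^{AG}.
\]
\end{itemize}
So $C$'s total hazard must equal $h\lambda^{AG}$. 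This gives $\dot h=\lambda^{AG}h(h-1)$, whose only solution with $h(T)=1$ is $h\equiv 1$, i.e.\ no atom. More directly: for any split of a total hazard $\lambda_C<\lambda^{AG}$ with $h>1$, $A$'s flow gain is $\lambda_C+(1-h)\lambda^{AG}$ if $g_C^A>0$ and at most $\lambda_C$ if $g_C^A=0$. Either way it is below $\lambda^{AG}$, so the peripherals strictly prefer to concede at once.

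The quadratic \eqref{eq:quadratic} arises only if the rate $\lambda_C$ of one-sided concessions to $B$ is used in the Bayesian update, while being left out of the arrival rate of $B$--$C$ agreements and out of the growth rate of $z_C$. The verification you defer to the payoff representation of Lemma~\ref{Lemma: IC hold if CDFs increasing} is therefore exactly where the argument breaks. Your shooting argument, the identity for $C$, and the payoff formulas are fine in themselves, but they all rest on that step.
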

 
\bprf[Sketch of Proof]
\textit{Step 1: $A$'s and $B$'s indifference.}
Under \eqref{eq:belief_jump_sym}, if $B$--$C$ settles at time $t$, peripheral
$A$'s posterior about $C$ drops to
$\hat{z}_C(t)=z_C(t)\cdot\lambda^{AG}/(\lambda^{AG}+\lambda_C(t))$.
The AG atom that $C$ then makes to $A$ is $g_C^A(t)=1-\hat{z}_C(t)/z_A(t)$.
Substituting into $A$'s indifference condition yields exactly
\eqref{eq:quadratic}, which is satisfied by construction. By symmetry, $B$'s
indifference holds as well.
 
\smallskip
\textit{Step 2: Existence and uniqueness of the path.}
Define $h(t):=z_C(t)/z_A(t)$. Since $z_A(t)=z_0e^{\lambda^{AG}t}$ is known
explicitly, the evolution of $z_C$ reduces to the scalar ODE
\begin{equation}\label{eq:h_ode}
\dot{h}(t) \;=\; h(t)\,\bigl(\lambda_C(t)-\lambda^{AG}\bigr),
\end{equation}
with terminal condition $h(T)=1$, where $\lambda_C(t)=\frac{\lambda^{AG}}{2}
(-1+\sqrt{1+4h(t)})$ is an explicit locally Lipschitz function of $h$.
Picard--Lindel\"{o}f therefore gives a unique local solution near $T$.

Since $h(T)=1<2$, we have $\lambda_C(T)<\lambda^{AG}$ and hence $\dot{h}(T)<0$,
so $h$ is strictly increasing as we move backward from $T$. Moreover, $h$
cannot reach $2$ in finite backward time: as $h\nearrow 2$,
$\lambda_C\nearrow\lambda^{AG}$, so $\dot{h}\to 0$, and the solution slows to
a halt before reaching $2$. Therefore $h(t)\in(1,2)$ for all $t\in[0,T)$,
the solution extends uniquely to all of $[0,T]$, and in particular $h(0)>1$.
The atom $F_C(0)=1-1/h(0)\in(0,1)$ is strictly positive and uniquely pinned
down.
 
\smallskip
\textit{Step 3: $C$'s optimality.}
At any $t$ in the support of $C$'s concession distribution, $C$ is indifferent between conceding to both $A$ and $B$ simultaneously and receiving $2(1-\alpha)$, or conceding to only one and receiving $1-\alpha$ from that concession, and then receiving $1-\alpha$ from the other interaction.
 
\smallskip
\textit{Step 4: Payoffs.}
Since $A$ is indifferent throughout and concedes with positive density from
$t=0$, her payoff equals the value of conceding at $t=0$:
$v_A^*=F_C(0)\cdot\alpha+(1-F_C(0))\cdot(1-\alpha)=(1-\alpha)+(2\alpha-1)F_C(0)>1-\alpha$.
Player $C$'s payoff is $2(1-\alpha)$ by indifference.
\eprf

\brem Naturally, if $z_C(0) \approx z_0$ but $z_C(0) > z_0$, then also $C$ would concede with an atom in this equilibrium. Thus, $C'$s payoff would be strictly lower than the bilateral AG benchmark. Thus, partial observability of concessions amplifies C's disadvantage in this equilibrium.
\erem 

\section{The Four-Player Star}\label{sec:online-four-player}

This appendix extends the analysis to a four-player star network in which
player $C$ bargains simultaneously with three peripheral players $1$, $2$,
and $3$. All surplus shares are equal ($\pi_{iC} = 1$ for all $i$), and
all other features of the model are as in Section~2.

\subsection{Indifference conditions}

Fix the no-concession subgame and suppose all three peripherals and $C$ are
active at time $t$. Let $\lambda_i(t)$ denote the hazard rate of peripheral
$i \in \{1,2,3\}$ and $\lambda_C(t)$ that of $C$. When peripheral $j$
concedes at time $t$, the continuation game is the three-player star among
$\{i, k, C\}$ (where $\{i,k\} = \{1,2,3\} \setminus \{j\}$) with posteriors
$(z_i(t), z_k(t), z_C(t))$. We denote by $\mathcal{V}_i^{(j)}(t)$ peripheral
$i$'s equilibrium payoff in this three-player continuation and by
$\mathcal{W}_C^{(j)}(t)$ the center's equilibrium payoff from the remaining
two negotiations in that continuation. These objects are given in closed form
by Proposition~\ref{prop:full}.

When $C$ concedes, she is revealed rational and concedes in all negotiations
simultaneously, yielding $\alpha$ to each peripheral. Each peripheral $i$'s
indifference condition is:
\begin{equation}\label{eq:IC-i-four}
\lambda_C(t) + \sum_{j \neq i} \lambda_j(t) \, \hat{g}_i^{(j)}(t) = \lambda^{AG},
\end{equation}
where
\[
\hat{g}_i^{(j)}(t) := \frac{\mathcal{V}_i^{(j)}(t) - (1-\alpha)}{2\alpha - 1}
\]
measures the ``reputational bonus'' peripheral $i$ receives when rival $j$
concedes, relative to $i$'s payoff from conceding herself.

The center's indifference condition is:
\begin{equation}\label{eq:IC-C-four}
\sum_{j=1}^{3} \lambda_j(t) \, \Gamma_j(t) = 3\lambda^{AG},
\end{equation}
where
\[
\Gamma_j(t) := \frac{\alpha + \mathcal{W}_C^{(j)}(t) - 3(1-\alpha)}{2\alpha - 1}
\]
measures the center's gain when peripheral $j$ concedes (receiving $\alpha$
from $j$ plus the continuation value from the remaining two negotiations)
relative to the center's payoff from conceding on all three negotiations.

In the three-player model, the analogous continuation values are bilateral
AG payoffs, which are simple closed-form functions of posteriors.
This yields constant hazard rates in each phase. In the four-player model,
$\hat{g}_i^{(j)}(t)$ and $\Gamma_j(t)$ encode three-player equilibrium
payoffs that depend on the evolving posterior vector, so
\eqref{eq:IC-i-four}--\eqref{eq:IC-C-four} generally produce time-varying
hazard rates.

\subsection{Equilibrium structure}

Despite the added complexity, a numerical example illustrates that the equilibrium retains the sequential activation structure. For the parameterization $z_1(0) = 0.5 > z_2(0) = 0.4 > z_C(0) = 0.23
> z_3(0) = 0.2$ with $r = 1$ and $\alpha = 0.7$, we solve the system
\eqref{eq:IC-i-four}--\eqref{eq:IC-C-four} numerically via backward
integration from the terminal time $T$ (at which all posteriors reach $1$) and
obtain a three-phase equilibrium:
\begin{itemize}
    \item \textbf{Phase I} ($t \in [0, t_2]$): Only peripheral $3$ and $C$
    are active, with constant hazard rates $\lambda_3 = 3\lambda^{AG}$ and
    $\lambda_C = \lambda^{AG}$. Peripherals $1$ and $2$ are inactive. $C$
    concedes with an atom $F_C(0) \approx 0.33$ at $t = 0$.

    \item \textbf{Phase II} ($t \in [t_2, t_1]$): Peripheral $2$ activates.
    All of $\{2, 3, C\}$ concede with approximately constant hazard rates
    $\lambda_2 \approx \lambda_3 \approx 1.5\lambda^{AG}$ and $\lambda_C
    \approx \lambda^{AG}$, while peripheral $1$ remains inactive.

    \item \textbf{Phase III} ($t \in [t_1, T]$): All four players concede at
    $\lambda^{AG}$ until the common terminal time $T$.
\end{itemize}

\noindent Figure~\ref{Figure: four player comparison} illustrates the
posterior dynamics alongside the bilateral AG benchmark.

\begin{figure}[ht]
\centering
\begin{minipage}{0.45\textwidth}
\centering
\begin{tikzpicture}
  \begin{axis}[
    xlabel={$t$},
    xlabel style={at={(ticklabel* cs:1)}, anchor=west, yshift=5pt},
    ylabel={Pr[behavioral]},
    ymin=0, ymax=1.12,
    xmin=0, xmax=1.7,
    samples=200,
    xtick={0.308, 0.506, 1.431},
    xticklabels={$t_2$, $t_1$, $T$},
    ytick={0.23, 0.4, 0.5, 1},
    yticklabels={$z_C(0)$, $z_2(0)$, $z_1(0)$, $1$},
    extra y ticks={0.2},
    extra y tick labels={$z_3(0)$},
    extra y tick style={yticklabel style={yshift=-4pt}},
    major tick length=4pt,
    ytick style={thick},
    axis lines=left,
    width=\textwidth,
    height=0.9\textwidth,
  ]

  \addplot[blue, very thick, domain=0:0.308] {0.2*exp(2.25*x)};

  \addplot[red, very thick, domain=0:0.308] {0.3433*exp(0.75*x)};

  \addplot[green!50!black, very thick, domain=0:0.308] {0.4};

  \addplot[black, very thick, domain=0:0.506] {0.5};

  \addplot[green!50!black, very thick, domain=0.308:0.506]
    {0.4*exp(1.125*(x - 0.308))};

  \addplot[blue, very thick, domain=0.308:0.506]
    {0.4*exp(1.125*(x - 0.308))};

  \addplot[red, very thick, domain=0.308:0.506]
    {0.4325*exp(0.72*(x - 0.308))};

  \addplot[black, very thick, domain=0.506:1.431]
    {0.5*exp(0.75*(x - 0.506))};

  \addplot[black, very thick, domain=1.431:1.7] {1};

  \draw[dotted] (axis cs:0.308,0) -- (axis cs:0.308,0.5);
  \draw[dotted] (axis cs:0.506,0) -- (axis cs:0.506,0.5);
  \draw[dotted] (axis cs:1.431,0) -- (axis cs:1.431,1);

  \addplot[black, dotted, domain=0:0.506] {0.5};
  \addplot[black, dotted, domain=0:1.431] {1};

  \draw[->, red, thick]
    (axis cs:0,0.23)
    .. controls (axis cs:0.10,0.26) and (axis cs:0.10,0.32) ..
    (axis cs:0,0.3433);

  \end{axis}
\end{tikzpicture}
\vspace{-0.4cm}
\subcaption{Multilateral bargaining}
\end{minipage}\hfill
\begin{minipage}{0.45\textwidth}
\centering
\begin{tikzpicture}
  \begin{axis}[
    xlabel={$t$},
    xlabel style={at={(ticklabel* cs:1)}, anchor=west, yshift=5pt},
    ylabel={Pr[behavioral]},
    ymin=0, ymax=1.12,
    xmin=0, xmax=2.3,
    samples=200,
    xtick={0.924, 1.222, 1.960},
    xticklabels={$T^{1C}_{AG}$, $T^{2C}_{AG}$, $T^{3C}_{AG}$},
    ytick={0.23, 0.4, 0.5, 1},
    yticklabels={$z_C(0)$, $z_2(0)$, $z_1(0)$, $1$},
    extra y ticks={0.2},
    extra y tick labels={$z_3(0)$},
    extra y tick style={yticklabel style={yshift=-4pt}},
    major tick length=4pt,
    ytick style={thick},
    axis lines=left,
    width=\textwidth,
    height=0.9\textwidth,
  ]

  \addplot[black, very thick, domain=0:0.924] {0.5*exp(0.75*x)};
  \addplot[black, very thick, domain=0.924:2.3] {1};

  \addplot[green!50!black, very thick, domain=0:1.222] {0.4*exp(0.75*x)};
  \addplot[green!50!black, very thick, domain=1.222:2.3] {1};

  \addplot[blue, very thick, domain=0:1.960] {0.23*exp(0.75*x)};
  \addplot[blue, very thick, domain=1.960:2.3] {1};

  \draw[dotted] (axis cs:0.924,0) -- (axis cs:0.924,1);
  \draw[dotted] (axis cs:1.222,0) -- (axis cs:1.222,1);
  \draw[dotted] (axis cs:1.960,0) -- (axis cs:1.960,1);

  \addplot[black, dotted, domain=0:1.960] {1};

  \draw[->, black, thick]
    (axis cs:0,0.23)
    .. controls (axis cs:0.12,0.30) and (axis cs:0.12,0.44) ..
    (axis cs:0,0.5);

  \draw[->, green!50!black, thick]
    (axis cs:0,0.23)
    .. controls (axis cs:0.08,0.27) and (axis cs:0.08,0.36) ..
    (axis cs:0,0.4);

  \draw[->, blue, thick]
    (axis cs:0,0.2)
    .. controls (axis cs:0.06,0.21) and (axis cs:0.06,0.225) ..
    (axis cs:0,0.23);

  \end{axis}
\end{tikzpicture}
\vspace{-0.4cm}
\subcaption{Bilateral bargaining (AG benchmark)}
\end{minipage}
\caption{Four-player star: comparison of concession behavior. Multilateral
bargaining (left) vs.\ bilateral bargaining (right). Parameters: $r = 1$,
$\alpha = 0.7$, $z_1(0) = 0.5$, $z_2(0) = 0.4$, $z_C(0) = 0.23$,
$z_3(0) = 0.2$.}\label{Figure: four player comparison}
\end{figure}
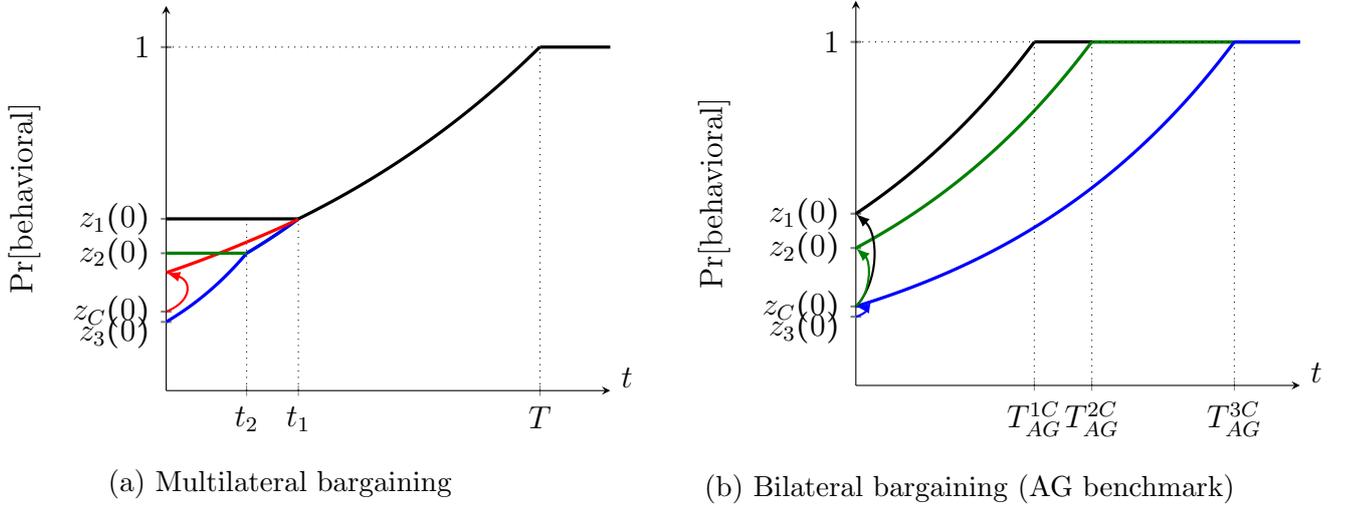

\subsection{Payoff comparison}

Table~\ref{tab:four-player-payoffs} compares each player's equilibrium payoff
under multilateral bargaining with the bilateral AG benchmark.

\begin{table}[ht]
\centering
\begin{tabular}{lcccc}
\hline
& Peripheral 1 & Peripheral 2 & Peripheral 3 & Center $C$ \\
\hline
Multilateral ($v_i^*$) & 0.397 & 0.449 & 0.432 & 0.900 \\
AG benchmark ($v_i^{AG}$) & 0.516 & 0.470 & 0.300 & 0.952 \\
Difference & $-0.119$ & $-0.021$ & $+0.132$ & $-0.052$ \\
\hline
\end{tabular}
\caption{Equilibrium payoffs: four-player multilateral vs.\ bilateral AG
benchmark.}\label{tab:four-player-payoffs}
\end{table}

The payoff pattern extends the three-player results but reveals an additional
implication. The weakest peripheral (player~3) is strictly better off under
multilateral bargaining, gaining $0.132$. The center is strictly worse off,
losing $0.052$ relative to the sum of her bilateral payoffs. But now
peripheral~2---who is the \emph{intermediate} peripheral---is also worse off,
losing $0.021$, and the strongest peripheral (player~1) suffers the largest
loss of $0.119$. This suggests a sharper conclusion than in the three-player
case: reputational spillovers benefit only the weakest peripheral, while all
other players---including moderately weak peripherals---can be made worse off.
The magnitude of the strongest peripheral's loss is substantial, reflecting
the extended initial phase during which player~1 remains inactive while $C$
and the weaker peripherals ``bargain in earnest.''
\end{document}